\documentclass[11pt]{article}


\usepackage[letterpaper, margin=1in]{geometry}

\usepackage[parfill]{parskip}

\usepackage[utf8]{inputenc}

\usepackage{varioref}
\usepackage[linkcolor=black,colorlinks=true,citecolor=black,filecolor=black]{hyperref}
\usepackage{comment}

\usepackage{mathtools}

\usepackage{amsmath, amssymb, amsfonts, mathrsfs}
\usepackage[sc]{mathpazo} 
\usepackage{amsthm}

\usepackage{subcaption}

\DeclareUnicodeCharacter{2229}{$\cap$}

\usepackage[ruled, vlined]{algorithm2e}

\usepackage{aligned-overset}

\usepackage{cleveref}
\usepackage{thmtools}
\usepackage{thm-restate}
\usepackage{nicefrac}

\theoremstyle{plain}
\numberwithin{equation}{section}
\newtheorem{theorem}{Theorem}[section]

\newtheorem{corollary}[theorem]{Corollary}
\newtheorem{definition}[theorem]{Definition}
\newtheorem{lemma}[theorem]{Lemma}

\newtheorem{observation}[theorem]{Observation}

\crefname{corollary}{Corollary}{Corollaries}
\crefname{lemma}{Lemma}{Lemmata}

\newcommand{\NP}{\ensuremath{\mathsf{NP}}\xspace}

\newcommand{\CoNP}{\ensuremath{\mathsf{CoNP}}\xspace}

\renewcommand{\P}{\ensuremath{\mathsf{P}}\xspace}
\newcommand{\CLS}{\ensuremath{\mathsf{CLS}}\xspace}
\newcommand{\PLS}{\ensuremath{\mathsf{PLS}}\xspace}
\newcommand{\EOPL}{\ensuremath{\mathsf{EOPL}}\xspace}
\newcommand{\UEOPL}{\ensuremath{\mathsf{UEOPL}}\xspace}
\newcommand{\PPAD}{\ensuremath{\mathsf{PPAD}}\xspace}

\newcommand{\FPdt}{\ensuremath{\mathsf{FP}^{\text{dt}}}\xspace}
\newcommand{\TFNPdt}{\ensuremath{\mathsf{TFNP}^{\text{dt}}}\xspace}


\newcommand{\N}{\mathbb{N}}

\newcommand{\R}{\mathbb{R}}

\newcommand{\bigO}{\mathcal{O}}
\DeclareMathOperator{\poly}{poly}

\DeclareMathOperator{\vol}{vol}


\DeclarePairedDelimiter\norm{\lVert}{\rVert}

\let\emptyset\varnothing

\renewcommand{\epsilon}{\ensuremath\varepsilon}

\renewcommand{\phi}{\ensuremath{\varphi}}


\newcommand{\conv}{\text{conv}}

\newcommand{\limitH}{\ensuremath{\mathcal{H}}}
\newcommand{\bisecH}{\ensuremath{H}}

\newcommand{\continuousProblem}[1][p]{\textsc{$\ell_{#1}$-ContractionFixpoint}\xspace}
\newcommand{\gridToGrid}[1]{\textsc{$\ell_{#1}$-GridContractionFixpoint}\xspace}
\newcommand{\totalContraction}[1]{\textsc{Total-$\ell_{#1}$-ContractionFixpoint}\xspace}
\usepackage{cleveref}
\usepackage{authblk}

\begin{document}
\author[1]{Sebastian Haslebacher}
\author[1]{Jonas Lill}
\author[1,2]{Patrick Schnider}
\author[1]{Simon Weber}
\affil[1]{Department of Computer Science, ETH Zürich, Switzerland}
\affil[2]{Department of Mathematics and Computer Science, University of Basel, Switzerland}
\title{Query-Efficient Fixpoints of $\ell_p$-Contractions}
\date{}
\maketitle

\begin{abstract}
    We prove that an $\epsilon$-approximate fixpoint of a map $f:[0,1]^d\rightarrow [0,1]^d$ can be found with $\bigO(d^2(\log\frac{1}{\epsilon} + \log\frac{1}{1-\lambda}))$ queries to $f$ if $f$ is $\lambda$-contracting with respect to an $\ell_p$-metric for some $p\in [1,\infty)\cup\{\infty\}$. This generalizes a recent result of Chen, Li, and Yannakakis~[STOC'24] from the $\ell_\infty$-case to all $\ell_p$-metrics. Previously, all query upper bounds for $p\in [1,\infty) \setminus \{2\}$ were either exponential in $d$, $\log\frac{1}{\epsilon}$, or $\log\frac{1}{1-\lambda}$.

    Chen, Li, and Yannakakis also show how to ensure that all queries to $f$ lie on a discrete grid of limited granularity in the $\ell_\infty$-case. We provide such a rounding for the $\ell_1$-case, placing an appropriately defined version of the $\ell_1$-case in \FPdt.

    To prove our results, we introduce the notion of $\ell_p$-halfspaces and generalize the classical centerpoint theorem from discrete geometry: for any $p \in [1, \infty) \cup \{\infty\}$ and any mass distribution (or point set), we prove that there exists a centerpoint $c$ such that every $\ell_p$-halfspace defined by $c$ and a normal vector contains at least a $\frac{1}{d+1}$-fraction of the mass (or points).
\end{abstract}

\newpage
\section{Introduction}

A \emph{contraction map} is a function that maps any two points in such a way that their two images lie closer to each other than the original points. Formally, a function~$f:X\rightarrow X$ on a metric space~$(X,d_X)$ is a contraction map if there exists some $\lambda<1$, such that any two points $x,y\in X$ satisfy $d_X(f(x),f(y))\leq \lambda\cdot d_X(x,y)$. Banach's fixpoint theorem~\cite{banach1922operations} famously states that a contraction map must have a unique fixpoint, i.e., a point $x \in X$ with $f(x)=x$.

In this paper, we consider the problem of finding \emph{approximate} fixpoints based on the \emph{residual error criterion}: we wish to find any point with $d_X(f(x),x)\leq \epsilon$, called an $\epsilon$-approximate fixpoint. Banach's proof actually provides a simple algorithm to find such an $\epsilon$-approximate fixpoint: start at any point $x \in X$ and iteratively apply $f$ until $d_X(f^{k+1}(x),f^{k}(x))\leq \epsilon$. This algorithm requires at most $\bigO\left( \frac{\log(\frac{1}{\epsilon})}{\log(\frac{1}{\lambda})}\right)$ queries to $f$ on metric spaces of constant diameter.

Unfortunately, Banach's iterative algorithm is quite slow when $\lambda$ is close to $1$, which is often the regime that is useful in practice~\cite{sikorskiComputationalComplexityFixed2009}. For example, Simple Stochastic Games (SSG) reduce to approximating the fixpoint of a contraction map on the metric space induced on the unit cube $[0,1]^d$ by the $\ell_\infty$-norm, with $1 - \lambda$ and $\epsilon$ exponentially small~\cite{condonComplexityStochasticGames1992}. Similarly, the ARRIVAL problem reduces to approximating the fixpoint of a contraction map on the unit cube $[0,1]^d$, again with $1 - \lambda$ and $\epsilon$ exponentially small, but here the metric is induced by the $\ell_1$-norm instead~\cite{haslebacherARRIVALRecursiveFramework2025}. Note that no polynomial-time algorithms are known for SSG and ARRIVAL, despite both of them being contained in $\NP \cap \CoNP$~\cite{condonComplexityStochasticGames1992,dohrauARRIVALZeroPlayerGraph2017}. If we could approximate the fixpoint of a $\lambda$-contraction map on $[0,1]^d$ with respect to an $\ell_p$-norm in time polynomial in $d$, $\log(\frac{1}{\epsilon})$, and $\log(\frac{1}{1-\lambda})$, then this would put both ARRIVAL and SSG in \P.

In the Euclidean case ($p=2$), such efficient algorithms have been known for quite some time~\cite{huangApproximatingFixedPoints1999, sikorskiEllipsoidAlgorithmComputation1993, sikorskiComputationalComplexityFixed2009}. Concretely, the Inscribed Ellipsoid algorithm~\cite{huangApproximatingFixedPoints1999} requires $\bigO(d\log(\frac{1}{\epsilon}))$ queries (independent of $\lambda$) to find an $\epsilon$-approximate fixpoint. Moreover, each query point can be found efficiently, implying an overall time-efficient (and not just query-efficient) algorithm.

In contrast, only little is known for $p \neq 2$. The problem is known to be contained in \CLS~\cite{daskalakisContinuousLocalSearch2011} and was explicitely mentioned as an open problem by Fearnley, Goldberg, Hollender, and Savani~\cite{fearnleyComplexityGradientDescent2022} when they proved $\CLS=\PPAD\cap\EOPL$. For 
$p\in\N\cup\{\infty\}$, Fearnley, Gordon, Mehta, and Savani proved containment in \UEOPL\ by reduction to One-Permutation Discrete Contraction (OPDC)~\cite{fearnleyUniqueEndPotential2020}.  Moreover, their algorithm for OPDC implies an algorithm for finding an $\epsilon$-approximate fixpoint in $\bigO(\log^d(\frac{1}{\epsilon}))$ queries. Unfortunately, this is still exponential in $d$. 

Chen, Li, and Yannakakis~\cite{chenComputingFixedPoint2024} recently (STOC'24) gave the first query-efficient algorithm in the special case of the $\ell_\infty$-norm. Concretely, their algorithm can find an $\epsilon$-approximate fixed point with $\bigO(d^2\log\frac{1}{\epsilon})$ (independent of $\lambda$) queries to $f$. One of the main questions that they ask is whether similar query upper bounds can be shown for $p \neq \{2, \infty\}$ as well. We answer this question affirmatively by achieving a query upper bound of $\bigO(d^2(\log\frac{1}{\epsilon}+\log\frac{1}{1-\lambda}))$ for all $p \in [1, \infty) \cup \{\infty\}$ (note that an observation of Chen, Li, and Yannakakis~\cite{chenComputingFixedPoint2024} is that one can always get rid of the factor $\log\frac{1}{1-\lambda}$ for $p = \infty$).

Neither our algorithm nor the algorithm of Chen, Li, and Yannakakis~\cite{chenComputingFixedPoint2024} is time-efficient, i.e., in both cases it is unclear how to algorithmically find the query points efficiently. However, Chen, Li, and Yannakakis~\cite{chenComputingFixedPoint2024} ensure that all queries made by their algorithm lie on a discrete grid of limited granularity, which is certainly a prerequisite for any time-efficient algorithms. We provide a similar rounding of queries to a grid in the $\ell_1$-case. 

Note that since our algorithm is only query-efficient, it does not imply polynomial-time algorithms for SSG, ARRIVAL, or other applications. However, as pointed out by Chen, Li, and Yannakakis~\cite{chenComputingFixedPoint2024}, the existence of query-efficient algorithms could be viewed as evidence for the existence of time-efficient algorithms for $\ell_p$-contraction (at least for $p \in \{1, \infty\}$).

\subsection{Results}

For any $\ell_p$-norm with $p \in [1, \infty) \cup \{\infty\}$, we present an algorithm that can find an $\epsilon$-approximate fixpoint of a $\lambda$-contracting function $f:[0,1]^d\rightarrow [0,1]^d$ by making at most $\bigO(d^2(\log\frac{1}{\epsilon}+\log\frac{1}{1-\lambda}))$ queries to $f$. 
In the $\ell_1$-case, we show how to ensure that all queries are made on points of a given discrete grid, as long as the distance between two neighboring grid points is at most $ \frac{2d}{\epsilon} \frac{1 + \lambda}{1 - \lambda} $. A similar rounding was provided by Chen, Li, and Yannakakis~\cite{chenComputingFixedPoint2024} in the $\ell_\infty$-case. If the function $f$ produces only $\poly(k)$ bits of output when given $k$ bits of input (as it is the case with many applications of the $\ell_1$-case and the $\ell_\infty$-case), our algorithm only needs to input and output polynomially many bits of information into the function $f$.

As a main technical tool, we consider a generalization of the notion of halfspaces from Euclidean geometry into $\ell_p$-geometry. In particular, we prove a generalization of the classical \emph{centerpoint theorem} from discrete and computational geometry that works for any $p \in [1, \infty) \cup \{\infty\}$. We think that this might be of independent interest. Our generalized centerpoint theorem works for both mass distributions and point sets and is shown to be tight in both cases for $p \in (1, \infty)$.

Our results also imply that a properly defined total search problem version of the $\ell_1$-case lies in \FPdt, the class of black-box total search problems that can be solved efficiently by decision trees~\cite{goosSeparationsProofComplexity2024}. Again, the analogous observation for the $\ell_\infty$-case was made by Chen, Li, and Yannakakis~\cite{chenComputingFixedPoint2024} already. In a total search problem, the algorithm has to either return a solution (an $\epsilon$-approximate fixpoint) or a proof that the given instance does not fulfill the promise (i.e., that the given function on the grid does not extend to a contraction map on all of $[0,1]^d$). In the $\ell_\infty$-case, it suffices to consider pairs of grid points violating the contraction property as proof for the violation of the promise~\cite{chenComputingFixedPoint2024}. For the $\ell_1$-case, this does not seem to suffice, which means that we need to consider a slightly more complicated proof of violation to properly define a total search version of the problem.

\subsection{Proof Techniques}

Our algorithm is based on a simple observation that has also been used in previous work~\cite{chenComputingFixedPoint2024,sikorskiEllipsoidAlgorithmComputation1993}: due to the contraction property, given any query point $x$, the fixpoint $x^\star$ must be closer to the query's response $f(x)$ than to the query $x$ itself. To quickly hone in on the fixpoint, we therefore wish to query a point $x$ such that for every possible response $f(x)$ of $f$, a significantly large part of the remaining search space lies at least as close to $x$ as to $f(x)$. Querying $x$ then allows us to considerably shrink the remaining search space no matter the response $f(x)$.

In the Euclidean case ($\ell_2$-metric), a \emph{centerpoint} of the remaining search space makes for a good query: a point $c$ is a $\rho$-centerpoint of a mass distribution $\mu$ if any halfspace that contains $c$ also contains at least a $\rho$-fraction of the mass of $\mu$. The celebrated centerpoint theorem originally due to Rado~\cite{radoTheoremGeneralMeasure1946} guarantees that a $\frac{1}{d+1}$-centerpoint (commonly just called a centerpoint) always exists. Since the set of points that are at least as close to $c$ as to $f(c)$ is a halfspace containing $c$, querying a centerpoint $c$ of the remaining search space allows us to discard at least a $\frac{1}{d+1}$-fraction of the search space.

We generalize these concepts from Euclidean geometry to arbitrary $\ell_p$-metrics. We first define \emph{$\ell_p$-halfspaces} as a generalization of halfspaces. Informally speaking, an $\ell_p$-halfspace $\limitH^p_{x, v}$ is defined by a point $x \in \R^d$ and a direction $v \in S^{d -1}$ and contains all points that are at least as close (with respect to the $\ell_p$-metric) to $x$ as to $x - \epsilon v$ for all $\epsilon > 0$ (see \Cref{fig:limithyperplaneExample}). We then prove a generalized centerpoint theorem for $\ell_p$-metrics, which says that for every mass distribution $\mu$ on $\R^d$, there exists a point $c$ satisfying $\mu(\limitH^p_{c, v}) \geq \frac{1}{d + 1} \mu(\R^d)$ for all $v \in S^{d - 1}$. This theorem implies existence of good query points for our algorithm, allowing us to discard a $\frac{1}{d+1}$-fraction of the remaining search space with each query. 

\begin{figure}[ht]
    \centering
    \includegraphics[width=0.6\linewidth]{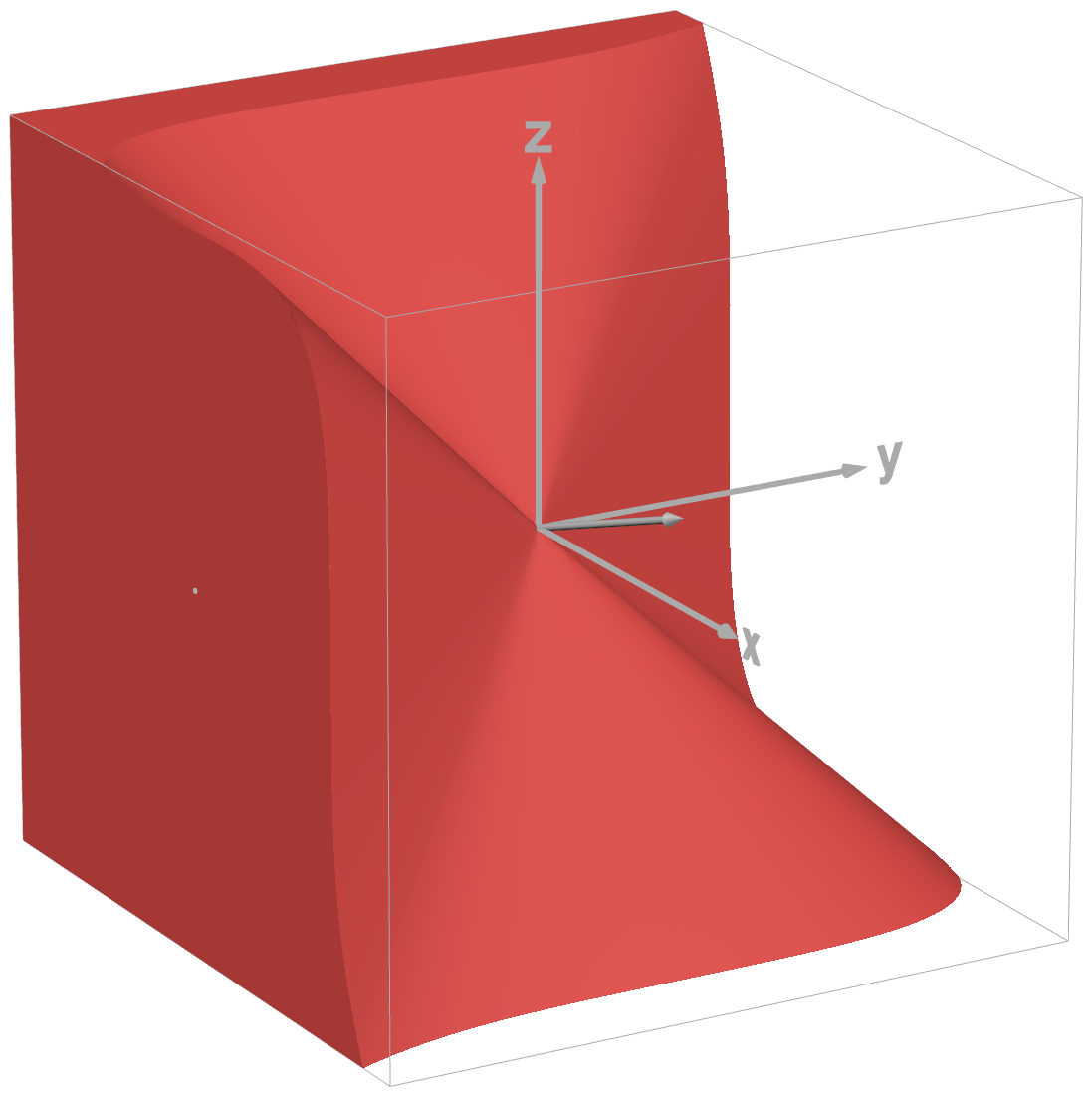}
    \caption{The $\ell_5$-halfspace $\limitH^{5}_{\mathbf{0},v}$ for $v=(-2,-0.5,-0.6)$ is drawn in red. The vector shown in the image is $-v$. Image created with the Desmos 3D calculator.}
    \label{fig:limithyperplaneExample}
\end{figure}

To prove our generalized centerpoint theorem, we use Brouwer's fixpoint theorem~\cite{brouwerUeberAbbildungMannigfaltigkeiten1911}. Concretely, given a mass distribution $\mu$ on $\R^d$, we consider a function $F : \R^d \rightarrow \R^d$ which intuitively maps any point $x$ by pushing it in all directions $v \in S^{d - 1}$ for which $\mu(\limitH^p_{x, -v})$ is not yet large enough. Formally, we define it using the following integral
\[
    F_i(x) \coloneqq x_i + \int_{S^{d - 1}} v_i \max \left( \frac{1}{d + 1} - \mu(\limitH^p_{x, -v}), 0 \right) dv
\]
for all $i \in [d]$. We show that $F$ is continuous, which means that if we restrict it to a compact domain, Brouwer's fixpoint theorem implies existence of a fixpoint. We then prove that this fixpoint must be a centerpoint. 

After proving that we can reduce our remaining search space by a $\frac{1}{d+1}$-fraction with every query, we still need a termination condition for our algorithm. To this end, we show that as long as a queried point $x$ is not an $\epsilon$-approximate fixpoint, any point sufficiently close to the fixpoint $x^\star$ cannot lie closer to $x$ than to $f(x)$. Thus, there exists a small ball around $x^\star$ that cannot be discarded as long as we have not queried an $\epsilon$-approximate fixpoint yet. This in turn implies a lower bound $V_{\epsilon, \lambda}$ for the volume of the remaining search space that holds as long as the algorithm does not terminate. From this, we can then conclude the upper bound 
\[
    \log_{\frac{d}{d+1}} \left(\frac{1}{V_{\epsilon, \lambda}} \right) = \bigO \left( d^2 \left(\log \frac{1}{\epsilon} + \log \frac{1}{1 - \lambda} + \log d \right) \right)
\]
on the number of queries needed to find an $\epsilon$-approximate fixpoint. To get rid of the $\log d$ term, we additionally observe that a simple iterative algorithm is faster if $d$ dominates $\frac{1}{\epsilon}$ and $\frac{1}{1 - \lambda}$.

As mentioned before, Chen, Li, and Yannakakis~\cite{chenComputingFixedPoint2024} provide a rounding scheme in the $\ell_\infty$-case that allows them to ensure that all queries lie on a discrete grid of limited granularity. Unfortunately, their rounding scheme is specific to the $\ell_\infty$-case. For the $\ell_1$-case, which is also motivated by applications, we therefore provide our own rounding scheme. Concretely, we show that for $p = 1$, each centerpoint query $c$ of the original algorithm can be replaced with a query to the closest grid point $c'$ to $c$. To make this work, we need to use a discrete variant of our centerpoint theorem to measure the size of the remaining search space in terms of remaining grid points.

\subsection{Discussion}

\paragraph{Comparison to Results for $p=2$.}
The intersection of Euclidean halfspaces is a polyhedron, and Grünbaum's theorem says that the centroid of a polyhedron is a $\frac{1}{e}$-centerpoint of the polyhedron~\cite{grunbaumPartitionsMassdistributionsConvex1960}. Thus, there always exists a $\frac{1}{e}$-centerpoint of the remaining search space in the $\ell_2$-case, rather than just a $\frac{1}{d+1}$-centerpoint. Thanks to this, specialized algorithms such as the Inscribed Ellipsoid and the Centroid algorithm~\cite{sikorskiEllipsoidAlgorithmComputation1993,sikorskiComputationalComplexityFixed2009} achieve a query upper bound that depends only linearly on $d$ (compared to our quadratic dependency). An interesting question would be whether something analogous to Grünbaum's theorem exists for the intersection of $\ell_p$-halfspaces in general. Such a bound could then be plugged into our analysis to improve our upper bounds from $\bigO(d^2)$ to $\bigO(d\log d)$ in terms of the dimension. Note that the $\log d$ overhead (compared to aforementioned results with $\bigO(d)$ dependency for the $\ell_2$-case) would come from using a lower bound of $\vol(B^p(0,b))\geq \frac{2^d}{d!}b^d$ for the volume of a general $\ell_p$-ball in our analysis (see proof of \Cref{thm:mainContinuous} for more details).

As we have discussed earlier, the Inscribed Ellipsoid algorithm is not just query-efficient, but can actually be implemented in polynomial time. However, this is not straightforward. Even in the Euclidean case, computing the centroid of a polyhedron (given by its bounding hyperplanes) is $\mathsf{\#P}$-hard~\cite{rademacherApproximatingCentroidHard2007}. 
One thus has to rely on approximate centerpoints that can be computed efficiently by taking the centroid of an ellipsoid that approximates the polyhedron (hence the name Inscribed Ellipsoid algorithm).
In other words, using ellipsoids to approximate the remaining search space is the main tool to get from query-efficient algorithms to actual polynomial-time algorithms in the Euclidean case. 

For more information regarding the $\ell_2$-case, we refer to the survey by Sikorksi~\cite{sikorskiComputationalComplexityFixed2009}.

\paragraph{Comparison to Chen, Li, and Yannakakis~\cite{chenComputingFixedPoint2024} on $p=\infty$.}
Our approach is similar (but much more general) to the one of Chen, Li, and Yannakakis~\cite{chenComputingFixedPoint2024} for the $\ell_\infty$-case. They prove that there always exists a query point $x$, such that any response $f(x)$ will allow them to discard at least a $\frac{1}{2d}$-fraction of the remaining search space. Concretely, the area that is discarded is one out of $2d$ \emph{pyramids} ($d$ pairs of antipodal pyramids) that touch at $x$. To prove existence of a good query $x$, they use Brouwer's fixpoint theorem on a function that balances the mass among the two pyramids in each of the $d$ pairs of antipodal pyramids simultaneously. In other words, their good query point $x$ is essentially a $\frac{1}{2d}$-centerpoint, although they do not use this terminology.
Their decomposition into pyramids is tailored to the $\ell_\infty$-case and does not generalize. 

Our algorithm for arbitrary $p$ uses $\bigO(d^2(\log\frac{1}{\epsilon}+\log\frac{1}{1-\lambda}))$ queries, essentially matching the bound in~\cite{chenComputingFixedPoint2024}. Note that their upper bound does not include $\lambda$ because they show how to get rid of the dependency on $\lambda$ by applying a reduction that turns an instance with parameters $\epsilon$ and $\lambda$ into an instance with $\epsilon'=\epsilon/2$ and $\lambda'=1-\epsilon'$. This is again specific to the case $p=\infty$ and unlikely to work in general. 

In order to prove the result for general $p$, we avoid a specific decomposition such as the pyramid decomposition in the $\ell_\infty$-case. We think that the generality of our approach also makes our arguments less technical than the ones used in~\cite{chenComputingFixedPoint2024}. 

Chen, Li, and Yannakakis~\cite{chenComputingFixedPoint2024} show that their algorithm also places the following total version of the problem in \FPdt: given query access to the function $f$ on a grid, either find an $\epsilon$-approximate fixpoint, or two grid points violating the contraction property. Totality of this problem crucially relies on the fact that any $\ell_\infty$-contracting function defined only on the grid can be extended to an $\ell_\infty$-contracting map on the whole cube. Unfortunately, their extension construction, which is an implicit application of the extension theorem of McShane~\cite{mcshaneExtensionRangeFunctions1934}, does not generalize to other values of $p$. In fact, we do not expect a similar extension theorem to hold for general $p$. In order to define a total version of the problem in the $\ell_1$-case, we therefore need to include a different certificate for violation of the contraction property. This certificate uses not just two, but polynomially many grid points, and exists whenever our algorithm would fail to find an $\epsilon$-approximate fixpoint (on a non-contracting map).

\paragraph{\CLS-Completeness of General Banach.}
Daskalakis, Tzamos, and Zampetakis~\cite{daskalakisConverseBanachFixed2018} showed that the problem of computing the fixpoint of a contraction map on the metric space $([0,1]^3, d_{[0,1]^3})$ is \CLS-complete if the metric $d_{[0,1]^3}$ is an arbitrary metric given as an algebraic circuit. This stands in stark contrast to our polynomial-query result for all $\ell_p$-metrics,
and highlights that the $\ell_p$-metrics are quite special metrics. For example, as metrics induced by a norm, the $\ell_p$-metrics are translation-invariant. In future work, it would be interesting to investigate whether super-polynomial query lower bounds can be proven for some fixed metric, or whether our results can be generalized to more general classes of metrics, for example all metrics induced by some norm.

\paragraph{Finding Centerpoints.}
An important question left unanswered by our work is whether our algorithms can be implemented not only using polynomially many queries, but also in overall polynomial time. Such a result would have important implications. For example, it would imply the existence of a polynomial-time algorithm for SSGs~\cite{condonComplexityStochasticGames1992} in the $\ell_\infty$-case and a polynomial-time algorithm for ARRIVAL~\cite{haslebacherARRIVALRecursiveFramework2025} in the $\ell_1$-case. 
To be able to implement our algorithm in polynomial time, we would have to be able to find a centerpoint of an intersection of $\ell_p$-halfspaces (the remaining search space) efficiently. However, note that we would not necessarily need a $\frac{1}{d+1}$-centerpoint; a $\frac{1}{\poly(d)}$-centerpoint would suffice.

In Euclidean geometry, such approximate centerpoints can be computed efficiently with a variety of approaches~\cite{cherapanamjeri2024approximatecenterpoints,millerApproximateCenterPoints2009}. One approach for a randomized algorithm that might potentially generalize to our setting is the one based on computing the centerpoint of a representative subset of a point set: given a point set $X\subseteq [0,1]^d$, the set of all halfspaces induces a set system (or \emph{range space}) on $X$. This range space has a Vapnik–Chervonenkis dimension~\cite{vapnik1971vcdimension} of at most $d+1$, as implied by Radon's lemma~\cite{radon1921lemma}. For range spaces with bounded VC-dimension, a random sample of a constant number of points is a so-called \emph{$\epsilon$-approximation}~\cite{mustafa2017handbook} with high probability~\cite{li2000learning,mustafa2017handbook}. Furthermore, the centerpoint of such an $\epsilon$-approximation of the range space of halfspaces is an approximate centerpoint of $X$ itself~\cite{matousek1991centerpoint}. More details about this series of results can be found in Chapter 47 of the \emph{Handbook of Discrete and Computational Geometry}~\cite{mustafa2017handbook}. To generalize this approach to our setting, one would need to bound the VC-dimension of the range space induced by $\ell_p$-halfspaces, and to find an efficient way of sampling a random point from the intersection of $\ell_p$-halfspaces. The first step towards such a sampling procedure may be to study the analogue of Linear Programming feasibility for $\ell_p$-halfspaces: how quickly can we decide whether an intersection of $\ell_p$-halfspaces is empty or not?

\paragraph{Proof of Centerpoint Theorem.}
Our proof of the generalized centerpoint theorem makes use of Brouwer's fixpoint theorem. The classical centerpoint theorem is usually proven using Helly's theorem, which can in turn be proven using Brouwer's fixpoint theorem~\cite{deloera2019mentionofbrouwer}. A proof of the classical centerpoint theorem using Brouwer's fixpoint theorem \emph{directly} in the style of our proof seems to be somewhat folklore, as this strategy is mentioned in Chapter 27 of the \emph{Handbook of Discrete and Computational Geometry}~\cite{zivaljevic2017handbookchapter}, but we were unable to find any reference containing the full proof. 

\subsection{Further Related Work}
\paragraph{Generalized Centerpoint Theorems.} 
The classical centerpoint theorem has been generalized in many different directions. For example, it has been generalized from single centerpoints (generalizing the concept of a median) to multiple centerpoints (generalizing the concept of quantiles)~\cite{pilz2018multiplecenterpoints,ray2007generalizedcenterpoint}. It has also been generalized from centerpoints to centerdisks~\cite{basit2010centerdisks}. The centerpoint theorem has further been generalized to projective spaces~\cite{karasev2014projectivecenterpoints}, which is the only generalization to non-Euclidean geometries that we are aware of.

\paragraph{Fixpoint Theorems.}
Many other fixpoint theorems have been studied from the lens of computational complexity. Finding a fixpoint of a continuous function from a hypercube to itself, as guaranteed by Brouwer's fixpoint theorem, is famously \PPAD-complete~\cite{papadimitriouComplexityParityArgument1994}. Finding a fixpoint of a monotone function from a lattice to itself, as guaranteed by Tarski's fixpoint theorem, is known to lie in \EOPL~\cite{etessamiTarskiTheoremSupermodular2020,goosFurtherCollapsesTFNP2024}, but no hardness result is known. Chang and Lyuu~\cite{chang2010lowerbounds} gave lower bounds on the query complexity of finding fixpoints on \emph{finite} metric spaces, as guaranteed by Banach's and Caristi's fixpoint theorems. The computational problem associated with Caristi's fixpoint theorem on the metric space $([0,1]^3,\ell_\infty)$ has been proven \PLS-complete~\cite{ishizuka2022tfnpFixpoints}. The computational version of Brøndsted's fixpoint theorem on the same metric space is known to lie in \PPAD and to be \CLS-hard~\cite{ishizuka2022tfnpFixpoints}, with the exact complexity still open.

\subsection{Overview and Organization}

We start our exposition in \Cref{sec:preliminaries} by formally defining the problem \continuousProblem\ that we are interested in. We also recall important notions and results from the literature that are needed to understand the rest of the paper. 

We introduce the notion of $\ell_p$-halfspaces in \Cref{ssec:halfspaces} and discuss some of their properties in \Cref{ssec:properties}. Proofs of those properties 
are deferred to \Cref{sec:appendixProofsStructure}. \Cref{ssec:centerpoint_mass_distributions} contains the proof of our generalized centerpoint theorem for mass distributions. We also provide a discrete variant for point sets instead of mass distributions, the proof of which can be found in \Cref{sec:appendixProofCenterpoint}. Finally, we use \Cref{ssec:tightness} to briefly argue that both versions of our generalized centerpoint theorem are tight.

We use \Cref{sec:algorithms} to describe our algorithms. In  \Cref{ssec:continuous_algorithms}, we prove our query upper bound for \continuousProblem\ for all $p \in [1, \infty) \cup \{ \infty\}$. In \Cref{ssec:discrete_algorithms}, we then show how we can ensure that all queries lie on a discrete grid in the $\ell_1$-case, solving what we call the problem \gridToGrid{1}. Finally, we discuss membership in \FPdt of a total search version  \gridToGrid{1} in \Cref{ssec:total_search_problem}. 

\section{Preliminaries}\label{sec:preliminaries}

\begin{definition}[Contraction Map]
    Given a metric space $(X,d_X)$ and a contraction factor $0\leq \lambda < 1$, a function $f : X \rightarrow X$ is called a \emph{$\lambda$-contraction map} (or $\lambda$-contracting) if $d_X(f(x),f(y))\leq \lambda\cdot d_X(x,y)$ holds for all $x,y\in X$. A function is called a \emph{contraction map} if it is $\lambda$-contracting for some $0\leq\lambda<1$.
\end{definition}

\begin{theorem}[Banach Fixpoint Theorem~\cite{banach1922operations}]
    Every contraction map $f : X \rightarrow X$ on a non-empty complete metric space $(X,d_X)$ admits a unique fixpoint $x^\star \in X$, i.e., a unique point satisfying $f(x^\star)=x^\star$.
\end{theorem}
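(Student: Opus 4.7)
The plan is to follow the classical Picard iteration argument. I would start from any point $x_0 \in X$ (which exists since $X$ is non-empty) and define the iterative sequence $x_{n+1} \coloneqq f(x_n)$. The main work is then to show that $(x_n)_{n \in \N}$ is a Cauchy sequence, invoke completeness to obtain a limit $x^\star \in X$, and verify that this limit is indeed a fixpoint. Uniqueness comes essentially for free from the contraction property.

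First, I would establish the bound $d_X(x_{n+1}, x_n) \leq \lambda^n \cdot d_X(x_1, x_0)$ by a straightforward induction using the $\lambda$-contracting property. Then, for any $m > n$, the triangle inequality together with a geometric series estimate gives
\[
    d_X(x_m, x_n) \;\leq\; \sum_{k=n}^{m-1} d_X(x_{k+1}, x_k) \;\leq\; \sum_{k=n}^{m-1} \lambda^k \cdot d_X(x_1, x_0) \;\leq\; \frac{\lambda^n}{1 - \lambda} \cdot d_X(x_1, x_0).
\]
Since $\lambda < 1$, the right-hand side tends to $0$ as $n \to \infty$, so $(x_n)$ is Cauchy. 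By completeness of $(X, d_X)$, it converges to some limit $x^\star \in X$.

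Next, I would observe that any $\lambda$-contraction is Lipschitz and hence continuous, so $f(x^\star) = f(\lim_{n \to \infty} x_n) = \lim_{n \to \infty} f(x_n) = \lim_{n \to \infty} x_{n+1} = x^\star$, proving that $x^\star$ is a fixpoint. For uniqueness, suppose both $x^\star$ and $y^\star$ are fixpoints. Then
\[
    d_X(x^\star, y^\star) \;=\; d_X(f(x^\star), f(y^\star)) \;\leq\; \lambda \cdot d_X(x^\star, y^\star),
\]
which forces $(1 - \lambda) \cdot d_X(x^\star, y^\star) \leq 0$, and since $\lambda < 1$ this yields $d_X(x^\star, y^\star) = 0$, i.e., $x^\star = y^\star$.

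There is no real obstacle here; the proof is entirely standard. The only step requiring mild care is the Cauchy estimate, where one must be careful to sum the geometric series starting from index $n$ rather than $0$ to obtain a bound that vanishes. Everything else follows from the definitions, continuity of $f$, and the assumption $\lambda < 1$.
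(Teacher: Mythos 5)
Your proof is correct and is the standard Picard iteration argument. The paper does not actually prove Banach's fixpoint theorem --- it is stated as a classical result and cited to Banach's 1922 paper --- but the iterative scheme you use is exactly the one the introduction alludes to when it describes Banach's iterative algorithm for $\epsilon$-approximate fixpoints, so your argument is consistent with what the paper expects the reader to know.
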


In this paper, we will consider metric spaces of the form $([0,1]^d,\ell_p)$ for some $p \in [1, \infty) \cup \{\infty\}$, where $\ell_p$ denotes the metric induced by the $\ell_p$-norm. 

\begin{definition}[$\ell_p$-Norm]
    The $\ell_p$-norm of $x\in \R^d$ is defined as $||x||_p:=\left(\sum_{i=1}^d |x_i|^p\right)^{1/p}$ for $p \in [1, \infty)$ and $||x||_p:=\max_{i}|x_i|$ for $p = \infty$.
\end{definition}

Concretely, the distance between two points $x, y \in ([0,1]^d,\ell_p)$ is given by $||y-x||_p$. We also use $B^p(x,r) \coloneqq \{y\in\R^d \;\vert\; ||y-x||_p\leq r\}$ to denote the \emph{$\ell_p$-ball} of radius $r$ around $x$.

Finding the exact fixpoint of a contraction map is often infeasible, which is why one usually considers the problem of finding $\epsilon$-approximate fixpoints instead.

\begin{definition}[$\epsilon$-Approximate Fixpoint]
    Given a contraction map $f:X\rightarrow X$ and an $\epsilon>0$, a point $x \in X$ is called an \emph{$\epsilon$-approximate fixpoint} if $d_X(x,f(x)) \leq \epsilon$.
\end{definition}

In the literature, this condition is often called the \emph{residual error criterion}. Note that the contraction property ensures that any $\epsilon$-approximate fixpoint has a distance of at most $\frac{\epsilon}{1-\lambda}$ to the unique fixpoint $x^\star$, thus also bounding the so-called \emph{absolute error} $d_X(x, x^\star)$.

\begin{definition}
    The \continuousProblem problem is to find an $\epsilon$-approximate fixpoint of a $\lambda$-contraction map $f$ on $([0,1]^d,\ell_p)$, given parameters $d \in \N,\epsilon > 0,\lambda \in [0, 1)$ and $p \in [1, \infty) \cup \{ \infty \}$ as well as black-box query access to $f$. The goal is to minimize the number of queries made to $f$.
\end{definition}

We consider an algorithm for \continuousProblem to be \emph{query-efficient} if the number of queries made to $f$ is polynomial in $d$, $\log(\frac{1}{\epsilon})$, and $\log(\frac{1}{1-\lambda})$, and independent of $p$. 

In \continuousProblem, we are allowed to query $f$ at any point $x \in [0, 1]^d$, even at points with irrational coordinates. This might be unsuitable in some applications, and it seems reasonable to also consider a discretized version of this problem, where we are only allowed to make queries to points on a discrete grid. Although we define this discretized version for all $p \in [1, \infty) \cup \{ \infty \}$, we will only really study it for the cases $p \in \{1, \infty\}$ (motivated by applications).

\begin{definition}
    Given an integer $b\geq 1$, the grid $G^d_b \subseteq [0, 1]^d$ is the set of points $x \in [0,1]^d$ with rational coordinates $x_1, \dots, x_d$ of the form $x_i = \frac{k_i}{2^b}$ for integers $k_1, \dots, k_d \in \{0, 1, \dots, 2^b\}$.
\end{definition}

\begin{definition}
    A function $f :G^d_b\rightarrow [0,1]^d$ is called a $\lambda$-contraction grid-map if there exists a $\lambda$-contraction map $f':[0,1]^d \rightarrow [0,1]^d$ such that $f(x)=f'(x)$ for all $x\in G^d_b$ (both $f'$ and $f$ are contracting with respect to the same fixed $\ell_p$-metric). 
\end{definition}
\begin{definition}
    The \gridToGrid{p} problem is to find an $\epsilon$-approximate fixpoint $x \in G^d_b$ of a $\lambda$-contraction grid-map $f:G^d_b \rightarrow [0,1]^d$.
\end{definition}

A priori, \gridToGrid{p} is not guaranteed to have a solution and might therefore not be well-defined. However, it is not hard to see that an $\epsilon$-approximate fixpoint on the grid must exist if the input grid is fine enough.

\begin{lemma}
\label{lemma:lower_bound_b}
    For $b\geq \log_2(\frac{d+d\lambda}{2\epsilon})$, any $\lambda$-contraction grid-map $f:G^d_b\rightarrow [0,1]^d$ admits an $\epsilon$-approximate fixpoint.
\end{lemma}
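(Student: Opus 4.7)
The plan is to take the closest grid point to the unique fixpoint of the extension and show that it is an $\epsilon$-approximate fixpoint. By definition of a $\lambda$-contraction grid-map, $f$ agrees on $G^d_b$ with some $\lambda$-contraction map $f' : [0,1]^d \to [0,1]^d$. Since $([0,1]^d, \ell_p)$ is a non-empty complete metric space, Banach's fixpoint theorem yields a unique fixpoint $x^\star \in [0,1]^d$ with $f'(x^\star) = x^\star$.

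Next, I would pick $x \in G^d_b$ to be (a) closest grid point to $x^\star$, which is obtained by rounding each coordinate of $x^\star$ to the nearest multiple of $2^{-b}$. Each coordinate then differs from the corresponding coordinate of $x^\star$ by at most $2^{-(b+1)}$, so that $\|x - x^\star\|_p \leq d^{1/p} \cdot 2^{-(b+1)} \leq d \cdot 2^{-(b+1)}$ for every $p \in [1, \infty) \cup \{\infty\}$ (using that $d^{1/p} \leq d$, and even $d^{1/p} \leq 1$ in the case $p = \infty$).

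Then I would combine the triangle inequality, the fact that $f(x) = f'(x)$ on the grid, and the contraction property of $f'$ to estimate
\[
\|f(x) - x\|_p = \|f'(x) - x\|_p \leq \|f'(x) - f'(x^\star)\|_p + \|x^\star - x\|_p \leq (1 + \lambda)\,\|x - x^\star\|_p.
\]
Plugging in the bound from the previous step yields $\|f(x) - x\|_p \leq (1+\lambda) \cdot d \cdot 2^{-(b+1)} = \frac{d(1+\lambda)}{2^{b+1}}$, and the hypothesis $b \geq \log_2\!\bigl(\frac{d + d\lambda}{2\epsilon}\bigr)$ rearranges to exactly $\frac{d(1+\lambda)}{2^{b+1}} \leq \epsilon$, so $x$ is an $\epsilon$-approximate fixpoint as required.

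There is no real obstacle here: the only slightly delicate point is making sure the argument is uniform in $p$, which is handled by the crude but sufficient bound $d^{1/p} \leq d$ used when passing from $\ell_\infty$-rounding error to $\ell_p$-rounding error. For $p$ close to $\infty$ this bound is very loose, but that just means the stated threshold on $b$ is stronger than needed in those regimes, which is fine.
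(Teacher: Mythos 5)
Your proof is correct and follows essentially the same approach as the paper's: both identify the unique fixpoint $x^\star$ of an extension, use the triangle inequality together with the contraction property to show any point within $\ell_p$-distance $\frac{\epsilon}{1+\lambda}$ of $x^\star$ is an $\epsilon$-approximate fixpoint, and then use coordinatewise rounding with the bound $\|\cdot\|_p \leq d\|\cdot\|_\infty$ to exhibit such a grid point. The paper phrases the rounding step existentially via the ball containment $B^\infty(x^\star, r/d) \subseteq B^p(x^\star, r)$, whereas you construct the grid point explicitly, but these are the same argument.
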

\begin{proof}
    Let $x^\star$ be the unique fixpoint of some $\lambda$-contraction map $f' : [0, 1]^d \rightarrow [0, 1]^d$ extending $f$. The following calculation shows that any point $x \in [0, 1]^d$ with $||x-x^\star||_p \leq \frac{\epsilon}{1+\lambda}$ must be an $\epsilon$-approximate fixpoint: 
    \begin{align*}
        ||f(x)-x||_p &\leq ||f(x)-f(x^\star)||_p + ||f(x^\star)-x||_p \\
        &\leq \lambda ||x-x^\star||_p+||x-x^\star||_p \\
        &= (1+\lambda)||x-x^\star||_p\leq \epsilon.
    \end{align*}

    We now determine a lower bound on $b$ needed to ensure the existence of such a point $x \in G^d_b$. Observe that since $B^\infty(x^\star, \frac{r}{d}) \subseteq B^p(x^\star, r)$ (follows from $||x||_p \leq d||x||_\infty$) for all $r > 0$ and all $p$, it suffices to prove existence of a grid point in $B^\infty(x^\star, \frac{\epsilon}{d+d\lambda})$. Observe that this is the axis-aligned cube $x^\star + [-\frac{\epsilon}{d+d\lambda},\frac{\epsilon}{d+d\lambda}]^d$. If $2^{-b}\leq \frac{2\epsilon}{d+d\lambda}$, this cube must always contain at least one point of $G^d_b$. Thus, we get the lower bound $b\geq \log_2(\frac{d+d\lambda}{2\epsilon})$.
\end{proof}

\section{\texorpdfstring{$\ell_p$}{lp}-Halfspaces and \texorpdfstring{$\ell_p$}{lp}-Centerpoints}\label{sec:centerpoint_theorems}

In this section, we generalize halfspaces and centerpoints from Euclidean geometry to $\ell_p$-norms for any $p \in [1, \infty) \cup \{ \infty\}$. Note that we sometimes use $\measuredangle(v,w)$ to denote the angle between two vectors $v,w \in \R^d$. We also sometimes use $\overrightarrow{xy}$ for the vector $y - x$. For example, $\measuredangle(\overrightarrow{xy},v)$ denotes the angle between the vectors $y - x$ and $v$.

\subsection{\texorpdfstring{$\ell_p$}{lp}-Halfspaces}\label{ssec:halfspaces}

There are many different ways of defining a halfspace in Euclidean geometry ($p = 2$). One natural way that will appear in our algorithms for finding the fixpoints of contraction maps (see \Cref{sec:algorithms}) is to define a halfspace using two distinct points  $x, y \in \R^d$: a point $z \in \R^d$ is considered to be in the halfspace if and only if $\norm{x - z}_2 \leq \norm{y - z}_2$. This directly generalizes to arbitrary $p \in [1, \infty) \cup \{\infty\}$ as follows.

\begin{definition}[Bisector $\ell_p$-Halfspace]
    For fixed $p \in [1, \infty) \cup \{\infty\}$ and distinct points $x, y \in \R^d$, the bisector $\ell_p$-halfspace $\bisecH^p_{x,y} \subseteq \R^d$ is defined as
    \[
        \bisecH^p_{x, y} \coloneqq \{ z \in \R^d \mid ||x  - z||_p \leq ||y - z||_p \}.
    \]
\end{definition}

In Euclidean geometry, rather than defining a halfspace using the principle of bisection, we can alternatively define it by a point $x$ on the boundary and a normal vector $v \in S^{d - 1}$, making use of the Euclidean inner product: a point $z \in \R^d$ belongs to the halfspace if and only if $\langle v, z-x \rangle \geq 0$. In fact, the $\ell_2$-norm is special because it is induced by an inner product. In general, $\ell_p$-norms are not induced by an inner product, and therefore this definition does not directly generalize. However, we can avoid the inner product with the following observation: a point $z$ satisfies $\langle v,z-x\rangle\geq 0$ if and only if $\norm{x - z}_2 \leq \norm{x - \epsilon v - z}_2$ holds for all $\epsilon > 0$. In a way, this definition can be seen as the limit of the bisector definition with $y=x-\epsilon v$ and $\epsilon\rightarrow 0$. Stated like this, we can generalize the definition to arbitrary $p \in [1, \infty) \cup \{\infty\}$.

\begin{restatable}[Limit $\ell_p$-Halfspace]{definition}{limithalfspace}
    For fixed $p \in [1, \infty) \cup \{\infty\}$, point $x \in \R^d$, and direction $v \in \R^d$ with $v \neq 0$, the $\ell_p$-halfspace $\limitH^p_{x,v} \subseteq \R^d$ through $x$ in the direction of $v$ is defined as
    \[
        \limitH^p_{x, v} \coloneqq \{ z \in \R^d \mid \forall \epsilon > 0 \, : \, ||x - z||_p \leq ||x - \epsilon v - z||_p \}.
    \]
\end{restatable}

Observe that scaling the direction $v$ with a positive scalar does not change the halfspace $\limitH^p_{x, v}$. Hence, we usually assume $v \in S^{d - 1} \subseteq \R^d$. In fact, we will frequently use the following characterization for containment in a limit $\ell_p$-halfspace.
\begin{restatable}{observation}{obscharacterizationcontainment}
\label{obs:characterization_containment}
    For a given limit $\ell_p$-halfspace $\limitH^p_{x, v}$ and point $z \in \R^d$, let $R_-$ be the open ray from $x$ in direction $-v$, and let $B_z$ be the smallest closed $\ell_p$-ball with center $z$ that contains $x$. Then we have
    \[
        z \in \limitH^p_{x, v} \iff R_- \cap B_z^\circ = \emptyset,
    \]
    where $B_z^\circ$ denotes the interior of $B_z$.
\end{restatable}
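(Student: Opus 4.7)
The plan is to show that both sides of the stated equivalence unpack to exactly the same inequality, so the proof should essentially be a matter of careful translation rather than any substantive argument.

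First, I would identify the radius of $B_z$. Since $B_z$ is the smallest closed $\ell_p$-ball centered at $z$ that contains $x$, its radius is exactly $\|x-z\|_p$, so
\[
    B_z^\circ = \{ w \in \R^d \mid \|w-z\|_p < \|x-z\|_p \}.
\]
Parameterizing the open ray as $R_- = \{ x - \epsilon v \mid \epsilon > 0 \}$, the condition $R_- \cap B_z^\circ = \emptyset$ translates to: for every $\epsilon > 0$, $\|x-\epsilon v - z\|_p \geq \|x-z\|_p$.

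Next, I would observe that this is literally the defining condition of $z \in \limitH^p_{x,v}$ given in the preceding definition, namely that $\|x-z\|_p \leq \|x-\epsilon v - z\|_p$ for all $\epsilon > 0$. Hence the two sides of the equivalence are the same statement rewritten geometrically, and both implications are immediate.

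There is no real obstacle here; the observation is a reformulation whose value lies in enabling the geometric pictures used later (reasoning via rays meeting $\ell_p$-balls rather than via limits of bisectors). The one point worth being careful about is the strict/non-strict distinction: using the \emph{open} ball $B_z^\circ$ on the right is essential to match the non-strict inequality $\leq$ in the definition of $\limitH^p_{x,v}$, and using the \emph{open} ray (excluding $\epsilon = 0$) is essential since the endpoint $x$ itself always sits on the boundary of $B_z$. I would state these two points explicitly in the proof to justify the equivalence of the two characterizations.
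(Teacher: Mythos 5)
Your proof is correct. The paper restates this observation in the appendix but offers no explicit proof, treating it as an immediate consequence of the definitions; your careful unpacking — identifying the radius of $B_z$ as $\|x-z\|_p$, parameterizing $R_-$, and noting that the open-ball/open-ray choices exactly mirror the non-strict inequality and the $\epsilon > 0$ quantifier in the definition of $\limitH^p_{x,v}$ — is precisely the justification the paper leaves implicit.
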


For our centerpoint theorem, we will exclusively work with limit halfspaces. However, as mentioned before, bisector halfspaces naturally appear in the analysis of our algorithms in \Cref{sec:algorithms}. Thus, we will need the following observation that allows us to translate between the two.

\begin{observation}\label{obs:bisectorSubsetLimit}
    Let $p \in [1, \infty) \cup \{\infty\}$, $x \in \R^n$, and $v \in S^{d -1}$. The limit $\ell_p$-halfspace $\limitH^p_{x,v}$ is the intersection of all bisector $\ell_p$-halfspaces $\bisecH^p_{x,x-\epsilon v}$ for $\epsilon>0$. Thus, we have $\limitH^p_{x,v}\subseteq \bisecH^p_{x,x-\epsilon v}$ for all $\epsilon > 0$.
\end{observation}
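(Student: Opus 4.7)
The plan is to unfold both definitions and observe that the condition defining the limit halfspace is literally the conjunction, over all $\epsilon > 0$, of the conditions defining the bisector halfspaces $\bisecH^p_{x, x - \epsilon v}$. No geometric machinery or $\ell_p$-specific estimates are needed; once the two definitions are lined up, set equality is immediate.

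Concretely, I would first fix an arbitrary $z \in \R^d$ and $\epsilon > 0$, and note that $y \coloneqq x - \epsilon v \neq x$ (since $v \in S^{d-1}$ and $\epsilon > 0$), so the bisector halfspace $\bisecH^p_{x, x - \epsilon v}$ is well-defined. By the definition of a bisector halfspace, $z \in \bisecH^p_{x, x - \epsilon v}$ iff $\|x - z\|_p \leq \|(x - \epsilon v) - z\|_p$. This is exactly the inequality appearing inside the universal quantifier in the definition of $\limitH^p_{x, v}$ for that particular value of $\epsilon$.

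From this term-by-term correspondence, I would then conclude that
\[
    z \in \limitH^p_{x, v}
    \iff \forall \epsilon > 0 \, : \, z \in \bisecH^p_{x, x - \epsilon v}
    \iff z \in \bigcap_{\epsilon > 0} \bisecH^p_{x, x - \epsilon v},
\]
which gives the claimed set equality. The inclusion $\limitH^p_{x, v} \subseteq \bisecH^p_{x, x - \epsilon v}$ for each fixed $\epsilon > 0$ is then an immediate consequence, since any set equal to an intersection is contained in each of the intersected sets.

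There is essentially no obstacle: the statement is a bookkeeping lemma that formalizes the intuition that a limit halfspace is the "$\epsilon \to 0$" intersection of bisector halfspaces, and it will serve throughout the paper as the bridge between the two viewpoints on $\ell_p$-halfspaces (limit halfspaces being more convenient for the centerpoint theorem, bisector halfspaces being the ones that arise naturally in the analysis of the fixpoint algorithm in \Cref{sec:algorithms}).
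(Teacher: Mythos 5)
Your proof is correct and matches what the paper intends: the observation is stated without proof precisely because, as you show, unfolding the two definitions reveals that $\limitH^p_{x,v}$ is literally defined by the conjunction over $\epsilon > 0$ of the membership conditions for $\bisecH^p_{x, x-\epsilon v}$, so the set equality and the resulting inclusions are immediate.
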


For the rest of this section, we will exclusively work with limit $\ell_p$-halfspaces and refer to them simply as $\ell_p$-halfspaces. In particular, we will first establish some properties of $\ell_p$-halfspaces that we will then use to prove our centerpoint theorem.

\subsection{Properties of \texorpdfstring{$\ell_p$}{lp}-Halfspaces}\label{ssec:properties}

We now collect some useful properties of $\ell_p$-halfspaces. The proofs of all the following properties and additional insights into $\ell_p$-halfspaces can be found in \Cref{sec:appendixProofsStructure}. 

We start with a simple observation for the case when the direction $v$ is a standard unit vector (i.e., $v$ is parallel to one of the coordinate axes). It turns out that for those directions, $\ell_p$-halfspaces are no different from the classical $\ell_2$-halfspaces.

\begin{restatable}{lemma}{lemmaaxisalignedhalfspace}\label{lemma:axis_aligned_halfspace}
    For any $p\in [1,\infty)$, any $x \in \R^d$, and any direction $v \in S^{d - 1}$ parallel to one of the coordinate axes, we have $\limitH^p_{x, v} = \limitH^2_{x, v}$. 
\end{restatable}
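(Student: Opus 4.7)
The plan is to reduce the statement by symmetry to the canonical case $x=0$ and $v=e_k$, and then to observe that when the perturbation direction is axis-aligned, all coordinates orthogonal to $e_k$ contribute identically to $\|z\|_p$ and to $\|z+\epsilon e_k\|_p$, so they cancel and the $\ell_p$-halfspace condition collapses to a one-dimensional inequality that is independent of $p$.

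First, I would exploit translation invariance: if $t \in \R^d$, then unpacking the definition shows $\limitH^p_{x+t,v} = \limitH^p_{x,v}+t$ for every $p$ (the $x$ and $x-\epsilon v$ in the defining condition are both shifted by $t$, and so is $z$). Hence we may assume $x=0$. By permuting coordinates we may assume $v$ is parallel to the $k$-th axis, and treating $v = -e_k$ by negation of $z_k$, we may assume $v=e_k$. With $x=0$ the defining condition becomes: $z \in \limitH^p_{0,e_k}$ iff $\|z\|_p \le \|z+\epsilon e_k\|_p$ for all $\epsilon>0$ (using that $\|{-w}\|_p = \|w\|_p$). Raising to the $p$-th power, the sum $\sum_{i\neq k}|z_i|^p$ appears on both sides and cancels, so the condition is equivalent to $|z_k|^p \le |z_k+\epsilon|^p$ for all $\epsilon>0$, and, taking $p$-th roots, to $|z_k|\le |z_k+\epsilon|$ for all $\epsilon>0$.

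A one-line case analysis finishes: if $z_k\ge 0$ then $|z_k+\epsilon|=z_k+\epsilon \ge z_k=|z_k|$ for all $\epsilon>0$, while if $z_k<0$ the choice $\epsilon=-z_k>0$ gives $|z_k+\epsilon|=0<|z_k|$, violating the condition. Thus $\limitH^p_{0,e_k}=\{z\in\R^d : z_k\ge 0\}$. Running exactly the same argument with $p=2$ (or invoking the standard Euclidean description directly) yields $\limitH^2_{0,e_k}=\{z\in\R^d : z_k\ge 0\}$, so the two halfspaces coincide, and by the reductions at the start the equality holds for any $x$ and any axis-aligned $v$.

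There is no serious obstacle here; the proof is essentially the single observation that axis-aligned perturbations only affect one coordinate of the $\ell_p$-norm. The only points to be careful about are the symmetry reductions (translation invariance and the sign of $v$) and confirming that the $p$-th power and $p$-th root steps are reversible, which is immediate because all quantities involved are nonnegative and $t\mapsto t^{1/p}$ is strictly monotone on $[0,\infty)$. It is worth noting in passing why this fails for $p=\infty$ (the $\max$ does not split additively across coordinates, so sufficiently large $|z_i|$ for $i\ne k$ can dominate and admit points with $z_k<0$), which explains the restriction $p\in[1,\infty)$ in the statement.
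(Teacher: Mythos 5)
Your proof is correct and takes essentially the same route as the paper: reduce to an axis-aligned direction, raise to the $p$-th power so the off-axis coordinates cancel, and observe that the remaining one-dimensional condition $|z_k|\le|z_k+\epsilon|$ for all $\epsilon>0$ is equivalent to $z_k\ge x_k$ independently of $p$. Your extra normalization to $x=0$ and $v=e_k$ is a cosmetic difference; the paper simply works with $v=(1,0,\dots,0)$ and general $x$ directly.
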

Note that for $\ell_\infty$, \Cref{lemma:axis_aligned_halfspace} does not hold, and we only have $\limitH^\infty_{x, v} \supset \limitH^2_{x, v}$. 

\Cref{lemma:axis_aligned_halfspace} really only describes a handful of special cases. For most directions $v$, $\ell_p$-halfspaces are very different to classical $\ell_2$-halfspaces. In particular, they are in general not convex.
Still, we can give a somewhat nice qualitative description of their shape.

\begin{restatable}{lemma}{lemstructureoflimithalfspace}\label{lem:structureoflimithalfspace}
    For any $p\in [1,\infty) \cup \{\infty\}$, any limit halfspace $\limitH^p_{x,v}$ is a union of rays originating in $x$. For $p\in (1,\infty)$, the boundary of any limit halfspace $\limitH^p_{x,v}$ is a union of lines through $x$.
\end{restatable}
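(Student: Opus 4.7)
For the first claim, I would verify the union-of-rays property for every $p \in [1, \infty) \cup \{\infty\}$ by a direct scaling argument on the defining inequality. Writing $u = z - x$, the condition $z \in \limitH^p_{x,v}$ unpacks to $\|u\|_p \leq \|u + \epsilon v\|_p$ for every $\epsilon > 0$. For any scalar $t > 0$, multiplying this by $t$ and applying the assumption at parameter $\epsilon / t$ gives $\|tu\|_p \leq \|tu + \epsilon v\|_p$ for every $\epsilon > 0$, which is exactly the condition for $x + tu \in \limitH^p_{x,v}$. Combined with the trivial case $t = 0$ (where the required inequality degenerates to $0 \leq \epsilon \|v\|_p$), this yields the full ray from $x$ through $z$.

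For the boundary claim with $p \in (1, \infty)$, the plan is to first establish the clean analytic characterization
\[
    \limitH^p_{x,v} = \{z \in \R^d : F(z) \geq 0\}, \qquad F(z) \coloneqq \sum_{i=1}^d |z_i - x_i|^{p-1} \sign(z_i - x_i)\, v_i.
\]
The workhorse is the single-variable function $h(\epsilon) \coloneqq \|(z - x) + \epsilon v\|_p^p = \sum_i |u_i + \epsilon v_i|^p$. Each summand is convex in $\epsilon$, and any summand with $v_i \neq 0$ is strictly convex for $p > 1$; since $v \neq 0$, $h$ is strictly convex. A short case check (including the coordinates with $u_i = 0$) shows that $h$ is differentiable at $0$ with $h'(0) = p F(z)$. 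Strict convexity then gives that $h$ is nondecreasing on $[0, \infty)$, i.e., $z \in \limitH^p_{x,v}$, if and only if $h'(0) \geq 0$.

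With this characterization in hand, continuity of $F$ yields $\partial \limitH^p_{x,v} \subseteq \{F = 0\}$. For the reverse inclusion, if $F(z) = 0$ I would let $z' \coloneqq z - \delta v$ for small $\delta > 0$; the translation identity $h_{z'}(\epsilon) = h_z(\epsilon - \delta)$ evaluated at $\epsilon = 0$ gives $h_{z'}'(0) = h_z'(-\delta)$, which is strictly negative by strict convexity of $h_z$ combined with $h_z'(0) = 0$. Thus $F(z') < 0$, so $z$ is indeed a boundary point and $\partial\limitH^p_{x,v} = \{F = 0\}$. Finally, the homogeneity identity $F(x + t(z - x)) = |t|^{p-1}\sign(t)\, F(z)$ (valid for every $t \in \R$ because $s \mapsto |s|^{p-1}\sign(s)$ is odd and positively homogeneous of degree $p - 1$) shows that $\{F = 0\}$ is invariant under $z \mapsto x + t(z-x)$, so the boundary is a union of lines through $x$. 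The main subtlety in this plan is ensuring that strict convexity of $h$ survives the non-smoothness of $s \mapsto |s|^{p-1}\sign(s)$ at zero — but a single strictly convex summand (from any coordinate with $v_i \neq 0$) suffices, so the argument goes through for all $p \in (1, \infty)$.
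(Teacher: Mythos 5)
Your proof is correct, and it takes a genuinely different route from the paper's. For the union-of-rays claim the paper invokes its characterization of containment via the smallest ball $B_z$ around $z$ touching $x$ (that $z\in\limitH^p_{x,v}$ iff $B_z^\circ$ misses the ray $R_-$), and observes that as $z$ slides along a ray from $x$, the balls $B_z$ are all scaled/translated copies sharing the same tangent hyperplane at $x$. Your direct scaling argument on the defining inequality — substitute $\epsilon/t$ and multiply by $t$ — is shorter and avoids appealing to that geometric characterization. For the boundary claim the paper again reasons geometrically: it characterizes $\partial\limitH^p_{x,v}$ as the set of $z$ whose ball $B_z$ is tangent to the line $L$ through $x$ in direction $v$, using strict convexity of $\ell_p$-balls, and then uses point-symmetry of balls to extend rays to full lines. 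You instead give an explicit analytic description $\limitH^p_{x,v}=\{F\geq 0\}$, $\partial\limitH^p_{x,v}=\{F=0\}$, where $F$ is essentially $\langle\nabla\|z-x\|_p,v\rangle$, and conclude by homogeneity of $F$. This is close in spirit to the paper's separate subgradient characterization (\Cref{lemma:subgradients_and_containment}), specialized to the smooth regime $p\in(1,\infty)$ where the subdifferential is a singleton; your version is self-contained, gives a concrete formula for the boundary, and makes the line-structure of the boundary fall out of a one-line homogeneity computation. Both approaches use strict convexity in an essential way and hence correctly restrict the boundary claim to $p\in(1,\infty)$. The only minor point worth flagging: your step ``$h$ nondecreasing on $[0,\infty)$ iff $h'(0)\geq 0$'' is only true because $h$ is convex (a general function could decrease then increase), but you do establish convexity so the step is sound.
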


Note that we will not really use the observation about the boundary of $\ell_p$-halfspaces for $p \in (1, \infty)$, but find it an interesting addition to the lemma. Unfortunately, it breaks down in some degenerate cases if $p \in \{1, \infty\}$ (see, e.g.,~\Cref{fig:degenerate}).

\begin{figure}[ht]
    \centering
    \includegraphics[]{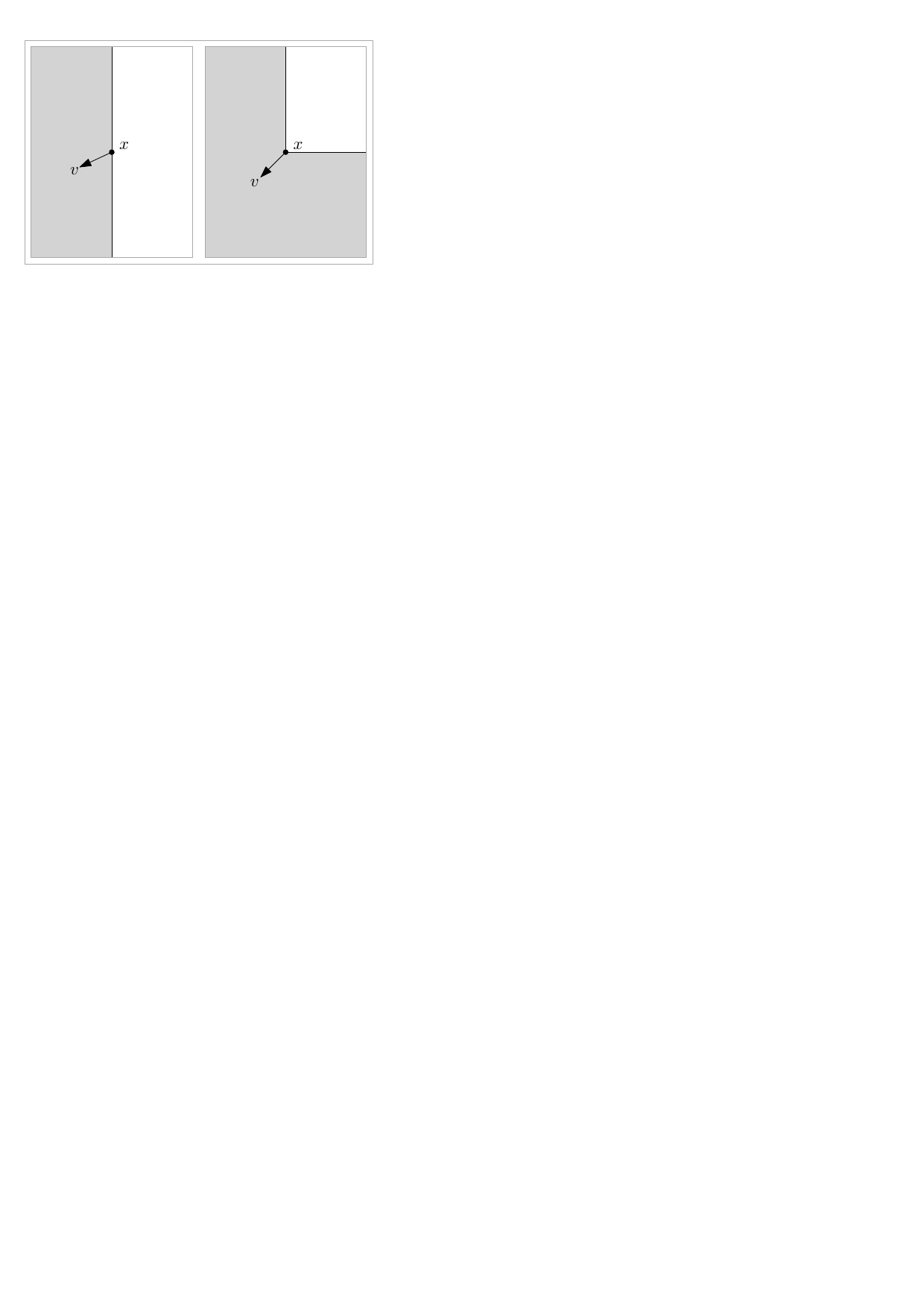}
    \caption{Examples of \Cref{lem:structureoflimithalfspace}. Two $\ell_1$-halfspaces, both are unions of rays starting at $x$, but only the left one (with non-degenerate $v$) has a boundary consisting of a union of lines through $x$. }
        \label{fig:degenerate}
\end{figure}

\Cref{lem:structureoflimithalfspace} provides a qualitative description of $\ell_p$-halfspaces, but does not really give us any concrete points that must be contained in or outside of a given $\ell_p$-halfspace. For example, it seems intuitive that points lying roughly in the direction of $v$ from $x$ should be contained in $\limitH^p_{x, v}$, while points lying roughly in the direction of $-v$ from $x$ should never be contained. The following lemma formalizes this (see also \Cref{fig:cones}).

\begin{restatable}{lemma}{lemanglesonlimithalfspaces}\label{lem:anglesonlimithalfspaces}
    Let $p\in [1,\infty) \cup \{\infty\}$ and $\limitH^p_{x, v}$ be arbitrary. All $z \in \limitH^p_{x, v}$ satisfy $\measuredangle(\overrightarrow{xz},v) \leq \pi-\sqrt{\nicefrac{1}{d}}$. Similarly, all $z \in \R^d$ with $\measuredangle(\overrightarrow{xz},v) \leq \sqrt{\nicefrac{1}{d}}$ must be contained in $\limitH^p_{x, v}$.
\end{restatable}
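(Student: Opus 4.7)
The strategy is to reformulate membership in $\limitH^p_{0,v}$ (by translation I may assume $x=0$; by homogeneity $\|v\|_2=1$; and $z \neq 0$, since otherwise the angle is undefined) as a one-sided derivative condition. The map $t\mapsto F(t)\coloneqq \|z+tv\|_p$ is convex with $F(0)=\|z\|_p$, so $z\in \limitH^p_{0,v}$ iff $F'_+(0)\geq 0$. For $p\in(1,\infty)$ an explicit computation yields $F'_+(0) = \langle s(z),v\rangle/\|z\|_p^{p-1}$ with $s(z)_i\coloneqq |z_i|^{p-1}\sign(z_i)$ (taking $s(z)_i=0$ if $z_i=0$); for $p=1$ the formula picks up a nonnegative correction $\sum_{z_i=0}|v_i|$, and for $p=\infty$ it becomes $F'_+(0)=\max_{i\in I^*(z)}\sign(z_i)v_i$, where $I^*(z)$ is the set of indices attaining $|z_i|=\|z\|_\infty$.

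The key geometric ingredient (for $p\in[1,\infty)$) is the claim
\[
    \cos\measuredangle(s(z),z) \;\geq\; 1/\sqrt{d},
\]
equivalently $d\bigl(\sum_i |z_i|^p\bigr)^2\geq \bigl(\sum_i|z_i|^2\bigr)\bigl(\sum_i|z_i|^{2(p-1)}\bigr)$. I would prove this by combining Chebyshev's sum inequality applied to the similarly sorted sequences $(|z_i|^2)_i$ and $(|z_i|^{2(p-1)})_i$, which gives $d\sum_i|z_i|^{2p}\geq (\sum_i|z_i|^2)(\sum_i|z_i|^{2(p-1)})$, with the trivial bound $(\sum_i|z_i|^p)^2\geq \sum_i|z_i|^{2p}$. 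With this in hand, both parts of the lemma follow from the spherical triangle inequality $\measuredangle(s(z),v)\leq \measuredangle(s(z),z)+\measuredangle(z,v)$ and the identity $\arccos(x)+\arcsin(x)=\pi/2$ together with $\arcsin(x)\geq x$ for $x\in[0,1]$: for Part~2, if $\measuredangle(z,v)\leq 1/\sqrt{d}$ then
\[
    \measuredangle(s(z),v) \;\leq\; \arccos(1/\sqrt{d})+1/\sqrt{d} \;\leq\; \arccos(1/\sqrt{d})+\arcsin(1/\sqrt{d}) \;=\; \pi/2,
\]
so $\langle s(z),v\rangle\geq 0$ and hence $z\in \limitH^p_{0,v}$; Part~1 is the analogous strict statement applied to $-v$, which forces $\langle s(z),v\rangle<0$ and thereby $z\notin \limitH^p_{0,v}$.

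The main obstacle is absorbing the boundary cases $p=1$ with $z$ having zero coordinates (where the extra $\sum_{z_i=0}|v_i|$ term must be bounded) and $p=\infty$ (where the subdifferential of $\|\cdot\|_\infty$ at $z$ is a proper convex set rather than a single vector) into the same framework. In both cases I would write $v=\pm\cos\theta\cdot z/\|z\|_2+\sin\theta\cdot u$ with $u\perp z$ a Euclidean unit vector, and exploit that $u$ is constrained on the ``special'' coordinates: for $p=\infty$, the projection bound $|u_i|\leq\sqrt{1-\|z\|_\infty^2/\|z\|_2^2}$ for $i\in I^*(z)$, combined with $\|z\|_\infty/\|z\|_2\geq 1/\sqrt{d}$, plays the role of the geometric claim; for $p=1$, a Cauchy--Schwarz argument on the orthogonal decomposition of $u$ into support/non-support parts yields an analogous estimate with $r\coloneqq \|z\|_1/\|z\|_2\geq 1$. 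In each case the final check reduces to an angle sum of the form $\theta+\arccos(r/\sqrt{d})\leq \pi/2$ with $r\geq 1$, which holds by the same $\arcsin$-identity argument as in the main case; I expect these degenerate subcases to cost only a few extra lines of careful bookkeeping rather than any fundamentally new idea.
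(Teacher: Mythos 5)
Your proposal is correct and follows the same high-level skeleton as the paper: characterize membership in $\limitH^p_{x,v}$ by a first-order condition at $x$, prove that the associated (sub)gradient makes an angle at most $\arccos(1/\sqrt{d}) \le \pi/2 - 1/\sqrt{d}$ with $\overrightarrow{xz}$, and close with the spherical triangle inequality plus the $\arccos + \arcsin = \pi/2$ identity. Your one-sided-derivative reformulation $F'_+(0) \ge 0$ with $F(t) = \|z + tv\|_p$ is exactly the directional-derivative form of the paper's subgradient criterion (Lemma~\ref{lemma:subgradients_and_containment}), since $F'_+(0) = \max_{u \in \partial \|z\|_p}\langle u, v\rangle$ for a convex $F$; these are interchangeable.

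The genuine tactical difference is in how the key angle bound $\cos\measuredangle \ge 1/\sqrt{d}$ is established. For $p \in (1,\infty)$ you prove $d(\sum|z_i|^p)^2 \ge (\sum|z_i|^2)(\sum|z_i|^{2(p-1)})$ via Chebyshev's sum inequality plus $(\sum a_i)^2 \ge \sum a_i^2$, whereas the paper uses the Hölder-duality identity $\langle u, z\rangle = \|u\|_{p/(p-1)} \cdot \|z\|_p$ together with the $\ell_2$-comparison inequalities $\|x\|_2 \le \|x\|_p$ for $p \le 2$ and $\|x\|_2 \le \sqrt{d}\,\|x\|_p$ for $p \ge 2$. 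Both are valid; your Chebyshev argument is elementary and self-contained, while the paper's Hölder argument has the advantage that it applies verbatim to $p = 1$ and $p = \infty$ (via the observations that $\langle u, z\rangle = \|u\|_\infty = \|z\|_1$ and $\langle u, z\rangle = \|u\|_1 = \|z\|_\infty$ respectively, for \emph{every} subgradient $u$), so no separate case analysis is needed. Your boundary-case sketch for $p=1$ and $p=\infty$ is sound but must be executed with care for Part~1: you need the bound to apply to the \emph{extremal} subgradient achieving $F'_+(0)$, and in the Cauchy--Schwarz step the cross term must be bounded by $\sin\alpha$ (where $\alpha = \measuredangle(u^*,z)$), not merely by $1$ --- using the cruder bound would leave you with the condition $\tan\theta < r/\sqrt d$ rather than $\theta + \arccos(r/\sqrt d) < \pi/2$, and the former does not follow from $\theta < 1/\sqrt d$. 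With that sharpening, the argument closes exactly as you indicate.
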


\begin{figure}[ht]
    \centering
    \includegraphics[]{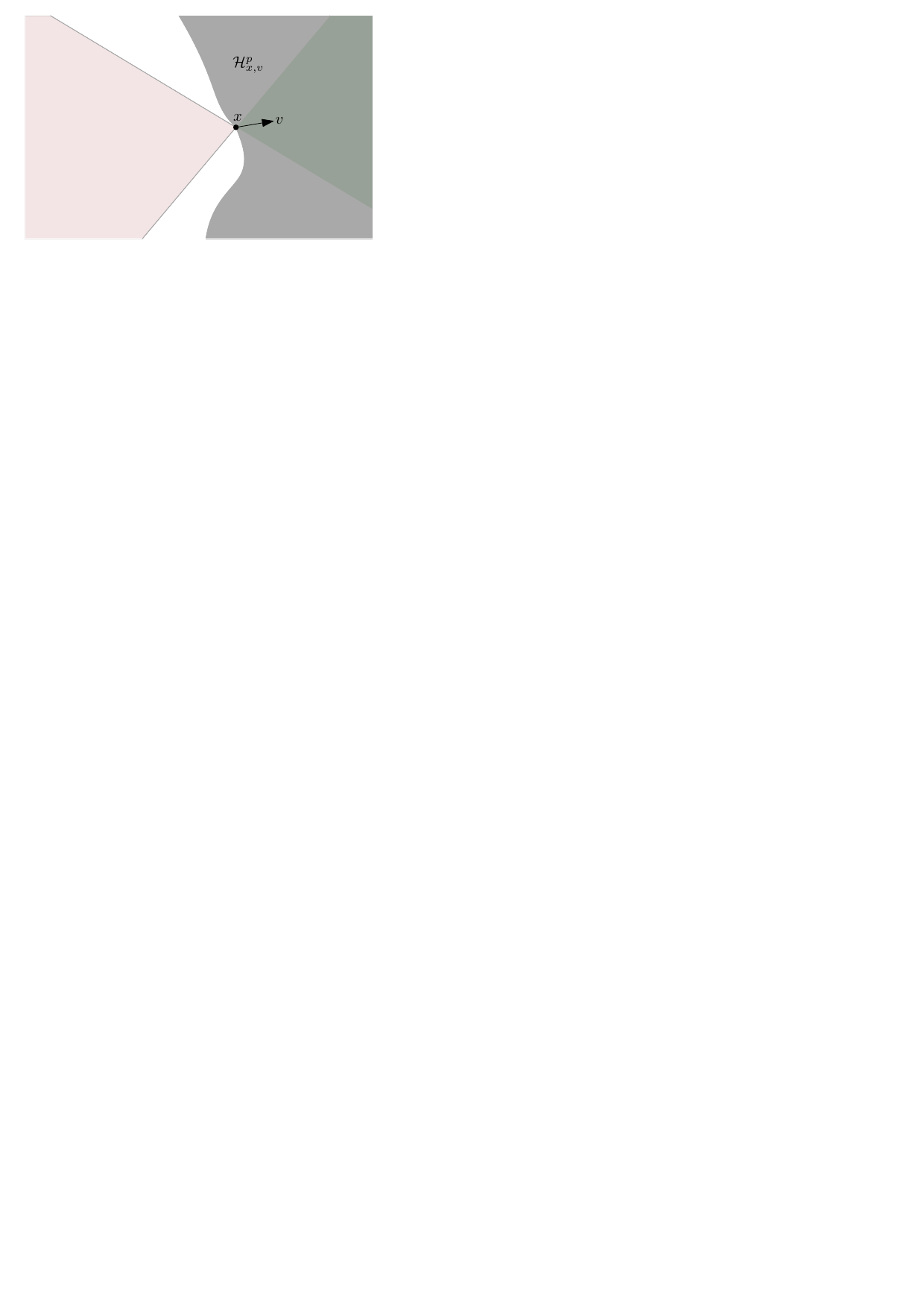}
    \caption{Sketch of \Cref{lem:anglesonlimithalfspaces}. No point in the red cone is contained in the $\ell_p$-halfspace $\limitH^{p}_{x,v}$, but all points in the green cone are. }
    \label{fig:cones}
\end{figure}

With \Cref{lem:anglesonlimithalfspaces}, it now also seems clear that any $\ell_p$-halfspace $\limitH^p_{x, -x}$ with $x$ far away from the origin should contain every small compact set $C$ around the origin. \Cref{lemma:pull_towards_zero} makes this precise for $C = [0, 1]^d$. 

\begin{restatable}{corollary}{lemmapulltowardszero}
\label{lemma:pull_towards_zero}
    For arbitrary $p \in [1, \infty) \cup \{\infty\}$ and $x \in \R^d$ with $\norm{x}_2 > 2d$, we have $[0,1]^d \subseteq \limitH^p_{x, -x}$.
\end{restatable}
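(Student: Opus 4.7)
My plan is to reduce the statement to an application of \Cref{lem:anglesonlimithalfspaces}. That lemma guarantees $z \in \limitH^p_{x, v}$ whenever $\measuredangle(\overrightarrow{xz}, v) \leq \sqrt{1/d}$, so setting $v = -x$ it suffices to show that every $z \in [0,1]^d$ satisfies $\measuredangle(\overrightarrow{xz}, -x) \leq \sqrt{1/d}$. The intuition is simply that $[0,1]^d$ has $\ell_2$-diameter $\sqrt{d}$, and viewed from the faraway point $x$ with $\|x\|_2 > 2d$, the entire cube subtends a very small angle around the direction $-x$ (which points from $x$ back towards the origin).

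To quantify this angle, I would apply the law of sines in the (possibly degenerate) triangle with vertices $0$, $x$, $z$. The angle at vertex $x$ equals $\measuredangle(z - x, -x)$; the side opposite it has length $\|z\|_2 \leq \sqrt{d}$ since $z \in [0,1]^d$; and the side opposite $z$ has length $\|x\|_2 > 2d$. The law of sines then gives
\[
\sin \measuredangle(\overrightarrow{xz}, -x) \;\leq\; \frac{\|z\|_2}{\|x\|_2} \;<\; \frac{\sqrt{d}}{2d} \;=\; \frac{1}{2\sqrt{d}}.
\]
The degenerate case $z = 0$ is handled separately by noting that $\overrightarrow{xz} = -x$, so the angle is $0$; and the case $z = x$ does not occur because $\|z\|_2 \leq \sqrt{d} < 2d < \|x\|_2$.

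The last step is the numeric check that $\arcsin(1/(2\sqrt{d})) \leq 1/\sqrt{d}$, which I would verify using the Taylor estimate $\sin(y) \geq y - y^3/6$ at $y = 1/\sqrt{d}$: this yields $\sin(1/\sqrt{d}) \geq \frac{1}{\sqrt{d}}(1 - \tfrac{1}{6d}) \geq \frac{5}{6\sqrt{d}} > \frac{1}{2\sqrt{d}}$, so indeed $\measuredangle(\overrightarrow{xz}, -x) \leq 1/\sqrt{d}$ and \Cref{lem:anglesonlimithalfspaces} delivers $z \in \limitH^p_{x,-x}$.

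I do not foresee a serious obstacle here; the proof is essentially a one-line triangle argument plus an easy Taylor check. The only conceptual point to notice is that the hypothesis $\|x\|_2 > 2d$ (growing \emph{linearly} in $d$, not just as $\sqrt{d}$) is exactly what is needed to bring $\|z\|_2 / \|x\|_2$ below $\sin(1/\sqrt{d}) \approx 1/\sqrt{d}$, so the constant in the statement is tight up to the factor $2$.
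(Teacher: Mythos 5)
Your proof is correct and follows essentially the same route as the paper: reduce to the angular criterion of \Cref{lem:anglesonlimithalfspaces} with $v=-x$, bound the angle at $x$ in the triangle $0,x,z$ via the law of sines using $\|z\|_2\le\sqrt{d}$ and $\|x\|_2>2d$, and conclude. The only (welcome) addition is that you make explicit the final numeric step $\arcsin\bigl(\tfrac{1}{2\sqrt d}\bigr)\le\tfrac{1}{\sqrt d}$ via a Taylor estimate, which the paper leaves implicit.
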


\begin{figure}[ht]
    \centering
    \includegraphics[]{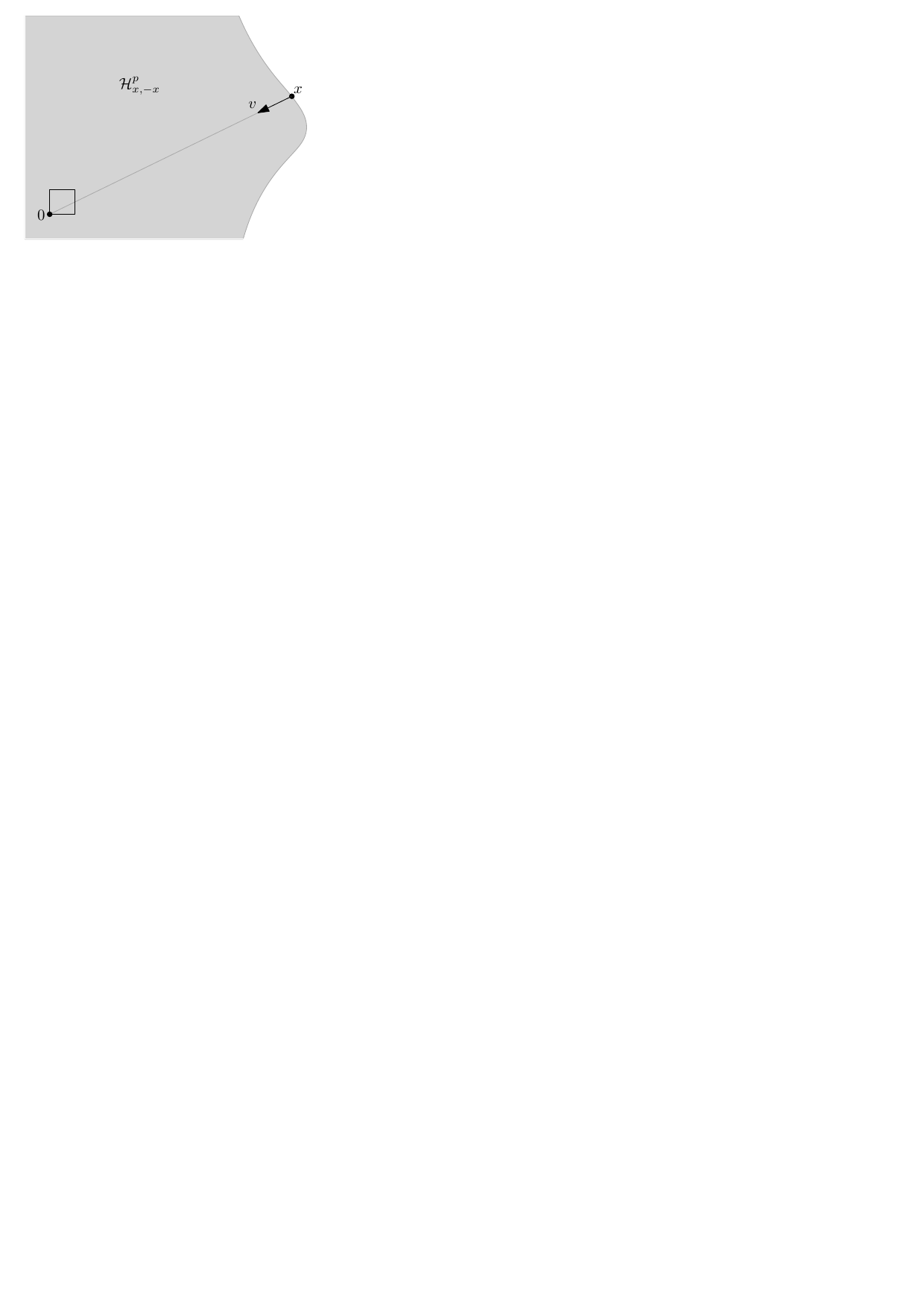}
    \caption{Sketch of \Cref{lemma:pull_towards_zero}. The cube $[0,1]^d$ is contained in all $\limitH^p_{x,-x}$ for $||x||_2$ large enough.}
\end{figure}

Finally, we will need some results on the interaction of $\ell_p$-halfspaces with mass distributions, defined as follows.

\begin{definition}[Mass Distribution]
    We call a measure $\mu$ on $\R^d$ a mass distribution if it is absolutely continuous with respect to the Lebesgue measure on $\R^d$ and satisfies $\mu(\R^d) < \infty$.
\end{definition}

Note that absolute continuity of $\mu$ says that all measurable sets $A$ with Lebesgue measure $0$ must also satisfy $\mu(A) = 0$. By the Radon–Nikodym theorem, any such mass distribution must have a density, and so one can think of mass distributions as probability distributions with a density (by assuming $\mu(\R^d) = 1$, without loss of generality). Concretely, we will adopt this point of view in the proofs of \Cref{lemma:continuity_halfspace_mass} and \Cref{lemma:pulling_directions_positive_measure} (see \Cref{ssec:properties_II}). 

We can now study how $\mu(\limitH^p_{x, v})$ behaves with respect to its arguments $x$ and $v$. For example, the next lemma says that the mass $\mu(\limitH^p_{x, v})$ of an $\ell_p$-halfspace is continuous under translation of the defining point~$x$.

\begin{restatable}{lemma}{lemmacontinuityhalfspacemass}
\label{lemma:continuity_halfspace_mass}
    The function $f(x) \coloneqq \mu(\limitH^p_{x, v})$ is continuous in $x \in \R^d$ for all mass distributions $\mu$, all $v \in S^{d - 1}$, and all $p \in [1, \infty) \cup \{\infty\}$.
\end{restatable}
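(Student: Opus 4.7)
The plan is to reduce the continuity of $f(x) = \mu(\limitH^p_{x,v})$ to the classical fact that translation acts continuously on $L^1(\R^d)$. The key enabling observation is that every $\ell_p$-norm is translation invariant, so the family of halfspaces $\{\limitH^p_{x,v}\}_{x \in \R^d}$ is a family of translates of one fixed set.

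First I would verify the identity $\limitH^p_{x,v} = x + H$ where $H := \limitH^p_{0,v}$: substituting $w = z - x$ into the defining inequalities $\norm{x - z}_p \leq \norm{x - \varepsilon v - z}_p$ turns them into $\norm{w}_p \leq \norm{w + \varepsilon v}_p$, which is exactly the condition for $w \in H$. Next, since $\mu$ is absolutely continuous with respect to the Lebesgue measure and finite, the Radon-Nikodym theorem yields a density $g \in L^1(\R^d)$, and translation invariance of the Lebesgue measure gives
\[
    f(x) = \int_{x + H} g(y) \, dy = \int_H g(u + x) \, du
\]
via the substitution $u = y - x$. For any $x, x' \in \R^d$, the triangle inequality followed by another translation of the integration variable then yields
\[
    \abs{f(x) - f(x')} \leq \int_H \abs{g(u + x) - g(u + x')} \, du \leq \int_{\R^d} \abs{g(y) - g(y + (x' - x))} \, dy.
\]

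Continuity of $f$ at $x$ now follows from the classical $L^1$-continuity of translation: for every $g \in L^1(\R^d)$, $\int_{\R^d} \abs{g(y) - g(y + h)} \, dy \to 0$ as $h \to 0$. This is the sole analytic input, and is standard (proved by approximating $g$ in $L^1$ by continuous compactly supported functions, for which uniform continuity and dominated convergence give the claim). I do not anticipate any real obstacle here, since translation invariance of the $\ell_p$-norm reduces the $x$-dependence of $\limitH^p_{x,v}$ to pure translation, letting continuity be inherited directly from the $L^1$-regularity of the density of $\mu$. Notably, this approach sidesteps any need to analyze the topological boundary of $\limitH^p_{x,v}$, which can be geometrically unpleasant for $p \in \{1, \infty\}$.
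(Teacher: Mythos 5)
Your proposal is correct, and it takes a genuinely different and arguably cleaner route than the paper. The key observation you make — that translation invariance of $\norm{\cdot}_p$ gives $\limitH^p_{x,v} = x + \limitH^p_{0,v}$, so that the $x$-dependence of $f$ is pure translation — is valid (your substitution check is exactly right, and $H = \limitH^p_{0,v}$ is closed as an intersection of closed sets, hence measurable). Once that is in place, writing $f(x) = \int_H g(u+x)\,du$ with $g$ the Radon--Nikodym density and invoking $L^1$-continuity of translation finishes the job. The paper instead fixes $v$ and a sequence $x_n \to x$, shows the indicator of $\limitH^p_{x_n,v}$ converges pointwise to the indicator of $\limitH^p_{x,v}$ away from the boundary $\partial \limitH^p_{x,v}$, proves separately that this boundary has Lebesgue measure zero (via a porosity argument built on a monotonicity lemma about ``inside/outside orthants''), and then applies dominated convergence. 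Your approach sidesteps the boundary-measure-zero step and the orthant lemma entirely, at the cost of invoking the standard $L^1$-continuity-of-translation fact; the paper's approach is more self-contained geometrically but has a longer chain of supporting lemmas. Both are correct; yours is the shorter path to this particular statement.
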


For $p \in (1, \infty)$, one could also show continuity with respect to $v \in S^{d - 1}$ rather than $x$. Unfortunately, this breaks down for $p \in \{1, \infty\}$ again in some degenerate cases. However, it turns out that a weaker property will suffice for our needs. Concretely, for fixed $x \in \R^d$ and some threshold $t > 0$, consider the set of directions $V_x \coloneqq \{ v \in S^{d - 1} \mid \mu(\limitH^p_{x, v}) < t\}$ corresponding to halfspaces $\limitH^p_{x, v}$ with strictly less than $t$ mass. The following lemma says that $V_x$ is an open subset of $S^{d - 1}$.

\begin{restatable}{lemma}{lemmapullingdirectionspositivemeasure}
\label{lemma:pulling_directions_positive_measure}
    The set $V_x = \{ v \in S^{d - 1} \mid \mu(\limitH^p_{x, v}) < t\}$ is an open subset of $S^{d - 1}$ for all mass distributions $\mu$, all $t > 0$, all $p \in [1, \infty) \cup \{\infty\}$, and all $x \in \R^d$.
\end{restatable}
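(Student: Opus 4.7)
The plan is to show the equivalent statement that $v \mapsto \mu(\limitH^p_{x, v})$ is upper semi-continuous on $S^{d-1}$, from which the claim follows immediately since $V_x$ is then the preimage of the open half-line $(-\infty, t)$ under an upper semi-continuous function.

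First I would fix $v_0 \in S^{d-1}$ together with an arbitrary sequence $v_n \to v_0$ in $S^{d-1}$ and aim to prove
\[
    \limsup_{n \to \infty} \mu(\limitH^p_{x, v_n}) \leq \mu(\limitH^p_{x, v_0}).
\]
Since $\mu$ is a finite measure (part of the definition of a mass distribution), the reverse Fatou lemma for sets yields $\limsup_n \mu(A_n) \leq \mu(\limsup_n A_n)$ for any sequence of measurable sets $A_n$ in $\R^d$. Applied to $A_n = \limitH^p_{x, v_n}$, this reduces the whole problem to the purely set-theoretic inclusion
\[
    \limsup_n \limitH^p_{x, v_n} \;\subseteq\; \limitH^p_{x, v_0}.
\]

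To establish this inclusion, I would use the geometric characterization given by Observation~\ref{obs:characterization_containment}: $z \in \limitH^p_{x, v}$ iff the open ray $R_-(v) = \{x - sv : s > 0\}$ does not meet the interior $B_z^\circ$ of the smallest closed $\ell_p$-ball centered at $z$ containing $x$. Now suppose $z$ lies in $\limitH^p_{x, v_n}$ for infinitely many $n$. If we had $z \notin \limitH^p_{x, v_0}$, there would exist some $s_0 > 0$ with $x - s_0 v_0 \in B_z^\circ$. Because $B_z^\circ$ is open and the map $v \mapsto x - s_0 v$ is continuous in $v$, the point $x - s_0 v_n$ would lie in $B_z^\circ$ for all sufficiently large $n$, contradicting $R_-(v_n) \cap B_z^\circ = \emptyset$ infinitely often. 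Hence $z \in \limitH^p_{x, v_0}$, which establishes the required inclusion.

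There is no real obstacle here: the argument uses only Observation~\ref{obs:characterization_containment} and one standard measure-theoretic fact, and crucially no continuity of the map $v \mapsto \limitH^p_{x, v}$ in any Hausdorff sense is required. The only subtlety worth stressing is that one must not attempt to prove full continuity of $v \mapsto \mu(\limitH^p_{x, v})$, which the paper notes fails for $p \in \{1, \infty\}$ in degenerate cases; upper semi-continuity, which is exactly what the openness of $V_x$ needs, is all one should aim for.
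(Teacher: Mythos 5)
Your proof is correct and takes a noticeably cleaner route than the paper's. The paper, like you, shows that $S^{d-1}\setminus V_x$ is closed: it fixes $v_n\to v$ with each $\mu(\limitH^p_{x,v_n})\geq t$, asserts that the indicator functions $X_n = \mathbf{1}_{\limitH^p_{x,v_n}}$ converge pointwise (almost surely) to some $X$, applies dominated convergence to get $\mathbb{E}X\geq t$, and finally argues $\{X=1\}\subseteq\limitH^p_{x,v}$. You instead bound $\limsup_n\mu(\limitH^p_{x,v_n})$ directly by reverse Fatou for sets (valid since $\mu$ is finite), which reduces everything to the purely set-theoretic inclusion $\limsup_n\limitH^p_{x,v_n}\subseteq\limitH^p_{x,v}$; you then establish this inclusion cleanly from \Cref{obs:characterization_containment} using only openness of $B_z^\circ$ and continuity of $v\mapsto x-s_0v$. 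What your version buys: you never have to argue that $X_n$ admits an almost-sure pointwise limit, which is the one step in the paper's argument that is asserted with little justification — reverse Fatou applies to $\limsup$ of sets unconditionally, and $\limsup$ of indicators always exists. You also correctly pinpoint that upper semicontinuity of $v\mapsto\mu(\limitH^p_{x,v})$ is precisely the right target, and that full continuity in $v$ would be too much to ask for $p\in\{1,\infty\}$.
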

    
\subsection{\texorpdfstring{$\ell_p$}{lp}-Centerpoints of Mass Distributions}\label{ssec:centerpoint_mass_distributions}

Using our notion of limit halfspaces, we can state the classical (Euclidean) centerpoint theorem for mass distributions as follows.

\begin{theorem}[Euclidean Centerpoint Theorem for Mass Distributions~\cite{radoTheoremGeneralMeasure1946}] 
\label{theorem:classical_centerpoint}
    Let $\mu$ be a mass distribution on $\R^d$ with bounded support. There exists $c \in \R^d$ such that $\mu(\limitH^2_{c, v}) \geq \frac{1}{d + 1}\mu(\R^d)$ for all $v \in S^{d - 1}$.
\end{theorem}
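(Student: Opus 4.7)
The plan is to prove this classical theorem using Brouwer's fixpoint theorem, in a manner that will generalize cleanly to the $\ell_p$-setting later. After normalizing so that $\mu(\R^d) = 1$, I would define the continuous vector field $F \colon \R^d \to \R^d$ by
\[
    F(x) := x + \int_{S^{d - 1}} v \cdot \max\!\left(\tfrac{1}{d + 1} - \mu(\limitH^2_{x, -v}),\, 0\right) d\sigma(v),
\]
where $\sigma$ denotes the surface measure on $S^{d - 1}$. The integrand contributes a nonnegative ``push'' in direction $v$ precisely when the antipodal halfspace $\limitH^2_{x, -v}$ is deficient in mass; intuitively, translating $x$ along $v$ sweeps more mass into this halfspace, repairing the deficit.

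I would first check that $F$ is continuous on $\R^d$ via \Cref{lemma:continuity_halfspace_mass} and dominated convergence (the integrand is uniformly bounded by $\tfrac{1}{d + 1}$). Then I would pick a closed ball $B \subseteq \R^d$ of sufficiently large radius containing $\supp \mu$ and verify $F(B) \subseteq B$: for $x$ on $\partial B$ far from $\supp \mu$, \Cref{lemma:pull_towards_zero} and its variants imply that $\limitH^2_{x, -v}$ has full mass for every outward-pointing $v$, so the integrand is supported only on inward-pointing directions, yielding a net inward push bounded in norm by $\sigma(S^{d-1})/(d+1)$. Brouwer's fixpoint theorem applied to $F|_B$ then yields $c \in B$ with $\int_{S^{d - 1}} v \, g_c(v) \, d\sigma(v) = 0$, where $g_c(v) := \max\!\bigl(\tfrac{1}{d + 1} - \mu(\limitH^2_{c, -v}),\, 0\bigr)$.

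The crux of the argument is showing that every such fixpoint is a centerpoint. Suppose for contradiction that some direction $w_0 \in S^{d - 1}$ satisfies $\mu(\limitH^2_{c, w_0}) < \tfrac{1}{d + 1}$; then $v_0 := -w_0$ yields $g_c(v_0) > 0$, and by \Cref{lemma:pulling_directions_positive_measure} the set $V := \{v \in S^{d - 1} : g_c(v) > 0\}$ is open and nonempty. Since Euclidean hyperplanes have Lebesgue measure zero, $\mu(\limitH^2_{c, v}) + \mu(\limitH^2_{c, -v}) = 1$ for every $v$, forcing the antipodal disjointness $V \cap (-V) = \emptyset$. The main obstacle is turning this one-sidedness into the contradiction $\int v \, g_c(v) \, d\sigma(v) \neq 0$, since a nonnegative function supported only on an antipodally disjoint set can \emph{a priori} still have vanishing first moment. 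To close the gap I would exploit the Euclidean containment $\limitH^2_{c, v_1} \cap \limitH^2_{c, v_2} \subseteq \limitH^2_{c, v_1 + v_2}$ together with the complementary identity to extract additional structure on $V$ (ideally, that $V$ is contained in an open hemisphere), which would force the fixpoint integral to have strictly positive inner product with a witness direction and thereby contradict $F(c) = c$.
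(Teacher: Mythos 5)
Your setup matches the paper's proof of the more general \Cref{theorem:p-centerpoint} (specialized to $p = 2$) almost verbatim: the same vector field $F$, continuity via \Cref{lemma:continuity_halfspace_mass} and dominated convergence, restriction to a large ball, and Brouwer. The issue is that you stop exactly where the real work begins.

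The crux you flag---turning the one-sidedness of $V$ into $\int v\, g_c(v)\, d\sigma(v) \neq 0$---is not resolved in your writeup. You correctly observe that $V \cap (-V) = \emptyset$ is insufficient and that the right goal is to place $V$ in an open hemisphere, equivalently $0 \notin \conv(V)$, but the proposed route (``exploit $\limitH^2_{c,v_1} \cap \limitH^2_{c,v_2} \subseteq \limitH^2_{c,v_1+v_2}$ and the complementary identity'') is left as a hope. What actually closes the gap, and what the paper proves as \Cref{lemma:convex_combinations} and \Cref{corollary:convexhull}, is a covering argument: if $0 \in \conv(V)$, Carathéodory gives $v_1,\dots,v_k \in V$ with $k \le d+1$ and $0 \in \conv(v_1,\dots,v_k)$; then the $k$ halfspaces $\limitH^2_{c,-v_i}$ cover all of $\R^d$ (in the Euclidean case this is pure pigeonhole on the signs of $\langle -v_i, z - c\rangle$, or, if you prefer, a consequence of your containment identity applied to the complements), yet each has mass strictly less than $\frac{1}{d+1}$, so they cannot cover a unit total mass. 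This contradiction yields $0 \notin \conv(V)$ for \emph{every} $x$, not just fixpoints, and combined with openness of $V$ (\Cref{lemma:pulling_directions_positive_measure}) gives the strict hemisphere containment and hence a nonzero net push. Your writeup is missing both the Carathéodory reduction and the mass-covering pigeonhole, which together are the heart of the argument; the inner-product identity alone does not obviously imply hemisphere containment.

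A smaller point: the claim $F(B) \subseteq B$ is not justified as stated. For $x \in \partial B$ and $v$ with $\langle v, x\rangle > 0$ but small (nearly tangent to the sphere), $\limitH^2_{x,-v} = \{z : \langle v, z\rangle \le \langle v, x\rangle\}$ need not contain $\supp\mu$, so $g_x(v)$ can be positive on some outward directions. The paper sidesteps this by composing $F$ with the radial projection onto $C$ and then showing separately (using \Cref{lemma:pull_towards_zero} and \Cref{corollary:convexhull}) that a fixpoint of the projected map cannot occur on $\partial C$, so it is genuinely a fixpoint of $F$. Either adopt that projection trick or give a quantitative argument that the net radial component of the push is nonpositive on a sufficiently large sphere.
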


We call such a point $c$ an \emph{$\ell_2$-centerpoint} of $\mu$. In this section, we will prove existence of an \emph{$\ell_p$-centerpoint} of $\mu$ for all $p \in [1, \infty) \cup \{ \infty \}$, as stated in the following theorem.

\begin{theorem}[$\ell_p$-Centerpoint Theorem for Mass Distributions]
\label{theorem:p-centerpoint}
    Let $\mu$ be a mass distribution on $\R^d$ with bounded support, and let $p \in [1, \infty) \cup \{\infty\}$ be arbitrary. There exists $c\in\R^d$ such that $\mu(\limitH^p_{c, v}) \geq \frac{1}{d + 1}\mu(\R^d)$ for all $v \in S^{d - 1}$.
\end{theorem}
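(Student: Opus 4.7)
The plan is to apply Brouwer's fixpoint theorem to the vector field
\[
F_i(x) \coloneqq x_i + \int_{S^{d-1}} v_i \max\!\left(\tfrac{1}{d+1} - \mu(\limitH^p_{x,-v}),\, 0\right) dv,
\]
as suggested in the overview. First I would normalize so that $\mu(\R^d) = 1$. Continuity of $F$ follows from \Cref{lemma:continuity_halfspace_mass} combined with the dominated convergence theorem, since the integrand is pointwise bounded by $\tfrac{1}{d+1}$. To obtain a compact convex $F$-invariant set, I would use \Cref{lemma:pull_towards_zero}: once $\|x\|_2$ is sufficiently large, $\limitH^p_{x,-v}$ contains the support of $\mu$ whenever $v$ has a non-trivial component in the direction of $x$, so the only directions contributing meaningfully to $F(x)-x$ are those aligned with $-x/\|x\|_2$. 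The displacement therefore points back toward the origin, and some closed ball $\bar B^2(0, R)$ is mapped into itself. Brouwer then yields a fixpoint $c$.

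The main task is to show that any fixpoint $c$ of $F$ is actually an $\ell_p$-centerpoint. Write $g_c(v) \coloneqq \max(\tfrac{1}{d+1} - \mu(\limitH^p_{c,-v}), 0) \geq 0$, so the fixpoint equation reads $\int_{S^{d-1}} v\, g_c(v)\, dv = 0$. Assume for contradiction that $\mu(\limitH^p_{c,w}) < \tfrac{1}{d+1}$ for some $w$; equivalently $g_c(-w) > 0$, and by \Cref{lemma:pulling_directions_positive_measure} (applied after $v \mapsto -v$) the set $U \coloneqq \{v \in S^{d-1} : g_c(v) > 0\}$ is open and non-empty. Two structural observations set up the contradiction. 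First, since $t \mapsto \|u + tv\|_p$ is convex, its minimum is attained on $[0,\infty)$ or on $(-\infty,0]$, so $\limitH^p_{c,v} \cup \limitH^p_{c,-v} = \R^d$ for every $v$; hence $\mu(\limitH^p_{c,v}) + \mu(\limitH^p_{c,-v}) \geq 1$, and in particular $v \in U$ implies both $-v \notin U$ and $\mu(\limitH^p_{c,v}) > \tfrac{d}{d+1}$. Second, $\overline U$ cannot lie in any closed hemisphere $\{v : \langle w', v \rangle \geq 0\}$: otherwise $\langle w', \int v g_c(v) dv\rangle = 0$ would force $g_c$ to be supported on the great hypersphere $\{\langle w', v \rangle = 0\}$ of dimension $d-2$, which cannot contain the open non-empty set $U$. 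Thus $0 \in \mathrm{int}(\conv(\overline U))$, and by a small perturbation I can find $v_1, \ldots, v_k \in U$ with $k \leq d+1$ and positive coefficients $\lambda_1, \ldots, \lambda_k$ satisfying $\sum_i \lambda_i v_i = 0$.

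With these vectors in hand, inclusion-exclusion gives
\[
\mu\Big(\bigcap_{i=1}^{k} \limitH^p_{c, v_i}\Big) \;\geq\; \sum_{i=1}^k \mu(\limitH^p_{c, v_i}) - (k-1) \;>\; \tfrac{kd}{d+1} - (k-1) \;\geq\; 0.
\]
On the other hand, I claim this intersection has Lebesgue measure zero. For any $z$ in it, let $u \coloneqq z - c$; since each convex function $t \mapsto \|u + tv_i\|_p$ is non-decreasing at $0^+$, its right directional derivative at $0$ is non-negative for every $i$. Pairing with $\sum_i \lambda_i v_i = 0$ and positivity of the $\lambda_i$ forces all of these directional derivatives to vanish. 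For $p \in (1, \infty)$ this means that the gradient of $\|\cdot\|_p$ at $u$ is orthogonal to $\mathrm{span}\{v_i\}$; since $d+1$ positively dependent vectors have rank at most $d$ and the gradient map is a diffeomorphism of $\R^d \setminus \{0\}$, the set of admissible $u$ has Lebesgue measure zero. Absolute continuity of $\mu$ with respect to Lebesgue measure then yields $\mu(\bigcap_i \limitH^p_{c, v_i}) = 0$, contradicting the lower bound above. Hence $c$ must be an $\ell_p$-centerpoint.

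The hardest step will be the final dimension-reduction argument, particularly for the non-differentiable cases $p \in \{1, \infty\}$, where the intersection of the halfspaces can a priori be "thick" along coordinate directions. In these cases one needs to replace the gradient argument with a careful subdifferential analysis: the condition that every right directional derivative $D_{v_i}\|\cdot\|_p(u)$ vanishes, combined with $\sum_i \lambda_i v_i = 0$, still places $u$ into a set of lower dimension than $d$, exploiting the structure of $\partial \|\cdot\|_p$ at points where the norm has "kinks."
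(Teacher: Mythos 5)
Your setup mirrors the paper: the same map $F$, continuity via \Cref{lemma:continuity_halfspace_mass} and dominated convergence, restriction to a large ball using \Cref{lemma:pull_towards_zero}, and Brouwer. The place where you genuinely diverge is the core combinatorial step showing that a fixpoint of $F$ is a centerpoint, and it is exactly there that your argument has a gap.

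Your route is: pick $v_1,\dots,v_k\in U$ with positive $\lambda_i$ summing to zero, lower-bound $\mu\bigl(\bigcap_i \limitH^p_{c,v_i}\bigr)$ via inclusion--exclusion, and then claim that this intersection has Lebesgue measure zero, so absolute continuity gives a contradiction. The measure-zero claim is \emph{false} for $p=1$ (and, dually, for $p=\infty$). Here is a concrete counterexample. Take $d=3$, $p=1$, and
\[
v_1 = \tfrac{1}{\sqrt 6}(2,-1,-1), \quad v_2 = \tfrac{1}{\sqrt 6}(-1,2,-1), \quad v_3 = \tfrac{1}{\sqrt 6}(-1,-1,2).
\]
These are three unit vectors in the plane orthogonal to $(1,1,1)$, pairwise at $120^\circ$ (so none are antipodal), with $v_1+v_2+v_3=0$ and $k=3\leq d+1$. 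For any $z$ in the open positive orthant, $\partial\norm{z}_1=\{(1,1,1)\}$ and $\langle (1,1,1),v_i\rangle = 0 \geq 0$, so by \Cref{lemma:subgradients_and_containment} every such $z$ lies in $\limitH^1_{0,v_i}$ for all $i$; the same holds for the negative orthant. Hence $\bigcap_i \limitH^1_{0,v_i}$ contains two full orthants and has positive Lebesgue measure. The structural reason is that for $p=1$ the gradient $\nabla\norm{\cdot}_1 = (\sign(\cdot_i))_i$ is piecewise constant: the orthogonality condition $\nabla\norm{u}_1\perp \mathrm{span}\{v_i\}$ is an equality between a sign vector and a fixed subspace, and its solution set is a union of entire orthants rather than a lower-dimensional variety. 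Your final paragraph asserts that a ``careful subdifferential analysis'' will still place $u$ in a lower-dimensional set; the example above shows it cannot. (A related but smaller issue: for points $u$ with some coordinate zero, the one-sided directional derivatives $D_{v_i}\norm{u}_1$ need not all vanish even though $\sum_i\lambda_i v_i=0$, because $D_v$ is a maximum over subgradients and hence only \emph{super}additive in $v$, not linear. This subset has measure zero, so it is not the fatal problem, but it further undermines the claimed vanishing.)

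The paper avoids this entirely by working with a \emph{union} instead of an \emph{intersection}: \Cref{lemma:convex_combinations} shows (via the ball characterization in \Cref{obs:characterization_containment} and convexity of $\ell_p$-balls) that whenever $0\in\conv(v_1,\dots,v_k)$ with $k\leq d+1$, every $z\in\R^d$ lies in $\limitH^p_{0,-v_j}$ for some $j$, so $\bigcup_j\limitH^p_{c,-v_j}=\R^d$. Then a one-line union bound $\sum_j\mu(\limitH^p_{c,-v_j})<\frac{k}{d+1}\leq 1 = \mu(\R^d)$ gives the contradiction (\Cref{corollary:convexhull}), with no Lebesgue-measure considerations at all, and uniformly over all $p$ including $1$ and $\infty$. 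Your proof is essentially correct for $p\in(1,\infty)$ (where the norm is differentiable away from the origin, so the directional derivatives really are linear in $v$ and the dimension-reduction argument goes through), but for the boundary cases $p\in\{1,\infty\}$ you would need to replace the last step with something along the lines of the paper's \Cref{lemma:convex_combinations}.
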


A classical and simple proof of \Cref{theorem:classical_centerpoint} uses Helly's theorem~\cite{Helly1923} to show that all $\ell_2$-halfspaces containing strictly more than a $\frac{d}{d+1}$-fraction of $\mu(\R^d)$ must have a non-empty intersection. Choosing $c$ in this intersection guarantees the required property. However, this proof breaks down if we consider $\ell_p$-halfspaces: they are in general non-convex and thus Helly's theorem is no longer applicable. Instead, we prove \Cref{theorem:p-centerpoint} with Brouwer's fixpoint theorem. The existence of a proof of \Cref{theorem:classical_centerpoint} via Brouwer's fixpoint theorem seems to be folklore\footnote{Such a proof is vaguely outlined in~\cite{zivaljevic2017handbookchapter}.}, but we were unable to find a reference containing the full proof. We briefly recall Brouwer's fixpoint theorem.

\begin{theorem}[Brouwer's fixpoint theorem~\cite{brouwerUeberAbbildungMannigfaltigkeiten1911}]
Let $C\subseteq \R^d$ be a non-empty compact convex set, and let $f:C\rightarrow C$ be a continuous function. Then $f$ has a fixpoint.
\end{theorem}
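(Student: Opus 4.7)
The plan is to prove Brouwer's fixpoint theorem via Sperner's lemma, a classical combinatorial route that keeps the argument elementary and self-contained. The strategy has three stages: reduce the statement to the case of a standard simplex, establish Sperner's lemma combinatorially, and then apply it to increasingly fine triangulations before passing to a limit using compactness and continuity.

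For the reduction, I would first show that it suffices to consider the standard $d$-simplex $\Delta^d = \{x \in \R^{d+1}_{\geq 0} : \sum_i x_i = 1\}$. Every non-empty compact convex set $C \subseteq \R^d$ is homeomorphic to a closed ball of some dimension $k \leq d$: working inside the affine hull of $C$, the Minkowski functional of a small ball around a relative interior point provides a homeomorphism with the closed unit $k$-ball, and that ball is in turn homeomorphic to $\Delta^k$. Continuous self-maps transport along these homeomorphisms, so a fixpoint on $\Delta^k$ pulls back to a fixpoint of $f$ on $C$.

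On $\Delta^d$ I would set up Sperner's lemma in the usual form: given any triangulation of $\Delta^d$ whose vertices carry labels in $\{0,\ldots,d\}$ such that every vertex lying on the face spanned by a subset $S$ of the corner vertices receives a label from $S$, there exists a fully-labeled simplex. I would prove this by induction on dimension via a parity count of fully-labeled boundary faces. To apply it to $f$, I would label each vertex $v$ of the triangulation with an index $i$ satisfying $v_i > 0$ and $f(v)_i \leq v_i$; such an $i$ exists because both $v$ and $f(v)$ have barycentric coordinates summing to $1$, so if $f(v)_i > v_i$ held for every $i$ with $v_i > 0$, the coordinates of $f(v)$ would already sum to more than $1$. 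This rule respects the Sperner boundary condition, so each triangulation yields a fully-labeled simplex. Taking a sequence of triangulations whose mesh tends to $0$ and using compactness of $\Delta^d$, I would extract a subsequence along which the vertices of these chosen simplices all converge to a common point $x^\star$. By continuity, $f(x^\star)_i \leq x^\star_i$ for every $i$, and since both coordinate vectors sum to $1$, these inequalities are forced to be equalities, giving $f(x^\star) = x^\star$.

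The main obstacle will be the combinatorial proof of Sperner's lemma itself, together with a careful verification that the labeling induced by $f$ satisfies the boundary compatibility condition (this is precisely why the rule restricts the choice of $i$ to indices with $v_i > 0$). Sperner's lemma admits a clean inductive parity argument, or equivalently a path-following proof through simplices carrying certain partial labels; both approaches demand careful bookkeeping but are conceptually elementary. Once the combinatorial lemma and the labeling are in place, the compactness-plus-continuity endgame that produces the fixpoint is essentially automatic.
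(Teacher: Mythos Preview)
The paper does not prove this theorem at all; it simply states Brouwer's fixpoint theorem with a citation to Brouwer's original 1911 paper and then uses it as a black box in the proof of the $\ell_p$-centerpoint theorem. There is therefore no in-paper proof to compare against.

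That said, your proposed route via Sperner's lemma is one of the standard elementary proofs and is essentially correct as outlined. The reduction from an arbitrary non-empty compact convex $C$ to a simplex via the Minkowski functional (working in the affine hull to handle lower-dimensional $C$) is fine; the Sperner labeling $v \mapsto i$ with $v_i > 0$ and $f(v)_i \leq v_i$ is the right one, and your justification that such an $i$ exists and that the labeling respects faces is correct. The compactness-plus-continuity limit argument is also standard. One small point worth being explicit about when you write it up: in the reduction step you need that the homeomorphism between $C$ and the closed ball is genuinely a homeomorphism (continuous inverse), which follows because the Minkowski functional of a convex body with nonempty relative interior is continuous and positive away from the center; this is routine but should be stated. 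Otherwise the plan is sound, and the only real work, as you note, is the inductive parity argument for Sperner itself.
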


We will now use Brouwer's fixpoint theorem to prove \Cref{theorem:p-centerpoint}. For this, we first define a function $F:\R^d\rightarrow \R^d$ such that any fixpoint of $F$ must be an $\ell_p$-centerpoint of $\mu$. Concretely, the intuition behind this function is that any point $x$ is ``pushed'' in all directions $v$ for which $\limitH^p_{x,-v}$ does not contain a sufficient fraction of mass simultaneously. We achieve this by defining $F_i(x)$ for all $i\in [d]$ using an integral over all directions, i.e.,
\[
        F_i(x) \coloneqq x_i + \int_{S^{d - 1}} v_i \max \left( \frac{1}{d + 1}\mu(\R^d) - \mu(\limitH^p_{x, -v}), 0 \right) dv.
\]

Since we eventually want to apply Brouwer's fixpoint theorem to $F$, let us check that $F$ is continuous.

\begin{lemma}
\label{lemma:continuity_of_f}
    Let $p \in [1, \infty) \cup \{\infty\}$ be arbitrary. Let $\mu$ be a mass distribution on $\R^d$ with bounded support. Then the function $F : \R^d \rightarrow \R^d$ defined above is continuous. 
\end{lemma}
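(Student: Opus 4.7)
The plan is to deduce continuity of $F$ from the already established pointwise continuity of the map $x \mapsto \mu(\limitH^p_{x, w})$ via the dominated convergence theorem applied to the integral over $S^{d-1}$.

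First I would fix $i \in [d]$ and define the integrand
\[
    g_i(x, v) \coloneqq v_i \max\left(\tfrac{1}{d+1}\mu(\R^d) - \mu(\limitH^p_{x, -v}),\, 0\right).
\]
For each fixed $v \in S^{d-1}$, \Cref{lemma:continuity_halfspace_mass} tells us that $x \mapsto \mu(\limitH^p_{x, -v})$ is continuous in $x$, and hence so is $g_i(\cdot, v)$, since it is just a composition with the continuous maps $t \mapsto \max(\tfrac{1}{d+1}\mu(\R^d) - t, 0)$ and multiplication by the constant $v_i$. Therefore, for any sequence $x_n \to x$ in $\R^d$, we have $g_i(x_n, v) \to g_i(x, v)$ pointwise in $v \in S^{d-1}$.

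Next I would produce the dominating function required by dominated convergence. Since $\mu(\limitH^p_{x, -v}) \geq 0$ and $|v_i| \leq 1$ for all $v \in S^{d-1}$, the trivial bound
\[
    |g_i(x, v)| \leq \tfrac{1}{d+1}\mu(\R^d)
\]
holds uniformly in $x$ and $v$. Since $S^{d-1}$ has finite surface measure, this constant bound is integrable on $S^{d-1}$, so the dominated convergence theorem yields
\[
    \int_{S^{d-1}} g_i(x_n, v)\, dv \longrightarrow \int_{S^{d-1}} g_i(x, v)\, dv.
\]
This proves that $F_i(x) = x_i + \int_{S^{d-1}} g_i(x, v)\, dv$ is continuous, since it is a sum of the continuous identity coordinate $x \mapsto x_i$ and the continuous integral. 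Doing this for every $i \in [d]$ establishes continuity of $F$.

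The main obstacle is simply verifying that the hypotheses of dominated convergence are genuinely met, and in particular that \Cref{lemma:continuity_halfspace_mass} gives us pointwise continuity for \emph{every} direction $v \in S^{d-1}$ (rather than almost every) so that we may legitimately conclude pointwise convergence of $g_i(x_n, v) \to g_i(x, v)$ for all $v$. This is not really an obstacle here because \Cref{lemma:continuity_halfspace_mass} is stated for all $v \in S^{d-1}$. The role of the bounded support assumption on $\mu$ is only felt indirectly, through ensuring $\mu(\R^d) < \infty$ so that the dominating constant is finite; the compactness of $S^{d-1}$ then does the rest.
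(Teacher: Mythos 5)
Your proof is correct and follows essentially the same route as the paper: use \Cref{lemma:continuity_halfspace_mass} for pointwise convergence of the integrand along a sequence $x_n \to x$, note the uniform bound $\frac{1}{d+1}\mu(\R^d)$, and apply dominated convergence. The only difference is that you spell out the dominating function and the role of finite surface measure on $S^{d-1}$, which the paper leaves implicit.
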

\begin{proof}
    Fix $x \in \R^d$ and let $(x_n)_{n \in \N}$ be an arbitrary sequence converging to $x$. By \Cref{lemma:continuity_halfspace_mass}, the functions
    \[
        g_n(v) \coloneqq v_i \max \left( \frac{1}{d + 1}\mu(\R^d) - \mu(\limitH^p_{x_n, -v}), 0 \right) 
    \]
    converge pointwise to $g(v) = v_i \max \left( \frac{1}{d + 1}\mu(\R^d) - \mu(\limitH^p_{x, -v}), 0 \right)$ for all $i \in [d]$. Since they are also globally bounded, we can apply the dominated convergence theorem to exchange limit and integral, yielding continuity of $F$.
\end{proof}

We next turn our attention to the fixpoints of $F$. Concretely, we wish to show that any fixpoint $x$ of $F$ must be an $\ell_p$-centerpoint of $\mu$. To argue this, we analyze the set 
\[
    V_x \coloneqq \{v \in S^{d - 1} \mid \max \left( \frac{1}{d + 1}\mu(\R^d) - \mu(\limitH^p_{x, -v}), 0 \right) > 0\}
\]
of directions which contribute towards the integral. We will show that it is impossible for $\conv(V_x)$ to contain $0$ (for any $x \in \R^d$). Later, we will then use this to argue that the integral in the definition of $F$ cannot be $0$ without $V_x$ being empty. This in turn allows us to conclude that fixpoints of $F$ are $\ell_p$-centerpoints of $\mu$. 

To prove that $0\not\in\conv(V_x)$, we need the following lemma.

\begin{lemma}
\label{lemma:convex_combinations} 
    Assume that $0 \in \R^d$ is contained in the convex hull of $k \leq d + 1$ points $v_1, \dots, v_k \in S^{d - 1}$. Let $z \in \R^d$ and $p \in [1, \infty) \cup \{\infty\}$ be arbitrary. Then there exists $j \in [k]$ such that $z \in \limitH^p_{0, -v_j}$.
\end{lemma}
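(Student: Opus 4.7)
The plan is to argue by contradiction and exploit only the convexity of the $\ell_p$-norm, i.e., the triangle inequality. First I unfold the definition of $\limitH^p_{0,-v_j}$: a point $z$ lies in $\limitH^p_{0,-v_j}$ precisely when $\|z\|_p \leq \|z - \epsilon v_j\|_p$ for every $\epsilon > 0$. The degenerate case $z = 0$ is immediate because $0 \leq \|\epsilon v_j\|_p$ is trivial, so henceforth I assume $z \neq 0$.

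Suppose toward contradiction that $z \notin \limitH^p_{0,-v_j}$ for every $j \in [k]$. Then for each $j$ there exists $\epsilon_j > 0$ with $\|\epsilon_j v_j - z\|_p < \|z\|_p$. The first key step is a common-scale reduction: set $\epsilon^* \coloneqq \min_j \epsilon_j > 0$ and write
\[
    \epsilon^* v_j \;=\; \tfrac{\epsilon^*}{\epsilon_j}\,(\epsilon_j v_j) + \bigl(1 - \tfrac{\epsilon^*}{\epsilon_j}\bigr)\cdot 0
\]
as a convex combination of $\epsilon_j v_j$ and $0$. Applying the triangle inequality to $\|\cdot - z\|_p$ gives $\|\epsilon^* v_j - z\|_p < \|z\|_p$ for every $j \in [k]$, so all $k$ scaled directions now live in the open $\ell_p$-ball of radius $\|z\|_p$ around $z$.

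Next I bring in the hypothesis $0 \in \conv(v_1, \ldots, v_k)$: pick $\alpha_1, \ldots, \alpha_k \geq 0$ with $\sum_j \alpha_j = 1$ and $\sum_j \alpha_j v_j = 0$. The identity
\[
    \sum_{j=1}^{k} \alpha_j\,(\epsilon^* v_j - z) \;=\; \epsilon^* \sum_{j=1}^{k} \alpha_j v_j \;-\; z \;=\; -z
\]
combined with another application of the triangle inequality yields
\[
    \|z\|_p \;=\; \Bigl\| \sum_{j=1}^{k} \alpha_j\,(\epsilon^* v_j - z) \Bigr\|_p \;\leq\; \sum_{j=1}^{k} \alpha_j\,\|\epsilon^* v_j - z\|_p \;<\; \sum_{j=1}^{k} \alpha_j\,\|z\|_p \;=\; \|z\|_p,
\]
a contradiction.

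The only delicate point is the common-scale reduction in the second paragraph: without bringing all the $\epsilon_j$'s down to a single $\epsilon^*$, the convex combination witnessing $0 \in \conv(v_1,\dots,v_k)$ does not align with the strict inequalities I have, and the final telescoping fails. Once that reduction is in place, the proof is purely formal and uses nothing about $\ell_p$ beyond it being a norm; in particular, the bound $k \leq d+1$ plays no role in the argument itself, and is presumably stated only because it is the sharp bound supplied by Carath\'eodory's theorem whenever one wants to invoke the lemma.
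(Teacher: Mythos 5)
Your proof is correct and follows essentially the same strategy as the paper: argue by contradiction, exploit convexity of the $\ell_p$-ball (i.e., the triangle inequality), and combine the per-direction strict inequalities via the convex combination witnessing $0 \in \conv(v_1,\dots,v_k)$. The paper phrases this geometrically through \Cref{obs:characterization_containment} (the minimal $\ell_p$-ball $B_z$ around $z$ containing $0$, and open rays from $0$ in directions $v_j$) whereas you unfold the definition directly; your explicit common-scale reduction to $\epsilon^* = \min_j \epsilon_j$ cleanly spells out the step the paper compresses into ``by convexity of $B_z^\circ$,'' and you are also right that $k \leq d+1$ is never used in the argument itself.
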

\begin{proof}
    Let $R_1,\ldots,R_k$ be the open rays from $0$ in the directions $v_1,\ldots,v_k$, respectively. Let us now grow an $\ell_p$-ball $B_z$ centered at $z$, exactly until the ball contains $0$. Recall that by \Cref{obs:characterization_containment}, we have $z\in \limitH^p_{0,-v_j}$ if and only if $B^\circ_z$ does not intersect $R_j$. Now observe that if we had $B^\circ_z \cap R_j \neq \emptyset$ for all $j \in [k]$, then this would imply $0 \in B^\circ_z$ by convexity of $B^\circ_z$. But then, $B_z$ cannot have been the minimum-radius ball centered at $z$ containing $0$, a contradiction.
\end{proof}

Let us now see how this lemma implies $0\not\in\conv(V_x)$ for all $x \in \R^d$.

\begin{corollary}\label{corollary:convexhull}
    Let $p \in [1, \infty) \cup \{\infty\}$ be arbitrary and let $\mu$ be a mass distribution on $\R^d$. For any point $x \in \R^d$, the set $V_x=\{v \in S^{d - 1} \mid \max \left( \frac{1}{d + 1}\mu(\R^d) - \mu(\limitH^p_{x, -v}), 0 \right) > 0\}$ does not contain $0$ in its convex hull.
\end{corollary}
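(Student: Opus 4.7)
My plan is to derive a contradiction from the assumption $0 \in \conv(V_x)$ by showing that the halfspaces $\limitH^p_{x,-v_j}$ corresponding to a small subset of $V_x$ would have to cover $\R^d$, yet their total $\mu$-mass would be strictly less than $\mu(\R^d)$. The case $\mu(\R^d) = 0$ is trivial since then $V_x = \emptyset$, so I may assume $\mu(\R^d) > 0$.

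First, I would apply Carathéodory's theorem to pick $k \leq d+1$ vectors $v_1, \dots, v_k \in V_x \subseteq S^{d-1}$ such that $0 \in \conv(v_1, \dots, v_k)$. By the definition of $V_x$, each $v_j$ satisfies $\mu(\limitH^p_{x,-v_j}) < \frac{1}{d+1}\mu(\R^d)$.

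Next, I would translate the setup to the origin so that Lemma~\ref{lemma:convex_combinations} becomes applicable. Since the $\ell_p$-metric is translation invariant, the definition of a limit halfspace directly gives $\limitH^p_{x,-v_j} = x + \limitH^p_{0,-v_j}$. Therefore, for any $z \in \R^d$, Lemma~\ref{lemma:convex_combinations} applied to the point $z - x$ (using that $0 \in \conv(v_1,\dots,v_k)$ with $k \leq d+1$) yields some index $j \in [k]$ with $z - x \in \limitH^p_{0,-v_j}$, equivalently $z \in \limitH^p_{x,-v_j}$. Hence these $k$ halfspaces cover $\R^d$.

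Finally, I would close the argument with a standard union bound:
\[
    \mu(\R^d) \;=\; \mu\!\left(\bigcup_{j=1}^k \limitH^p_{x,-v_j}\right) \;\leq\; \sum_{j=1}^k \mu(\limitH^p_{x,-v_j}) \;<\; \frac{k}{d+1}\,\mu(\R^d) \;\leq\; \mu(\R^d),
\]
which is the desired contradiction. The only real subtlety is the translation step bridging Lemma~\ref{lemma:convex_combinations} (stated at the origin) with halfspaces based at an arbitrary point $x$; this is immediate from translation invariance but worth noting explicitly. Everything else is a direct combination of Carathéodory's theorem, the covering statement from Lemma~\ref{lemma:convex_combinations}, and subadditivity of $\mu$.
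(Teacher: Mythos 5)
Your proof is correct and follows essentially the same route as the paper: Carathéodory to reduce to $k \le d+1$ directions, Lemma~\ref{lemma:convex_combinations} to show the corresponding halfspaces cover $\R^d$, and subadditivity of $\mu$ to derive the contradiction. The only difference is that you explicitly spell out the translation step (that $\limitH^p_{x,-v_j} = x + \limitH^p_{0,-v_j}$) needed to apply the origin-based lemma at an arbitrary $x$; the paper's proof uses this silently.
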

\begin{proof}
If $0$ would lie in $\conv(V_x)$, this would mean that $V_x$ is non-empty. In particular, using Carathéodory's theorem, we would get a set of $k \leq d + 1$ vectors $v_1, \dots, v_{k} \in V_x$ with $0 \in \conv(v_1, \dots, v_{k})$. Since these vectors are in $V_x$, we get on the one hand that
    \[
        \sum_{i = 1}^k \mu(\limitH^p_{x, -v_i}) < \frac{k}{d + 1}\mu(\R^d) \leq \mu(\R^d).
    \]
On the other hand, \Cref{lemma:convex_combinations} guarantees that every $z \in \R^d$ is contained in at least one of the halfspaces $\limitH^p_{x, -v_1}, \dots, \limitH^p_{x, -v_k}$. Thus, we can derive 
    \[
        \sum_{i = 1}^k \mu(\limitH^p_{x, -v_i}) \geq \mu \left( \bigcup_{i = 1}^k \limitH^p_{x, -v_i} \right) = \mu(\R^d),
    \]
    a contradiction.    
\end{proof}

We are now ready to put all of this together to get a proof of \Cref{theorem:p-centerpoint}.
The main technical thing that we still have to do, is to restrict $F$ to a compact convex set (so that we can apply Brouwer's fixpoint theorem). 

\begin{proof}[Proof of Theorem~\ref{theorem:p-centerpoint}]
    Without loss of generality, assume that the bounded support of $\mu$ is contained in the box $[0,1]^d$. To apply Brouwer's fixpoint theorem, we need a function going from a compact convex set $C$ to itself, rather than from $\R^d$ to $\R^d$ like our function $F$. We create such a function $F_C$ by defining a large compact convex set $C\supseteq [0,1]^d$, and then restricting the function $F$ to $C$ using projection. More formally, we use the restricted function $F_C:C\rightarrow C$ with $F_C(x)$ defined as the projection of $F(x)$ onto $C$ in the direction of the origin (for any $x\in C$). As our set $C$, we choose the Euclidean ball of radius $1+2d$ around the origin, since this allows us to apply \Cref{lemma:pull_towards_zero} for all $x$ on the boundary of $C$.
    
    By continuity of $F$, it follows that $F_C$ is also continuous. Thus, we can use Brouwer's fixpoint theorem, which tells us that there exists $c \in C$ with $F_C(c) = c$. We now prove that $c$ is a centerpoint by distinguishing the two cases $F(c)=F_C(c)$ and $F(c)\neq F_C(c)$.
    
    In the first case, we must have that
    \[
        \int_{S^{d - 1}} v_i \max \left( \frac{1}{d + 1}\mu(\R^d) - \mu(\limitH^p_{c, - v}), 0 \right) dv = 0
    \]
    for all $i \in [d]$.
    Now consider the set of directions $V_c = \{v \in S^{d - 1} \mid \max \big( \frac{1}{d + 1}\mu(\R^d) - \mu(\limitH^p_{c, - v}), 0 \big) > 0\}$. By \Cref{lemma:pulling_directions_positive_measure}, this is an open subset of $S^{d - 1}$. 
    Moreover, $0$ is not contained in $\conv(V_c)$ by \Cref{corollary:convexhull}. However, this implies that $V_c$ must be empty: if $V_c$ were non-empty, we could find a separating hyperplane between the two convex sets $\{0 \}$ and $\conv(V_c)$. But then, the definition of $V_c$ and its openness contradicts that the above integral is zero (there would be a non-zero net push in the direction of the normal vector defining the separating hyperplane). 
    Therefore, since $V_c$ must be empty, the expression $\max( \frac{1}{d + 1}\mu(\R^d) - \mu(\limitH^p_{c, -v}), 0 )$ must be $0$ for all $v \in S^{d - 1}$, and $c$ is a centerpoint. 

    In the second case, $F(c)\neq F_C(c)$, we know that $F_C$ ``used'' the projection to map $c$. We thus know that $c=F_C(c)$ lies on the boundary of $C$. Furthermore, since the projection onto $C$ is towards the origin, we get $F(c)=(1+\epsilon)c$ for some $\epsilon>0$. Concretely, we must have
    \[
        \int_{S^{d - 1}} v_i \max \left( \frac{1}{d + 1}\mu(\R^d) - \mu(\limitH^p_{c, - v}), 0 \right) dv = \epsilon c_i
    \]
    for all $i \in [d]$.
    Considering the set $V_c$ again, we can see that there hence must exist some $\delta > 0$ with $\delta c \in\conv (V_c)$: otherwise, we could find a hyperplane separating $\conv (V_c)$ from the line segment $[0, \frac{c}{\norm{c}_2}]$, yielding a contradiction. 
    But since $C$ was chosen large enough, we also get $-\frac{c}{||c||_2}\in V_c$ by \Cref{lemma:pull_towards_zero}.
    Thus, we see that $0\in\conv (V_c)$. This contradicts \Cref{corollary:convexhull}, and hence this case cannot occur. We conclude that every fixpoint of $F_C$ is a fixpoint of $F$, and thus every fixpoint of $F_C$ must be a centerpoint.
\end{proof}

\Cref{theorem:classical_centerpoint} can be adapted to point sets instead of mass distributions (see \Cref{theorem:discrete_p-centerpoint} below). The proof is not difficult but a bit technical, which is why we defer it to \Cref{sec:appendixProofCenterpoint}. The main idea is to put a ball of small radius around each point and to apply \Cref{theorem:p-centerpoint}. Letting the radius go to zero, we obtain a sequence of centerpoints. A subsequence of this sequence must converge to a discrete $\ell_p$-centerpoint of the point set.

\begin{restatable}[$\ell_p$-Centerpoint Theorem for Finite Point Sets]{theorem}{theoremdiscretepcenterpoint}
\label{theorem:discrete_p-centerpoint}
    Let $p \in [1, \infty) \cup \{\infty\}$ be arbitrary, and let $P \subseteq \R^d$ be a finite set of points. There exists a point $c$ such that $|\limitH^p_{c, v} \cap P| \geq \frac{|P|}{d + 1}$ for all $v \in S^{d - 1}$. 
\end{restatable}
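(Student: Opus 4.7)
The plan is to approximate the point set $P$ by a sequence of continuous mass distributions with shrinking support, apply the continuous centerpoint theorem (\Cref{theorem:p-centerpoint}) to each, and show that a subsequence of the resulting continuous centerpoints converges to a discrete centerpoint of $P$.

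Concretely, for each $n \in \N$ choose $r_n = 1/n$, and let $\mu_n$ be the measure that on each $\ell_2$-ball $B^2(p, r_n)$ for $p \in P$ equals the Lebesgue measure rescaled so that $\mu_n(B^2(p, r_n)) = 1$, and is zero elsewhere. For $n$ large enough the balls are pairwise disjoint, so $\mu_n(\R^d) = |P|$. By \Cref{theorem:p-centerpoint}, there exists $c_n \in \R^d$ with $\mu_n(\limitH^p_{c_n, v}) \geq \frac{|P|}{d + 1}$ for every $v \in S^{d - 1}$. After translating we may assume $P \subseteq [0, 1]^d$; then the sequence $(c_n)$ is bounded, because if $\norm{c_n}_2$ grew arbitrarily large, choosing $v = c_n / \norm{c_n}_2$ would make the angle $\measuredangle(\overrightarrow{c_n z}, v)$ arbitrarily close to $\pi$ for every $z \in \supp(\mu_n)$, and \Cref{lem:anglesonlimithalfspaces} would then force $\supp(\mu_n) \cap \limitH^p_{c_n, v} = \emptyset$, contradicting the centerpoint property. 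By Bolzano--Weierstrass, pass to a subsequence along which $c_n \to c$ for some $c \in \R^d$.

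To see that $c$ is a discrete $\ell_p$-centerpoint of $P$, fix any $v \in S^{d - 1}$ and set $S_n \coloneqq \{ p \in P \mid B^2(p, r_n) \cap \limitH^p_{c_n, v} \neq \emptyset \}$. Since only balls corresponding to points in $S_n$ can contribute to $\mu_n(\limitH^p_{c_n, v})$ and each contributes at most mass $1$, we have $|S_n| \geq \mu_n(\limitH^p_{c_n, v}) \geq \frac{|P|}{d + 1}$. Because $P$ has only finitely many subsets, by pigeonhole there is some fixed $S \subseteq P$ with $|S| \geq \frac{|P|}{d + 1}$ that equals $S_n$ infinitely often; restrict to this sub-subsequence. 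For each $p \in S$, pick $z_n \in B^2(p, r_n) \cap \limitH^p_{c_n, v}$ along the sub-subsequence, so that $z_n \to p$. The defining inequalities $\norm{c_n - z_n}_p \leq \norm{c_n - \epsilon v - z_n}_p$ hold for every $\epsilon > 0$ and every $n$; fixing $\epsilon$ and sending $n \to \infty$ yields $\norm{c - p}_p \leq \norm{c - \epsilon v - p}_p$, and then quantifying over $\epsilon > 0$ gives $p \in \limitH^p_{c, v}$. Hence $S \subseteq \limitH^p_{c, v} \cap P$, and therefore $|\limitH^p_{c, v} \cap P| \geq \frac{|P|}{d + 1}$.

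The main obstacle is the interaction between the ``for all $\epsilon > 0$'' quantifier in the definition of $\limitH^p_{c, v}$ and the limit $n \to \infty$ of the shifting approximate halfspace memberships $z_n \in \limitH^p_{c_n, v}$. This is resolved by noting that the halfspace inequality is a conjunction of pointwise inequalities (one per $\epsilon$), so we may take the limit in $n$ first, with $\epsilon$ fixed, and quantify over $\epsilon$ only afterwards. A secondary subtlety is that the pigeonhole set $S$ depends on $v$, but the limit point $c$ itself does not, so the argument yields the centerpoint property of $c$ simultaneously for all $v \in S^{d-1}$.
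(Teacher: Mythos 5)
Your proof follows the same overall strategy as the paper's: smear each point of $P$ into a small ball, apply \Cref{theorem:p-centerpoint} to the resulting mass distribution, extract a convergent subsequence of centerpoints by Bolzano--Weierstrass, and show the limit $c$ works. Two details differ from the paper's version. For boundedness of the centerpoint sequence, the paper appeals to \Cref{obs:centerpoint_in_bounding_box}, which is stated only for $p \in [1, \infty)$; you instead use \Cref{lem:anglesonlimithalfspaces}, which covers all $p \in [1, \infty) \cup \{\infty\}$ uniformly — a modest but real improvement, since the theorem is claimed for $p = \infty$ as well. For the final limit step, the paper fixes $v$, takes a point $z \in P \setminus \limitH^p_{c,v}$, extracts a slack $\delta > 0$, and shows via a chain of triangle inequalities that for $r$ small enough the entire ball around $z$ misses $\limitH^p_{c^{(r)},v}$, so such $z$ eventually stops contributing mass; you instead pigeonhole over the finitely many possible contributing subsets $S_n \subseteq P$ to fix a set $S$ with $|S| \geq |P|/(d+1)$ along a sub-subsequence, then show each point of $S$ lies in $\limitH^p_{c,v}$ by passing to the limit in the defining inequality at each fixed $\epsilon$ and only afterwards quantifying over $\epsilon$. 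Both arguments are correct; yours is arguably a bit cleaner since it proves membership rather than non-membership and avoids the quantitative $\delta$-bookkeeping, and your closing remark that the sub-subsequence may depend on $v$ while $c$ does not is exactly the right observation.
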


Finally, we observe that unless $p=\infty$, the centerpoint in both the discrete and continuous setting must lie inside any axis-aligned bounding box of the point set $P$ or the support of $\mu$, respectively.

\begin{lemma}
\label{obs:centerpoint_in_bounding_box}
    Let $p\in [1,\infty)$. Let $B$ be an axis-aligned bounding box that contains the support of $\mu$ (in the case of \Cref{theorem:p-centerpoint}) or all of $P$ (in the case of \Cref{theorem:discrete_p-centerpoint}), respectively. Any $\ell_p$-centerpoint guaranteed by either theorem must lie inside $B$.
\end{lemma}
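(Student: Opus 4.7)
The plan is to use Lemma~\ref{lemma:axis_aligned_halfspace}, which tells us that for $p\in[1,\infty)$ and any direction $v$ parallel to a coordinate axis, the $\ell_p$-halfspace coincides with the Euclidean one: $\limitH^p_{c,v}=\limitH^2_{c,v}$. For $v=e_i$, a short calculation (or the standard characterization of Euclidean halfspaces) gives $\limitH^2_{c,e_i}=\{z\in\R^d\mid z_i\geq c_i\}$ and $\limitH^2_{c,-e_i}=\{z\in\R^d\mid z_i\leq c_i\}$. I will use only these $2d$ axis-aligned halfspaces.

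Write $B=\prod_{i=1}^d[a_i,b_i]$. Suppose for contradiction that some centerpoint $c$ guaranteed by Theorem~\ref{theorem:p-centerpoint} or Theorem~\ref{theorem:discrete_p-centerpoint} lies outside $B$. Then there exists some coordinate $i$ such that either $c_i>b_i$ or $c_i<a_i$. In the first case, take $v=e_i$; by Lemma~\ref{lemma:axis_aligned_halfspace},
\[
\limitH^p_{c,e_i}=\{z\in\R^d\mid z_i\geq c_i\}.
\]
Every point in $B$ satisfies $z_i\leq b_i<c_i$, hence this halfspace is disjoint from $B$. In the discrete setting this immediately forces $|\limitH^p_{c,e_i}\cap P|=0$, which contradicts $|\limitH^p_{c,e_i}\cap P|\geq|P|/(d+1)$ (assuming $P$ nonempty). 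In the continuous setting, $\mu$ is supported inside $B$, so $\mu(\limitH^p_{c,e_i})=0$, contradicting $\mu(\limitH^p_{c,e_i})\geq\mu(\R^d)/(d+1)$ (assuming $\mu(\R^d)>0$). The case $c_i<a_i$ is completely symmetric using $v=-e_i$.

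The argument is short and the only ``obstacle'' is really a sanity check: one needs $p<\infty$ to invoke Lemma~\ref{lemma:axis_aligned_halfspace}. Indeed, for $p=\infty$ we only have $\limitH^\infty_{c,e_i}\supsetneq\limitH^2_{c,e_i}$, and the strict containment can bring parts of $B$ back into the halfspace even when $c_i>b_i$, so the contradiction above is lost; this is consistent with the statement of the lemma, which explicitly excludes $p=\infty$. No other subtleties arise, and no new machinery beyond Lemma~\ref{lemma:axis_aligned_halfspace} and the defining centerpoint inequalities is needed.
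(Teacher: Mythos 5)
Your proof is correct and takes essentially the same approach as the paper: both invoke \Cref{lemma:axis_aligned_halfspace} to reduce to an axis-aligned Euclidean halfspace that avoids the bounding box, contradicting the centerpoint guarantee. Your version is merely more explicit about the two symmetric cases and about the $p=\infty$ exclusion, but the argument is identical in substance.
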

\begin{proof}
    Towards a contradiction, let $c$ be an $\ell_p$-centerpoint of $\mu$ (or $P$, respectively) not contained in $B$. Assume without loss of generality that every point $x$ in the support of $\mu$ (or in $P$) satisfies $x_1 < c_1$. Then, by \Cref{lemma:axis_aligned_halfspace}, we have that $\mu(\limitH^p_{c, e_1})=0$ (or $\limitH^p_{c, e_1} \cap P = 0$), where $e_1$ is the first standard unit vector. This contradicts the centerpoint property of $c$.
\end{proof}

\subsection{Tightness of Centerpoint Theorems}\label{ssec:tightness}

We want to remark that the fraction $\frac{1}{d + 1}$ in \Cref{theorem:p-centerpoint} and \Cref{theorem:discrete_p-centerpoint} is tight for all $p\in (1,\infty)$. To prove this, we use a construction that has also been used to prove tightness of the classical Euclidean centerpoint theorem. We will restrict ourselves to the discrete setting of \Cref{theorem:discrete_p-centerpoint}, but tightness of \Cref{theorem:p-centerpoint} follows as well because any better bound for mass distributions could be used to get a better bound for point sets by following the proof of \Cref{theorem:discrete_p-centerpoint} (see \Cref{sec:appendixProofCenterpoint}).

\begin{lemma}
    For every $d \in \N$ and every $p\in (1,\infty)$, there exists a point set $P_d \subseteq [0, 1]^d$, such that every point $c \in \R^d$ has a direction $v \in S^{d - 1}$ with $|\limitH^p_{c,v} \cap P_d|\leq \frac{|P_d|}{d+1}$.
\end{lemma}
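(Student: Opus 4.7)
The plan is to use the vertices of the standard simplex: set $P_d := \{0, e_1, e_2, \ldots, e_d\} \subseteq [0,1]^d$, where $e_i$ denotes the $i$-th standard unit vector. Since $|P_d| = d+1$, my goal is to exhibit, for every $c \in \R^d$, some direction $v \in S^{d-1}$ with $|\limitH^p_{c,v} \cap P_d| \leq 1$.

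For most cases of $c$, I would use axis-aligned directions and invoke \Cref{lemma:axis_aligned_halfspace} to identify the $\ell_p$-halfspace with a classical Euclidean halfspace, reducing the analysis to a coordinate comparison. Concretely, if $c_i > 1$ for some $i$, then $\limitH^p_{c, e_i} = \{z : z_i \geq c_i\}$ contains no point of $P_d$; symmetrically, $c_i < 0$ is handled by $v = -e_i$. If $c \in [0,1]^d$ and $c \neq 0$, pick any index $i$ with $c_i > 0$ and set $v = e_i$: the halfspace $\{z : z_i \geq c_i\}$ then contains exactly the single point $e_i$, as every other element of $P_d$ has $i$-th coordinate $0 < c_i$.

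The only remaining case, and the main obstacle, is $c = 0$ with $d \geq 2$: here every axis-aligned halfspace through the origin captures at least $d$ of the $d+1$ points, so the coordinate trick breaks down. To deal with this, I would pick the diagonal direction $v := -\tfrac{1}{\sqrt{d}}(1, \ldots, 1)$. We have $0 \in \limitH^p_{0,v}$ trivially from the definition, so it suffices to check $e_i \notin \limitH^p_{0,v}$ for each $i$, that is, to exhibit some $\epsilon > 0$ with $\|e_i + \epsilon v\|_p < 1$. For $\epsilon \in (0, \sqrt{d})$ a direct computation yields
\[
    \|e_i + \epsilon v\|_p^p = \Bigl(1 - \tfrac{\epsilon}{\sqrt{d}}\Bigr)^{p} + (d-1)\Bigl(\tfrac{\epsilon}{\sqrt{d}}\Bigr)^{p}.
\]
Taylor-expanding, the first term equals $1 - \tfrac{p\epsilon}{\sqrt{d}} + O(\epsilon^2)$ while the second is $O(\epsilon^p) = o(\epsilon)$, using $p > 1$. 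Hence the sum drops strictly below $1$ for sufficiently small $\epsilon > 0$, giving $e_i \notin \limitH^p_{0,v}$. This yields $\limitH^p_{0,v} \cap P_d = \{0\}$, completing the construction. Note that this sub-case is precisely where the hypothesis $p \in (1,\infty)$ is essential: for $p = 1$, the second term becomes linear in $\epsilon$ and may dominate the first-order drop of the first term, so the estimate fails, which is consistent with the lemma statement.
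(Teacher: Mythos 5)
Your proof is correct and takes essentially the same approach as the paper: the same point set $P_d = \{0, e_1, \dots, e_d\}$, axis-aligned directions via \Cref{lemma:axis_aligned_halfspace} for $c \neq 0$, and the diagonal direction $-\tfrac{1}{\sqrt{d}}(1,\dots,1)$ for $c = 0$. You simply spell out the Taylor computation that the paper leaves as ``one can check by calculation,'' including the useful observation of exactly where $p > 1$ is needed.
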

\begin{proof}
    For $p\in (1,\infty)$, let $P_d=\{ 0,e_1,\ldots,e_d\} \subseteq [0, 1]^d$, where $e_i$ is the $i$-th standard unit vector. Let $c\neq 0$ be arbitrary. Then $c$ cannot lie on all axis-aligned facets of the convex hull of $P_d$ simultaneously. Thus, by \Cref{lemma:axis_aligned_halfspace} there must exist a direction $v\in\{-e_1,\ldots,-e_d,e_1,\ldots,e_d\}$ such that $\limitH^p_{c,v}$ does not contain any vertices of a facet that $c$ does not lie on, and therefore $|\limitH^p_{c,v}\cap P_d|\leq 1 = \frac{|P_d|}{d+1}$.
    If we instead have $c=0$, then one can check by calculation that $\limitH^p_{c,v}\cap P_d = \{0\}$ for $v=(-\sqrt{\nicefrac{1}{d}},\ldots,-\sqrt{\nicefrac{1}{d}}) \in S^{d - 1}$.
\end{proof}

Unfortunately, this construction does not work for $p \in \{1, \infty\}$, so determining if better centerpoints can be guaranteed in these cases remains open.

\section{Finding Fixpoints of \texorpdfstring{$\ell_p$}{lp}-Contraction Maps}\label{sec:algorithms}

In this section, we describe our algorithms for the continuous problem \continuousProblem and explain how, in the $\ell_1$-case, the algorithm can be adapted to the discretized setting \gridToGrid{1}. Note that an algorithm for \continuousProblem[\infty] as well as a suitable rounding strategy, implying an algorithm for \gridToGrid{\infty}, has already been provided by Chen, Li, and Yannakakis~\cite{chenComputingFixedPoint2024}. We will therefore restrict our attention to $p\in[1,\infty)$.

\subsection{Solving \texorpdfstring{\continuousProblem}{p-ContractionFixpoint}}\label{ssec:continuous_algorithms}

Our algorithm works as follows: we maintain a search space $M$ that is guaranteed  to always contain the fixpoint $x^\star$. At the beginning of the algorithm, we simply set $M=[0,1]^d$. We then iteratively query the centerpoint $c$ of our remaining search space $M$. Concretely, we use the measure $\vol$ (which we simply call \emph{volume}) defined by the Lebesgue measure (such that $\vol([0,1]^d)=1$), with its support restricted to $M$.
Whenever we query the centerpoint $c$ of $M$, we get to discard at least a $\frac{1}{d + 1}$-fraction of the search space, because $x^\star$ must lie closer to $f(c)$ than to $c$ itself. With each query, the volume of $M$ is thus multiplied with a factor of at most $\frac{d}{d + 1}$. We terminate once we happen to query an $\epsilon$-approximate fixpoint. 
It remains to prove that this must happen before the search space $M$ gets too small. To that end, we show in the next lemma that, whenever we query a point that is not an $\epsilon$-approximate fixpoint, a ball of some radius $r_{\epsilon, \lambda}$ around $x^\star$ cannot be discarded, and thus has to remain in the search space $M$. 

Note that while the centerpoint theorems use limit $\ell_p$-halfspaces, we will now use bisector halfspaces in the analysis of our algorithm. \Cref{obs:bisectorSubsetLimit} allows us to translate between the two.

\begin{lemma}\label{lem:ballaroundfixpointstays}
    Let $p \in [1, \infty)$ be arbitrary. Let $x^\star$ be the unique fixpoint of the $\lambda$-contracting map $f : [0,1]^d \rightarrow [0,1]^d$. Let $x \in [0, 1]^d$ be arbitrary. If $x$ is not an $\epsilon$-approximate fixpoint of $f$, then 
    \[
        B^p(x^\star, r_{\epsilon, \lambda}) \cap \bisecH^p_{x,f(x)}=\emptyset
    \]
    for $r_{\epsilon, \lambda} = \frac{\epsilon - \epsilon \lambda}{2+2\lambda}$.
\end{lemma}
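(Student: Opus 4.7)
The proof will be a direct triangle-inequality computation. The plan is to show the contrapositive-flavored statement: for any $z\in B^p(x^\star,r_{\epsilon,\lambda})$, we have $\|x-z\|_p > \|f(x)-z\|_p$, which by definition of the bisector halfspace means $z\notin \bisecH^p_{x,f(x)}$. The two ingredients I need are (i) a lower bound on $\|x-z\|_p$ in terms of $\|x-x^\star\|_p$, and (ii) an upper bound on $\|f(x)-z\|_p$ in terms of $\|x-x^\star\|_p$, so that I can compare them.

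First, I would apply the triangle inequality twice: $\|x-z\|_p \geq \|x-x^\star\|_p - \|z-x^\star\|_p \geq \|x-x^\star\|_p - r_{\epsilon,\lambda}$, and $\|f(x)-z\|_p \leq \|f(x)-f(x^\star)\|_p + \|x^\star - z\|_p \leq \lambda\|x-x^\star\|_p + r_{\epsilon,\lambda}$, where in the last step I use $f(x^\star)=x^\star$ together with the $\lambda$-contraction property. Subtracting, the required strict inequality $\|x-z\|_p > \|f(x)-z\|_p$ reduces to $(1-\lambda)\|x-x^\star\|_p > 2r_{\epsilon,\lambda}$.

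Next, I would use the assumption that $x$ is not an $\epsilon$-approximate fixpoint, i.e., $\|f(x)-x\|_p > \epsilon$. Combined with the triangle inequality $\|f(x)-x\|_p \leq \|f(x)-f(x^\star)\|_p + \|x^\star - x\|_p \leq (1+\lambda)\|x-x^\star\|_p$, this gives the strict bound $\|x-x^\star\|_p > \tfrac{\epsilon}{1+\lambda}$. Plugging in $r_{\epsilon,\lambda}=\tfrac{\epsilon(1-\lambda)}{2(1+\lambda)}$ yields $(1-\lambda)\|x-x^\star\|_p > \tfrac{\epsilon(1-\lambda)}{1+\lambda} = 2r_{\epsilon,\lambda}$, exactly closing the argument.

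There is no real obstacle here; the whole proof is two applications of the triangle inequality plus the contraction property. The only thing to be careful about is keeping the inequality strict, which is why I need the non-approximate-fixpoint hypothesis to give a \emph{strict} lower bound on $\|x-x^\star\|_p$, which then gives a \emph{strict} gap between $\|x-z\|_p$ and $\|f(x)-z\|_p$. This strictness is what ensures the intersection with the closed halfspace $\bisecH^p_{x,f(x)}$ is empty (and not just that $z$ sits on its boundary).
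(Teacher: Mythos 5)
Your proof is correct and uses the same ingredients as the paper's: two triangle-inequality bounds to express $\|x-z\|_p$ and $\|f(x)-z\|_p$ in terms of $\|x-x^\star\|_p$, the contraction property at $x^\star$, and the non-approximate-fixpoint hypothesis to obtain the strict lower bound $\|x-x^\star\|_p > \epsilon/(1+\lambda)$. Your presentation, reducing the goal directly to $(1-\lambda)\|x-x^\star\|_p > 2r_{\epsilon,\lambda}$, is a slightly cleaner rearrangement of the same chain of inequalities the paper writes out.
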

\begin{proof}   
    Let $z \in B^p(x^\star, r_{\epsilon, \lambda})$ be arbitrary. We need to show $||z-x||_p > ||z -f(x)||_p$. Let us first collect some facts. Concretely, we know that $x$ is not an $\epsilon$-approximate fixpoint, $z\in B^p(x^\star,r_{\epsilon,\lambda})$, and that $f$ is $\lambda$-contracting. This gives us the following inequalities:
    \begin{align}
        ||f(x) - x||_p &> \epsilon \label{ineq:noepsapproximate}\\
        ||z - x^\star||_p &\leq r_{\epsilon, \lambda} \\
        ||x^\star-f(x)||_p &\leq \lambda ||x^\star-x||_p.\label{ineq:contractivefunction}
    \end{align}
    Combining these using the triangle inequality, we also get
    \begin{align}
        ||z - x||_p &\geq ||x^\star - x||_p - r_{\epsilon, \lambda} \label{ineq:fprimeinball}\\
        ||z - f(x)||_p &\leq ||x^\star-f(x)||_p + r_{\epsilon, \lambda}.\label{ineq:45}
    \end{align}
    Before putting everything together, we also want to lower bound $||z - x||_p$. We first lower bound $||x^\star - x||_p$ using the calculation
    \begin{equation}
        \epsilon \overset{(\ref{ineq:noepsapproximate})}{<} ||x - f(x)||_p \leq ||x^\star-x||_p + ||x^\star - f(x)||_p \overset{(\ref{ineq:contractivefunction})}{\leq} (1+\lambda) ||x^\star-x||_p. \label{ineq:fxlowerbound}
    \end{equation}
    Combining \Cref{ineq:fprimeinball,ineq:fxlowerbound}, we get
    \begin{equation}
        ||z-x||_p > \frac{\epsilon}{1+\lambda} - r_{\epsilon, \lambda}.\label{ineq:zxbound}
    \end{equation}
    We can now put everything together to obtain
    \begin{align*}
    ||z-f(x)||_p \overset{(\ref{ineq:45})}&{\leq} ||x^\star-f(x)||_p + r_{\epsilon, \lambda} \\
        \overset{(\ref{ineq:contractivefunction})}&{\leq} \lambda ||x^\star-x||_p + r_{\epsilon, \lambda}  \\
        \overset{(\ref{ineq:fprimeinball})}&{\leq} \lambda ||z-x||_p + (1+\lambda) r_{\epsilon, \lambda} \\
    &= ||z - x||_p - (1-\lambda) ||z-x||_p + (1+\lambda) r_{\epsilon, \lambda}\\
    \overset{(\ref{ineq:zxbound})}&{<} ||z-x||_p - (1-\lambda) \left(\frac{\epsilon}{1+\lambda}-r_{\epsilon, \lambda}\right) + (1+\lambda) r_{\epsilon, \lambda} \\
    &= ||z-x||_p + 2r_{\epsilon, \lambda} - \frac{\epsilon - \epsilon \lambda}{1+\lambda} \\
    &= ||z-x||_p,
    \end{align*}
    and we conclude $||z-x||_p>||z-f(x)||_p$ and thus $z \notin \bisecH^p_{x,f(x)}$.
\end{proof}

We now have all the ingredients to prove \Cref{thm:mainContinuous}.

\begin{theorem}\label{thm:mainContinuous}
    For every $p \in [1, \infty) \cup \{ \infty\}$, an $\epsilon$-approximate fixpoint of a $\lambda$-contracting (in $\ell_p$-norm) function $f:[0,1]^d\rightarrow [0,1]^d$ can be found using $\bigO(d^2 ( \log\frac{1}{\epsilon} + \log\frac{1}{1-\lambda}))$ queries.
\end{theorem}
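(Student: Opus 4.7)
The plan is to dispatch $p = \infty$ by invoking the algorithm of Chen, Li, and Yannakakis~\cite{chenComputingFixedPoint2024}, and to focus the new work on $p \in [1, \infty)$. For these cases I would formalize the centerpoint algorithm sketched at the start of this subsection: maintain a search region $M \subseteq [0,1]^d$, initialized to $M \coloneqq [0,1]^d$; at each step let $\mu_M$ denote the Lebesgue measure restricted to $M$, pick an $\ell_p$-centerpoint $c$ of $\mu_M$ via \Cref{theorem:p-centerpoint}, query $f(c)$, and either terminate (if $\|c - f(c)\|_p \leq \epsilon$) or update $M \leftarrow M \setminus \bisecH^p_{c, f(c)}$. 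By \Cref{obs:centerpoint_in_bounding_box}, $c \in [0,1]^d$, so the query is legal; correctness is immediate since contraction yields $\|x^\star - f(c)\|_p \leq \lambda \|x^\star - c\|_p < \|x^\star - c\|_p$ whenever $c \neq x^\star$, placing $x^\star$ outside the discarded bisector.

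The analytic heart of the argument is a pair of matching volume bounds. For the shrinkage bound, set $v \coloneqq (c - f(c)) / \|c - f(c)\|_2$; by \Cref{obs:bisectorSubsetLimit}, $\limitH^p_{c, v} \subseteq \bisecH^p_{c, f(c)}$, so the centerpoint property applied to $\mu_M$ gives $\vol(\bisecH^p_{c, f(c)} \cap M) \geq \frac{1}{d + 1} \vol(M)$, meaning $\vol(M)$ is multiplied by at most $\frac{d}{d+1}$ per iteration. For the lower bound, I would invoke \Cref{lem:ballaroundfixpointstays}: as long as no queried point has been an $\epsilon$-approximate fixpoint, every discarded bisector avoids $B^p(x^\star, r_{\epsilon, \lambda})$ with $r_{\epsilon, \lambda} = \epsilon(1-\lambda)/(2(1+\lambda))$, so $M \supseteq B^p(x^\star, r_{\epsilon, \lambda}) \cap [0,1]^d$ throughout. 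Since $x^\star \in [0,1]^d$ and $B^1(0, r) \subseteq B^p(0, r)$ for $p \geq 1$, this intersection has volume at least $r_{\epsilon, \lambda}^d / d!$.

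Combining the two, termination must occur after at most $N$ iterations, where $(d/(d+1))^N < r_{\epsilon, \lambda}^d / d!$. Using $\log(1 + 1/d) \geq 1/(d+1)$ and $\log(d!) = \bigO(d \log d)$, this yields
\[
    N = \bigO\!\left( d^2 \log d + d^2 \log \frac{1}{\epsilon} + d^2 \log \frac{1}{1 - \lambda} \right).
\]
To absorb the spurious $\log d$, I would interleave the centerpoint algorithm with plain Banach iteration: on $[0,1]^d$, whose $\ell_p$-diameter is at most $d$, Banach produces an $\epsilon$-approximate fixpoint in $\bigO(\log(d/\epsilon)/\log(1/\lambda))$ queries. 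In the regime where $\log d$ dominates $\log \frac{1}{\epsilon} + \log \frac{1}{1-\lambda}$, one has $1 - \lambda \geq 1/d$, hence $\log(1/\lambda) \geq 1/d$, and Banach's bound collapses to $\bigO(d \log d)$, which fits within the target $\bigO(d^2(\log \frac{1}{\epsilon} + \log \frac{1}{1-\lambda}))$; in the complementary regime the centerpoint bound already absorbs the $\log d$ term.

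The main conceptual obstacle I anticipate is the translation between the bisector halfspace that emerges naturally from the contraction inequality and the limit halfspace to which \Cref{theorem:p-centerpoint} is tied; this is handled cleanly by \Cref{obs:bisectorSubsetLimit}, but it is the one step where the non-convex, non-Euclidean geometry of $\ell_p$-halfspaces has to be confronted honestly. The remaining ingredients (the ball-volume lower bound via $B^1 \subseteq B^p$ and the Banach cleanup of the $\log d$ factor) are technical but routine.
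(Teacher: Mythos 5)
Your proposal is correct and follows essentially the same route as the paper's proof: query the $\ell_p$-centerpoint of the remaining search region via \Cref{theorem:p-centerpoint}, discard the bisector halfspace $\bisecH^p_{c,f(c)}$ using \Cref{obs:bisectorSubsetLimit} to transfer the centerpoint guarantee, lower-bound the surviving volume by a ball around $x^\star$ via \Cref{lem:ballaroundfixpointstays}, and clean up the residual $\log d$ factor with Banach iteration in the regime $\max(1/\epsilon, 1/(1-\lambda)) < d$. The only (inconsequential) deviation is your slightly more careful ball-volume bound $r_{\epsilon,\lambda}^d/d!$, which accounts for the ball possibly protruding outside $[0,1]^d$ by taking just one orthant of the inscribed $\ell_1$-ball, whereas the paper uses the full $\ell_p$-ball volume $\frac{2^d}{d!}r_{\epsilon,\lambda}^d$; both yield the same asymptotic query count.
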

\begin{proof}
    Recall that the case $p=\infty$ has been solved by Chen, Li, and Yannakakis~\cite{chenComputingFixedPoint2024}.

    We begin with the search space $M=[0,1]^d$ with $\vol(M)=1$. We then repeatedly query the centerpoint $c$ of $M$ (or rather the measure induced by $M$), guaranteed to exist by \Cref{theorem:p-centerpoint}, and guaranteed to lie inside $[0,1]^d$ by \Cref{obs:centerpoint_in_bounding_box}. We terminate if $c$ is an $\epsilon$-approximate fixpoint. Otherwise, we remove the bisector halfspace $H_{c,f(c)}$ from $M$. 
    For each non-terminating query, we get the guarantee that the volume of $M$ decreases to at most a $\frac{d}{d+1}$-fraction of its previous volume, where we use \Cref{obs:bisectorSubsetLimit} to translate the centerpoint guarantee of \Cref{theorem:p-centerpoint} from limit halfspaces to bisector halfspaces. 
    
    Now recall that \Cref{lem:ballaroundfixpointstays} guarantees that $B^p(x^*,r_{\epsilon,\lambda})$ stays in $M$ as long as we do not query an $\epsilon$-approximate fixpoint. Thus, we know that we must terminate before having reached $\vol(M) < \vol(B^p(0,r_{\epsilon, \lambda}))$. Therefore, to compute our final query bound, we first note that it is well-known that the volume of the $\ell_p$-norm ball $B^p(0,r_{\epsilon, \lambda})$ can be bounded from below by 
    \[
        \vol(B^p_{0,r_{\epsilon, \lambda}}) \geq \frac{2^d}{d!}r_{\epsilon, \lambda}^d = \frac{2^d}{d!}(\frac{\epsilon - \epsilon\lambda}{2+2\lambda})^d.
    \]
    Finally, we can upper bound the number $k$ of queries by
    \[
        k\leq \log_{\frac{d}{d+1}}\left(\frac{2^d}{d!}\left(\frac{\epsilon - \epsilon\lambda}{2+2\lambda}\right)^d\right) 
        = \frac{\log\left(\frac{2^d}{d!}(\frac{\epsilon - \epsilon\lambda}{2+2\lambda})^d\right)}{\log\left(\frac{d}{d+1}\right)}
        =  \frac{\log\left(\frac{d!}{2^d}(\frac{2}{\epsilon}\frac{1 + \lambda}{1 - \lambda})^d\right)}{\log\left(\frac{d + 1}{d}\right)}
    \]
    \[
        \leq \frac{\log d! - d + d\left(\log 4 + \log (\frac{1}{\epsilon}) + \log (\frac{1}{1-\lambda})\right)}{\log(\frac{d+1}{d})}
    \]
    \[
        \leq \frac{\bigO \left( d \left(\log d + \log(\frac{1}{\epsilon}) + \log(\frac{1}{1 - \lambda})\right) \right)}{\log(1+\frac{1}{d})}.
    \]
    For small $\delta > 0$, we have $\log(1+\delta)\approx \delta$, and thus we get $k \leq \bigO(d^2(\log d + \log(\frac{1}{\epsilon}) + \log(\frac{1}{1-\lambda}))$. 
    To get rid of the $\log d$ term, observe that if $\max (\frac{1}{ \epsilon}, \frac{1}{1-\lambda}) < d$ (i.e., whenever this term matters), a simple iteration algorithm can find an $\epsilon$-approximate fixpoint after at most $\bigO(d \log d)$ queries. In particular, let $x^{(0)} \in [0, 1]^d$ be arbitrary and consider the recursively defined iterates $x^{(i)} \coloneqq f(x^{(i - 1)})$ for $i \geq 1$. Since $f$ is contracting, we have $\norm{x^{(i)} - f(x^{(i)})}_p \leq \lambda \norm{x^{(i - 1)} - f(x^{(i - 1)})}_p$ for all $i \geq 1$. Together with the observation $\norm{x^{(0)} - f(x^{(0)})}_p \leq d$, we get $\norm{x^{(i)} - f(x^{(i)})}_p \leq \lambda^i d$ for all $i$. If $i$ is large enough such that $\lambda^i d < \epsilon$, we must have found an $\epsilon$-approximate fixpoint. This is equivalent to $i > \frac{\log d / \epsilon}{\log \frac{1}{\lambda}}$. By using $\max (\frac{1}{ \epsilon}, \frac{1}{1-\lambda}) < d$, we get
    \[
        \log( \frac{1}{\lambda} ) = \log ( 1 + \frac{1 - \lambda}{\lambda}) \geq 
        \log ( 1 + \frac{1}{d\lambda}) \geq \log ( 1 + \frac{1}{d})
    \]
    and hence we obtain that $\bigO(d \log d)$ iterations suffice.
\end{proof}

\subsection{Rounding to the Grid in the $\ell_1$-Case}\label{ssec:discrete_algorithms}

In this section, we adapt the algorithm from the previous section to also work in the discretized setting \gridToGrid{1}, proving the following theorem.

\begin{theorem}\label{thm:mainGridtogrid}
    For every $b \geq \log_2 \left(\frac{2d}{\epsilon} \frac{1 + \lambda}{1 - \lambda} \right)$, an $\epsilon$-approximate fixpoint of a $\lambda$-contracting (in $\ell_1$-norm) grid map $f:G^d_b\rightarrow [0,1]^d$ can be found using $\bigO(d^2 ( \log\frac{1}{\epsilon} + \log\frac{1}{1-\lambda}))$ queries.
\end{theorem}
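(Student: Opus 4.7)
The plan is to adapt the centerpoint-based algorithm of \Cref{thm:mainContinuous} to operate on the grid $G_b^d$, making three modifications: the candidate set $S$ is a finite subset of $G_b^d$, progress is measured by $|S|$ rather than Lebesgue volume, and centerpoints are obtained from the discrete version of the centerpoint theorem (\Cref{theorem:discrete_p-centerpoint}) instead of \Cref{theorem:p-centerpoint}. Concretely, I would initialize $S = G_b^d$ and, at each iteration, apply \Cref{theorem:discrete_p-centerpoint} to obtain an $\ell_1$-centerpoint $c$ of $S$ (which lies in $[0,1]^d$ by \Cref{obs:centerpoint_in_bounding_box}), round $c$ to its nearest grid point $c' \in G_b^d$, and query $f(c')$. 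If $\|c' - f(c')\|_1 \leq \epsilon$, return $c'$; otherwise, update $S \leftarrow S \setminus \bisecH^1_{c', f(c')}$, which is safe by the $\lambda$-contraction property of $f$.

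The main obstacle will be proving that each non-terminating iteration removes at least a $\tfrac{1}{d+1}$-fraction of $S$, despite the rounding step $c \mapsto c'$. The discrete centerpoint property together with \Cref{obs:bisectorSubsetLimit} applied to the direction $v = c - f(c')$ gives $|S \cap \bisecH^1_{c, f(c')}| \geq \tfrac{|S|}{d+1}$, but what we really need is the corresponding bound for $\bisecH^1_{c', f(c')}$. Since $\|c - c'\|_1 \leq d \cdot 2^{-b-1}$, the two bisectors differ only in a thin slab of width $O(\|c - c'\|_1)$, and the hypothesis on $b$ makes this width on the order of a single grid step. I would leverage the piecewise-linear structure of $\ell_1$-bisectors to either show a direct containment of most of $S \cap \bisecH^1_{c, f(c')}$ into $\bisecH^1_{c', f(c')}$ (exploiting that grid points of $S$ are separated by at least $2^{-b}$ in each coordinate), or to slightly modify the rounding rule (e.g., breaking ties towards $f(c)$) so that the inclusion holds. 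If a direct $\tfrac{1}{d+1}$-bound cannot be preserved, an $\Omega(\tfrac{1}{d})$-fraction still suffices for the claimed query complexity.

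For termination, I would prove the discrete analog of \Cref{lem:ballaroundfixpointstays}: as long as no queried grid point is an $\epsilon$-approximate fixpoint, the ball $B^1(x^\star, r_{\epsilon,\lambda})$ with $r_{\epsilon,\lambda} = \tfrac{\epsilon(1-\lambda)}{2(1+\lambda)}$ never intersects any removed bisector halfspace. Using the inclusion $B^1(x^\star, r_{\epsilon,\lambda}) \supseteq B^\infty(x^\star, r_{\epsilon,\lambda}/d)$ together with the hypothesis $b \geq \log_2(\tfrac{2d(1+\lambda)}{\epsilon(1-\lambda)})$, this ball always contains at least one grid point, so $|S| \geq 1$ is maintained throughout. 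Starting from $|S| \leq (2^b+1)^d$ and shrinking by a $\tfrac{d+1}{d}$-factor per non-terminating iteration gives a bound of $\log_{(d+1)/d}((2^b+1)^d) = O(d^2 b) = O(d^2(\log\tfrac{1}{\epsilon} + \log\tfrac{1}{1-\lambda} + \log d))$ on the number of queries. The spurious $\log d$ term can then be absorbed by falling back to simple fixpoint iteration when $d$ dominates $\tfrac{1}{\epsilon}$ and $\tfrac{1}{1-\lambda}$, exactly as in the proof of \Cref{thm:mainContinuous}.
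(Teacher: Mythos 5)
Your algorithm, termination argument, and overall bookkeeping match the paper's proof of \Cref{thm:mainGridtogrid} quite closely: you use the discrete centerpoint theorem, round to the nearest grid point, count surviving grid points rather than volume, invoke \Cref{lem:ballaroundfixpointstays} together with the grid-granularity hypothesis to keep at least one grid point alive, and absorb the $\log d$ term by falling back to Banach iteration. However, you correctly identify the crux — showing that rounding $c$ to $c'$ still removes a $\tfrac{1}{d+1}$-fraction of $S$ — and then leave it unresolved, offering three vague alternatives (a ``thin slab'' argument, modifying the tie-breaking rule, settling for $\Omega(\tfrac{1}{d})$). None of these is worked out, and the thin-slab intuition is suspect for $\ell_1$: bisectors $\bisecH^1_{c,f(c')}$ and $\bisecH^1_{c',f(c')}$ are piecewise-linear sets that can differ on regions that are not slab-like, and it is not obvious that the symmetric difference avoids all grid points.

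The paper closes this gap differently, and more cleanly, with \Cref{lemma:1-norm_rounding}: rather than compare the two bisector halfspaces associated to $c$ and $c'$ for the specific direction towards $f(c')$, it shows that the rounded point $c'$ is itself a discrete $\ell_1$-centerpoint of $S$, i.e.\ that $|\limitH^1_{c',v}\cap S|\geq \tfrac{|S|}{d+1}$ for \emph{every} direction $v$, with no loss in the constant. The proof uses the subgradient characterization of containment (\Cref{lemma:subgradients_and_containment}) together with the explicit $\ell_1$-subdifferential (\Cref{obs:l_1_subdifferential}): if $z\in S$ is in $\limitH^1_{c,v}$, the witnessing subgradient $u\in\partial\|z-c\|_1$ with $\langle u,v\rangle\geq 0$ is also a subgradient of $\|z-c'\|_1$, because rounding each coordinate of $c$ to the nearest grid value and using that $z$ is a grid point gives $\operatorname{sign}(z_i-c'_i)=\operatorname{sign}(z_i-c_i)$ whenever $z_i\neq c'_i$. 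This coordinate-wise sign-preservation argument is the missing ingredient in your proposal; once you have it, the rest of your write-up goes through essentially verbatim (you would simply apply \Cref{obs:bisectorSubsetLimit} to $c'$ directly, not to $c$).
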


The main issue that we have to address is that we cannot always query the centerpoint $c$ of the remaining search space, since $c$ is not guaranteed to lie on the grid $G^d_b$. In the $\ell_\infty$-case, Chen, Li, and Yannakakis~\cite{chenComputingFixedPoint2024} solved this problem by choosing the measure for the remaining search space appropriately. Concretely, they design a special measure to determine the size of the remaining search space, and then show that rounding the centerpoint to the grid works for this measure. We do not repeat their argument here, but note that their technique also works if applied together with our (discrete) centerpoint theorem.

Instead, we focus on the $\ell_1$-case. Fortunately, rounding in the $\ell_1$-case turns out to be simpler than in the $\ell_\infty$-case. Indeed, we simply measure the size of the remaining search space as the number of grid points that still remain inside. It turns out that appropriately rounding the centerpoint to the grid does not change its centerpoint properties with respect to this measure of size. More formally, we get the following lemma.

\begin{lemma}
\label{lemma:1-norm_rounding}
    Consider an arbitrary subset of points $P \subseteq G^d_b$ on the grid. There exists a point $c \in G^d_b$ such that $|\limitH^1_{c, v} \cap P| \geq \frac{|P|}{d + 1}$ for all $v \in S^{d - 1}$.
\end{lemma}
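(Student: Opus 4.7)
The plan is to start from a continuous $\ell_1$-centerpoint guaranteed by \Cref{theorem:discrete_p-centerpoint} and round it to a grid point. The key observation is that $\ell_1$-halfspace membership is governed by the coordinate-wise sign pattern of $z - x$, with a favorable ``tie bonus'' whenever a coordinate of $x$ coincides with the corresponding coordinate of $z$. This will let me move the centerpoint to a nearby grid point without ever removing a point of $P$ from any $\ell_1$-halfspace.

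First, I would establish a clean characterization of limit $\ell_1$-halfspaces. The map $\epsilon \mapsto \|(x - z) - \epsilon v\|_1$ is convex and piecewise linear in $\epsilon$, so $z \in \limitH^1_{x, v}$ is equivalent to its right-derivative at $\epsilon = 0$ being non-negative. A short coordinate-wise computation shows this derivative equals
\[
    f_x(z, v) \coloneqq \sum_{i \,:\, x_i \neq z_i} v_i \sign(z_i - x_i) + \sum_{i \,:\, x_i = z_i} |v_i|,
\]
so $z \in \limitH^1_{x, v}$ iff $f_x(z, v) \geq 0$. The crucial point is that tied coordinates always contribute the non-negative term $|v_i|$, which is at least as large as the signed quantity $v_i \sign(z_i - x_i)$ that an untied coordinate would contribute.

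Next, I would apply \Cref{theorem:discrete_p-centerpoint} to obtain $c^* \in \R^d$ with $|\limitH^1_{c^*, v} \cap P| \geq \frac{|P|}{d+1}$ for all $v \in S^{d-1}$. By \Cref{obs:centerpoint_in_bounding_box}, $c^*$ lies inside the axis-aligned bounding box of $P$. For each coordinate $i$, let $S_i = \{z_i : z \in P\}$ and define $I_i = \{c^*_i\}$ if $c^*_i \in S_i$, and otherwise $I_i = \{a_i, b_i\}$ with $a_i = \max\{s \in S_i : s < c^*_i\}$ and $b_i = \min\{s \in S_i : s > c^*_i\}$. In the ``otherwise'' case, both $a_i$ and $b_i$ exist because $c^*$ lies in the bounding box, and every element of $S_i$ is a grid coordinate because $P \subseteq G^d_b$. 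Hence any $c \in \prod_{i=1}^d I_i$ is a grid point in $G^d_b$.

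Finally, I would check that every such $c$ inherits the centerpoint property by proving $f_c(z, v) \geq f_{c^*}(z, v)$ for every $z \in P$ and $v \in S^{d-1}$, coordinate by coordinate. Coordinates with $c_i = c^*_i$ contribute equally on both sides. In the remaining coordinates, $c_i$ is the nearest value in $S_i$ to $c^*_i$, so no point of $P$ has its $i$-th coordinate strictly between $c_i$ and $c^*_i$; therefore $\sign(z_i - c_i) = \sign(z_i - c^*_i)$ unless $z_i = c_i$, in which case the contribution jumps from a signed $\pm v_i$ up to the larger value $|v_i|$. Summing over coordinates yields $f_c \geq f_{c^*}$, so $\limitH^1_{c^*, v} \cap P \subseteq \limitH^1_{c, v} \cap P$ for every $v$, and $c$ is a grid $\ell_1$-centerpoint. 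The main subtlety is the tie-breaking rule hidden inside the definition of a limit $\ell_1$-halfspace; once one sees that ties can only help, the rounding goes through without any loss in the $\frac{1}{d+1}$ guarantee.
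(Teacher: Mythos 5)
Your proof is correct and follows the same high-level strategy as the paper's: obtain a (possibly non-grid) $\ell_1$-centerpoint from \Cref{theorem:discrete_p-centerpoint}, round it, and show that rounding can only preserve or enlarge the set $\limitH^1_{\cdot,v}\cap P$ for every $v$. The technical route differs in two respects. First, where the paper invokes its subgradient characterization (\Cref{lemma:subgradients_and_containment} together with \Cref{obs:l_1_subdifferential}), you derive an equivalent, self-contained criterion $f_x(z,v)\ge 0$ by directly computing the right-derivative of $\epsilon\mapsto\|(x-z)-\epsilon v\|_1$ at $0$; this is the same object in different clothing (the right directional derivative of a convex function equals the supremum of inner products with subgradients), and your version has the advantage of making the crucial ``tied coordinates contribute $|v_i|$'' bonus explicit. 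Second, where the paper rounds $c^*$ to the nearest point of the ambient grid $G^d_b$ (exploiting that $c'$ is within half a grid step of $c^*$ so no grid point can cross), you round each coordinate to the nearest value actually taken by a point of $P$ on that side; this is a valid grid point precisely because $P\subseteq G^d_b$, and it enforces the needed ``no $z_i\in S_i$ strictly between $c_i$ and $c^*_i$'' property by construction rather than by arithmetic on the grid spacing. Both roundings yield the same coordinate-wise monotonicity $f_c(z,v)\ge f_{c^*}(z,v)$, so the argument goes through either way; yours is arguably slightly more flexible (it would still work for arbitrary finite $P$, with $c$ landing in the coordinate grid of $P$ rather than $G^d_b$), while the paper's is tied to the specific grid and reuses machinery already set up for the general $\ell_p$ analysis.
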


We postpone the proof of \Cref{lemma:1-norm_rounding} and instead first use it to derive \Cref{thm:mainGridtogrid}.

\begin{proof}[Proof of \Cref{thm:mainGridtogrid}]
Overall, we use a similar strategy as in the proof of \Cref{thm:mainContinuous}, but we use \Cref{lemma:1-norm_rounding} that is derived from our discrete centerpoint theorem. Concretely, let $M \subseteq G^d_b$ be the subset of grid points that could still be $\epsilon$-approximate fixpoints (they have not been ruled out yet). Applying \Cref{lemma:1-norm_rounding} to $M$ yields a centerpoint $c$ on the grid $G^d_b$. Querying $c$ guarantees that we can exclude at least $\frac{|M|}{d + 1}$ points from our current search space in each non-terminating iteration. 

Now \Cref{lem:ballaroundfixpointstays} guarantees that as long as we have not queried an $\epsilon$-approximate fixpoint, no point in $B^1(x^\star, r_{\epsilon, \lambda}) \cap G^d_b$ has been excluded from the search space yet (where $r_{\epsilon, \lambda} = \frac{\epsilon - \epsilon \lambda}{2+2\lambda}$). To guarantee that this intersection is non-empty, it suffices to have $2^{-b} \leq \frac{r_{\epsilon, \lambda}}{d}$, which translates to our assumption $b \geq \log_2 \left(\frac{2d}{\epsilon} \frac{1 + \lambda}{1 - \lambda} \right)$. In fact, we can assume 
\[
    b = \left\lceil \log_2 \left(\frac{2d}{\epsilon} \frac{1 + \lambda}{1 - \lambda} \right) \right\rceil 
\]
without loss of generality (as we can use a subgrid of this size, if the given grid is finer). For this choice of $b$, we have $|G^d_b| = (2^b + 1)^d \leq 4^{db}$. With the calculation
\[
    \log_{\frac{d}{d + 1}} 4^{-db} = \frac{\log  4^{-db} }{\log \frac{d}{d + 1}} = \frac{db}{\log (1 + \frac{1}{d} )} \leq \bigO(d^2 b) \leq \bigO \left(d^2 \left(\log d + \log \epsilon + \log \frac{1}{1 - \lambda} \right) \right)
\]
we therefore conclude that the algorithm must find an $\epsilon$-approximate fixpoint after at most $\bigO(d^2 ( \log\frac{1}{\epsilon} + \log\frac{1}{1-\lambda} + \log d))$ queries to grid points. 

As in the proof of \Cref{thm:mainContinuous}, we can get rid of the $\log d$ term in our query bound by observing that if $\max (\frac{1}{ \epsilon}, \frac{1}{1-\lambda}) < d$, a simple iteration algorithm (this time with rounding) can find an $\epsilon$-approximate fixpoint after at most $\bigO(d \log d)$ queries. 
\end{proof}

It remains to prove \Cref{lemma:1-norm_rounding}. For this, we need some additional theory. In particular, we will proceed to give a characterization of containment in an $\ell_1$-halfspace based on tools from convex analysis. Once this characterization is established, it will be quite easy to derive \Cref{lemma:1-norm_rounding}. Note that more details on this characterization can be found in \Cref{ssec:properties_I}.

The main tool that we need is the notion of subgradients of a convex function $f : \R^d \rightarrow \R$: a vector $u \in \R^d$ is a subgradient of $f$ at $x \in \R^d$ if and only if 
\[
    f(x') - f(x) \geq \langle u, (x' - x) \rangle
\]
for all $x' \in \R^d$. The set $\partial f(x) \subseteq \R^d$ of all subgradients of $f$ at $x$ is also called the subdifferential. If $f$ is differentiable at $x$, then $\partial f(x) = \{\nabla f(x)\}$. For more details on these concepts and convex analysis in general, we refer to the standard textbook by Rockafellar~\cite{rockafellarConvexAnalysis1970}.

Subgradients are useful for us due to the following characterization of containment for $\ell_p$-halfspaces. The proof can be found in \Cref{ssec:properties_I}.

\begin{restatable}{lemma}{lemmasubgradientscontainment}
\label{lemma:subgradients_and_containment}
    For any $p \in [1, \infty) \cup \{\infty\}$, a point $z \in \R^d$ is contained in an $\ell_p$-halfspace $\limitH^p_{x, v}$ if and only if there exists a subgradient $u \in \partial ||z -x||_p$ with $\langle u, v \rangle \geq 0$.
\end{restatable}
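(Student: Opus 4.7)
The plan is to translate the defining inequality of $\limitH^p_{x, v}$ into a sign condition on a one-sided directional derivative of the norm, and then invoke the classical support-function formula for the subdifferential. Let $N(w) := \|w\|_p$. Because the $\ell_p$-norm is even, we have $\|x - \epsilon v - z\|_p = \|(z - x) + \epsilon v\|_p$ for every $\epsilon > 0$, so membership $z \in \limitH^p_{x, v}$ is equivalent to
\[
    N\bigl((z - x) + \epsilon v\bigr) \geq N(z - x) \qquad \text{for all } \epsilon > 0.
\]

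Next, consider the one-variable restriction $\phi(\epsilon) := N((z - x) + \epsilon v)$, which is convex as the restriction of a convex function to a line. A basic fact about convex functions on $\R$ is that $\phi(\epsilon) \geq \phi(0)$ for every $\epsilon > 0$ if and only if the right derivative $\phi'_+(0) \geq 0$: by convexity, the secant slope $[\phi(\epsilon) - \phi(0)]/\epsilon$ is non-decreasing in $\epsilon > 0$ and its limit as $\epsilon \to 0^+$ is $\phi'_+(0)$. This right derivative is precisely the one-sided directional derivative $N'(z - x; v)$. Hence $z \in \limitH^p_{x, v}$ if and only if $N'(z - x; v) \geq 0$.

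Finally, I invoke the standard subdifferential characterization of directional derivatives of finite-valued convex functions (see Rockafellar~\cite{rockafellarConvexAnalysis1970}): for any such $N$ on $\R^d$ and any $w \in \R^d$, the subdifferential $\partial N(w)$ is non-empty, compact, and convex, and
\[
    N'(w; v) = \max_{u \in \partial N(w)} \langle u, v \rangle.
\]
Applied at $w = z - x$, this yields $N'(z - x; v) \geq 0$ if and only if there exists $u \in \partial N(z - x) = \partial \|z - x\|_p$ with $\langle u, v \rangle \geq 0$, which combined with the previous paragraph gives the claimed equivalence.

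The argument is essentially a routine translation via elementary convex analysis, so I do not anticipate a serious obstacle. The only delicate point is bookkeeping the signs in the first step: one must use evenness of $\|\cdot\|_p$ to convert the $-\epsilon v$ appearing on the $x$-side of the defining inequality into a $+\epsilon v$ on the $(z - x)$-side, so that the direction of differentiation in the second step matches $v$ (rather than $-v$) and the final subgradient condition reads $\langle u, v \rangle \geq 0$ as stated.
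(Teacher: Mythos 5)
Your proof is correct, and it takes a genuinely different route from the paper's. The paper handles the two directions asymmetrically: for sufficiency of the subgradient condition it plugs $x' = z - x + \epsilon v$ directly into the subgradient inequality, and for necessity it invokes \Cref{obs:characterization_containment}, produces a separating hyperplane between the open ray $R_-$ and the open ball $B_z^\circ$, and argues that a scaling of its normal is a subgradient. Your argument instead reduces both directions to a single statement — membership in $\limitH^p_{x,v}$ is equivalent to the one-sided directional derivative $N'(z-x;v)$ being nonnegative — and then invokes the support-function formula $N'(w;v) = \max_{u \in \partial N(w)} \langle u, v\rangle$ (Rockafellar, Thm.\ 23.4). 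This is cleaner and more symmetric: it packages the separating-hyperplane work into a standard textbook identity and avoids the auxiliary geometric lemma entirely. What the paper's version buys in exchange is a more explicit geometric picture (the ray-ball separation), which ties the lemma into the structural framework of $\ell_p$-halfspaces built up in that appendix section; but as a self-contained proof of this particular lemma, your directional-derivative route is shorter and arguably more transparent.
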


In this section, we are interested in applying \Cref{lemma:subgradients_and_containment} in the $\ell_1$-case. Therefore, let us next describe the subgradients of $||\cdot||_1$. Due to \Cref{lem:structureoflimithalfspace} ($\ell_p$-halfspaces are unions of rays), it will suffice to characterize the subgradients at points $z$ with $||z||_1 = 1$.

\begin{restatable}{observation}{obssubdifferentialmanhattan}
\label{obs:l_1_subdifferential}
    Consider arbitrary $z \in \R^d$ with $||z||_1 = 1$. A vector $u \in \R^d$ is a subgradient of $||\cdot||_1$ at $z$ if and only if 
    \[
        u_i \in \begin{cases}
            \{1\} & \text{ if } z_i > 0 \\
            \{-1\} & \text{ if } z_i < 0 \\
            [-1, 1] & \text{ if } z_i = 0
        \end{cases}
    \]
    for all $i \in [d]$. In particular, we have $\langle u, z \rangle = ||u||_{\infty} = ||z||_1 = 1$ for all $u \in \partial ||z||_1$.
\end{restatable}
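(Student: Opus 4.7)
My plan is to establish the subdifferential characterization using the coordinatewise decomposition $\|z\|_1 = \sum_{i=1}^d |z_i|$, and then compute $\langle u, z\rangle$ and $\|u\|_\infty$ directly from the characterization.

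First, I would reduce to the one-dimensional case. Since each function $z \mapsto |z_i|$ is a proper convex continuous function on $\R^d$, the standard sum rule of subdifferential calculus (which holds trivially here because the functions are everywhere finite) gives
\[
    \partial \|z\|_1 = \sum_{i=1}^d \partial |z_i| \cdot e_i,
\]
where $e_i$ denotes the $i$-th standard basis vector. Thus a vector $u$ is a subgradient of $\|\cdot\|_1$ at $z$ if and only if $u_i$ is a subgradient of $|\cdot|$ at $z_i$ for every $i$. The subdifferential of the real-valued function $t \mapsto |t|$ is a textbook computation: $\partial |t| = \{1\}$ for $t > 0$, $\partial |t| = \{-1\}$ for $t < 0$, and $\partial |0| = [-1,1]$ (the latter by checking $|s| \geq 0 + c \cdot s$ for all $s \in \R$ iff $c \in [-1,1]$). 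Plugging this in yields exactly the piecewise description claimed.

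Next, for the additional identities, I would compute directly. For every $i$ with $z_i \neq 0$ the characterization forces $u_i = \operatorname{sign}(z_i)$, so $u_i z_i = |z_i|$; for $i$ with $z_i = 0$ we get $u_i z_i = 0$. Summing over $i$ gives $\langle u, z\rangle = \sum_i |z_i| = \|z\|_1 = 1$. For $\|u\|_\infty$, the characterization yields $|u_i| \leq 1$ for every $i$, hence $\|u\|_\infty \leq 1$. Because $\|z\|_1 = 1 \neq 0$, at least one coordinate $z_{i_0}$ is nonzero, forcing $|u_{i_0}| = 1$ and therefore $\|u\|_\infty = 1$.

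There is no real obstacle here; the only place where one must be a little careful is the last step, where one uses $\|z\|_1 = 1$ to ensure the maximum of $|u_i|$ is actually attained at $1$ and not merely bounded above by $1$. If one wished to avoid appealing to the subdifferential sum rule, the same characterization can be verified directly from the definition: one direction follows from Hölder's inequality $\langle u, x - z\rangle \leq \|u\|_\infty \|x\|_1 - \langle u, z\rangle = \|x\|_1 - \|z\|_1$ combined with the identity $\langle u, z\rangle = \|z\|_1$ just computed, and the other direction by testing the subgradient inequality against the perturbations $x = z \pm \delta e_i$ for small $\delta > 0$ to pin down each $u_i$.
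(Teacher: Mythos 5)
Your proof is correct and complete. Note that the paper itself does not prove \Cref{obs:l_1_subdifferential}; it simply states the characterization and refers to the literature (Rockafellar). Your derivation via the subdifferential sum rule applied to $\|z\|_1 = \sum_i |z_i|$, together with the elementary one-dimensional subdifferential of $|\cdot|$, is the standard argument and is sound here since each summand is finite-valued and continuous on all of $\R^d$ (so the sum rule applies without any constraint-qualification caveats). The final computations of $\langle u, z\rangle$ and $\|u\|_\infty$ are also correct, including the observation that $\|z\|_1 = 1 > 0$ forces some $|u_{i_0}| = 1$ so that $\|u\|_\infty$ is attained at $1$ rather than merely bounded by it. The alternative direct verification you sketch (Hölder for one direction, perturbations $z \pm \delta e_i$ for the other) is also valid and would make the argument self-contained without appealing to the sum rule.
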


Equipped with \Cref{lemma:subgradients_and_containment} and \Cref{obs:l_1_subdifferential}, we are now ready to prove \Cref{lemma:1-norm_rounding}. 

\begin{proof}[Proof of \Cref{lemma:1-norm_rounding}]
    Let $P \subseteq G^d_b$ be an arbitrary subset of the grid points, as given in the lemma. By \Cref{theorem:discrete_p-centerpoint}, there exists a discrete $\ell_1$-centerpoint $c$ of $P$. Moreover, by \Cref{obs:centerpoint_in_bounding_box}, $c$ is contained in the convex hull $\conv(G^d_b) = [0, 1]^d$ of the grid. Let $c' \in G^d_b$ be the grid point closest to $c$ (in $\ell_1$-distance). In other words, $c'$ is obtained from $c$ by rounding each coordinate individually as little as possible. 
    
    We claim that $c'$ is also a discrete $\ell_1$-centerpoint of $P$. To prove this, consider the $\ell_1$-halfspaces $\limitH^1_{c', v}$ and $\limitH^1_{c, v}$ for an arbitrary direction $v \in S^{d - 1}$. We prove that every point $z \in P \cap \limitH^1_{c, v}$ must also be contained in $\limitH^1_{c', v}$. Indeed, by \Cref{lemma:subgradients_and_containment}, there exists a subgradient $u \in \partial ||z - c||_1$ such that $\langle u, v \rangle \geq 0$. We want to prove $u \in \partial ||z - c'||_1$. Observe that our choice of $c'$ guarantees 
    \[
        z_i - c'_i > 0 \implies z_i - c_i > 0 \text{ and } z_i - c'_i < 0 \implies z_i - c_i < 0
    \]
    for all $i \in [d]$. With \Cref{obs:l_1_subdifferential}, we therefore have $u \in \partial ||z - c'||_1$ and conclude by \Cref{lemma:subgradients_and_containment} that $z$ is contained in $\limitH^1_{c', v}$.
\end{proof}

\subsection{Total Search Version}\label{sec:totalproblem}\label{ssec:total_search_problem}

The class \TFNPdt (standing for \emph{Total Function NP Search Problems, decision-tree view}), as defined by Göös et al.~\cite{goosSeparationsProofComplexity2024}, captures total search problems (i.e., problems where every possible instance has a solution) that are specified by a long hidden bitstring accessible only through a bit querying oracle. To lie in \TFNPdt, solutions must be efficiently verifiable (i.e., by decision trees of depth poly-logarithmic in the length of the bitstring). To lie in \FPdt, the subclass of efficiently solvable problems in \TFNPdt, one must be able to efficiently \emph{find} a solution as well, i.e., there has to be a decision tree of poly-logarithmic depth that always outputs a correct solution.

To fit a \emph{promise} search problem into \TFNPdt, we must introduce solution types, usually called \emph{violations}, that are guaranteed to exist when the promise is violated.

Chen, Li, and Yannakakis~\cite{chenComputingFixedPoint2024} showed how the \gridToGrid{\infty} problem can be made total: under the $\ell_\infty$-norm, a grid-map $f:G^d_b\rightarrow [0,1]^d$ extends to a $\lambda$-contraction $f'$ on $[0,1]^d$ if and only if $f$ is $\lambda$-contracting for all pairs of points in $G^d_b$. Thus, it suffices to introduce violations consisting of two points $x,y \in G^d_b$ for which $\norm{f(x)-f(y)}_\infty > \lambda\cdot \norm{x-y}_\infty$. If a function $f:G^d_b\rightarrow G^d_{b'}$ (for $b'\in\poly(b)$) is now encoded as a bitstring of length $2^{d\cdot b}\cdot b'$ by simply concatenating the output values $f(x)$ for all $x\in G^d_b$, both $\epsilon$-approximate fixpoints as well as these violations can be verified by querying $\bigO(b')\in \poly\log(2^{d\cdot b}\cdot b')$ bits, thus placing the resulting problem in \TFNPdt. Furthermore, the algorithm of Chen, Li, and Yannakakis can also be used to solve this total version of the problem: if the function $f$ is $\lambda$-contracting for all pairs of points queried by the algorithm, the algorithm must return an $\epsilon$-approximate fixpoint. Otherwise, the algorithm must have encountered a violation. Thus, \gridToGrid{\infty} is in \FPdt.

The same strategy does not seem to work in the $\ell_1$-case. In fact, we suspect that the statement that a grid-map extends to a contraction map if and only if it is contracting for all grid points is not true for the $\ell_1$-norm and general $\ell_p$-norms. In the case of $p=\infty$, Chen, Li, and Yannakakis constructed an extension explicitly: they extended the grid-function to $[0,1]^d$ using a formula that corresponds to applying McShane's extension lemma~\cite{mcshaneExtensionRangeFunctions1934} to every partial function $f_i:G^d_b\rightarrow [0,1]$ independently. McShane's extension lemma guarantees that the partial functions $f'_i:[0,1]^d\rightarrow [0,1]$ then fulfill the same contraction property as $f_i$, which together with the definition of the $\ell_\infty$-norm as a maximum over coordinates, yields the contraction property for $f'$. Trying this for $p=1$, the analysis only yields that $f'$ is $(d\lambda)$-Lipschitz, which is not enough to be contracting if we for example have $d>1$ and $\lambda>0.5$.

We get around this in a naive way by introducing a violation type based on the termination criterion of our algorithm for \gridToGrid{1}. Concretely, if our algorithm does not find an $\epsilon$-approximate fixpoint after reaching its query bound, then the queried points $x^{(1)}, \dots, x^{(k)}$ must satisfy $G^d_b \setminus \bigcup_{i \in [k]} H^1_{x^{(i)}, f(x^{(i)})} = \emptyset$. In other words, we certify the existence of a violation of the contraction property with a set of points on the grid whose associated bisector halfspaces contain all grid points in their union.

\begin{definition}
\label{def:total_search_problem}
    An instance of the \totalContraction{1} problem consists of a bitstring encoding integers $d,b,b'\in \N$ and integral logarithms $\log(\frac{1}{1-\lambda}),\log(\frac{1}{\epsilon})$ in unary, as well as a function $f:G^d_b\rightarrow G^d_{b'}$ encoded as a concatenation of its values. The goal is to produce one of the following. \begin{itemize}
        \item[(S)] A point $x\in G^d_b$ such that $\norm{f(x)-x}_p\leq\epsilon$.
        \item[(V)] A set $P$ of $\poly(b, d, \log \left(\frac{1}{\epsilon}\right), \log \left(\frac{1}{1 - \lambda} \right))$ points in $G^d_b$ with $G^d_b \subseteq \bigcup_{x \in P} H^1_{x, f(x)}$.
    \end{itemize}
\end{definition}

Note that for our algorithm to work, we need to syntactically guarantee $b \geq \log_2 \left(\frac{2d}{\epsilon} \frac{1 + \lambda}{1 - \lambda} \right)$. In fact, without this assumption, we would not even know if the problem is total.  We also syntactically guarantee $b'\in\poly(b)$ to make sure that the number of bit queries that our algorithm makes to the encoding of $f$ is polynomially bounded. Such syntactic guarantees can be realized by adding trivial-to-verify additional violation types for the cases where the input parameters do not fulfill these assumptions.

We have shown in the proof of \Cref{thm:mainGridtogrid} that after $\bigO(d^2 ( \log\frac{1}{\epsilon} + \log\frac{1}{1-\lambda} + \log d))$ queries to $f$, the queries made by our algorithm must have produced either a solution (S), or a violation (V). Thus, our algorithm solves \totalContraction{1} as well, placing it in \TFNPdt and even in \FPdt.

\begin{corollary}
    $\totalContraction{1} \in\FPdt$.
\end{corollary}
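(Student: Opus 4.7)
The plan is to instantiate the decision tree for $\totalContraction{1}$ by running the centerpoint-based algorithm from the proof of \Cref{thm:mainGridtogrid} to its full query budget, treating it as a promise-free procedure and using its query log as the output when no approximate fixpoint is discovered.

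First, I would describe the decision tree. Initialize $M \gets G^d_b$. For up to $k^* = \bigO(d^2(\log\tfrac{1}{\epsilon} + \log\tfrac{1}{1-\lambda} + \log d))$ rounds, use \Cref{lemma:1-norm_rounding} to compute a discrete $\ell_1$-centerpoint $c \in G^d_b$ of the current $M$, read $f(c)$ by querying the $b'$ bits encoding $f(c)$, and check whether $\norm{f(c) - c}_1 \leq \epsilon$. If so, halt and output $c$ as a solution of type (S). Otherwise, update $M \gets M \setminus \bisecH^1_{c, f(c)}$ and continue. If the loop exhausts its budget without triggering (S), output the entire list of queried points $x^{(1)}, \ldots, x^{(k^*)}$ as a witness of type (V).

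Second, I would verify that one of the two outputs is always produced and is valid. Crucially, \Cref{lemma:1-norm_rounding} is a purely combinatorial centerpoint guarantee that does not invoke the contraction promise, so the invariant $|M_{i+1}| \leq \tfrac{d}{d+1} |M_i|$ holds throughout the execution regardless of whether $f$ is really a contracting grid-map. Together with $|G^d_b| \leq 4^{db}$ and the syntactic lower bound $b \geq \log_2 \left(\tfrac{2d}{\epsilon} \tfrac{1+\lambda}{1-\lambda}\right)$, the same cardinality calculation as in the proof of \Cref{thm:mainGridtogrid} shows that $|M| < 1$, i.e. $M = \emptyset$, after $k^*$ rounds. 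By induction $M = G^d_b \setminus \bigcup_{i} \bisecH^1_{x^{(i)}, f(x^{(i)})}$ at every step, so when $M$ becomes empty the bisector halfspaces of the queried points cover $G^d_b$, making the queried set a valid type-(V) witness. Its size $k^*$ is $\poly(d, b, \log\tfrac{1}{\epsilon}, \log\tfrac{1}{1-\lambda})$, as required by \Cref{def:total_search_problem}.

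Third, I would convert the $f$-query count into a bit-query count. The input bitstring has length $\Theta(2^{db} \cdot b')$, and each $f$-query reads $b'$ bits, so the total bit-query count is $\bigO(d^2 b \cdot b') = \poly(d, b)$ because $b' \in \poly(b)$ by syntactic guarantee. This is polylogarithmic in the input length, yielding $\totalContraction{1} \in \FPdt$. The main (and rather mild) obstacle is the conceptual one in the second paragraph: observing that the centerpoint-shrinkage argument driving \Cref{thm:mainGridtogrid} is entirely independent of the promise, while \Cref{lem:ballaroundfixpointstays} is only needed to conclude that (S) must trigger when $f$ actually is contracting. Decoupling these two ingredients is precisely what lets the same algorithm double as a procedure that either solves the search problem or produces a certificate of promise violation.
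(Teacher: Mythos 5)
Your proof is correct and takes the same route as the paper: run the centerpoint-based algorithm from \Cref{thm:mainGridtogrid} for its full budget, output the queried point if an $\epsilon$-approximate fixpoint appears, otherwise output the query log as a type-(V) witness. You correctly identify the crux that the paper states only implicitly, namely that the discrete centerpoint guarantee of \Cref{lemma:1-norm_rounding} and the $\frac{d}{d+1}$-shrinkage of $M$ are purely combinatorial and hold with no contraction assumption, while \Cref{lem:ballaroundfixpointstays} is the only place the promise enters and is needed only to force the (S)-branch. One small caveat worth noting: since the iteration-based fallback used to shave off the $\log d$ term in \Cref{thm:mainGridtogrid} does rely on the contraction promise, the promise-free query bound you use must indeed retain $\log d$; you state this correctly, and it is still $\poly(d,b)$ because $b \geq \log_2 d$ is forced by the syntactic lower bound on $b$.
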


Note that for general $p \notin \{1, 2, \infty\}$, it is unclear whether an analogue of  \Cref{def:total_search_problem} would even yield a total problem: we do not provide any rounding strategy for those values and hence it is unclear whether any grid-map without $\epsilon$-approximate fixpoints on the grid must have a certificate for this in terms of a violation (V) as described above.

\clearpage
\bibliographystyle{plainurl}
\bibliography{references}

\clearpage
\appendix

\section{More on $\ell_p$-Halfspaces}\label{sec:appendixProofsStructure}

In this section, we will study limit $\ell_p$-halfspaces in more detail. The main goal is to provide proofs for the halfspace properties in \Cref{ssec:properties} and \Cref{ssec:discrete_algorithms}, but we will also develop some additional theory on the way. We split the exposition into two parts: fundamental properties of $\ell_p$-halfspaces and insights about their shape are developed in \Cref{ssec:properties_I} while \Cref{ssec:properties_II} focuses on the interaction of $\ell_p$-halfspaces with mass distributions.

\subsection{Fundamentals of $\ell_p$-Halfspaces}
\label{ssec:properties_I}

For completeness sake, we recall the definition of limit $\ell_p$-halfspaces, which we will just refer to as $\ell_p$-halfspaces in this appendix. 

\limithalfspace*

Recall that $\limitH^p_{x, v}$ is invariant under scaling $v$ with a positive scalar, and we therefore usually use $v \in S^{d - 1}$. As a warm-up, we prove \Cref{lemma:axis_aligned_halfspace}, which considers $\ell_p$-halfspaces with axis-aligned directions $v$.

\lemmaaxisalignedhalfspace*
\begin{proof}
    Without loss of generality, assume that $v = (1, 0, \dots, 0) \in S^{d - 1}$. By our definition of $\ell_p$-halfspaces, $z$ is contained in $\limitH^p_{x, v}$ if and only if $\norm{x - z}_p^p \leq \norm{x - \epsilon v - z}_p^p$ for all $\epsilon > 0$, which happens if and only if $|x_1 - z_1|^p \leq |x_1 - \epsilon - z_1|^p$. This last inequality holds if and only if $z_1 \geq x_1$, independently of $p$. This concludes the proof.
\end{proof}

The proofs of the other properties require some more effort. In particular, for \Cref{lem:structureoflimithalfspace}, we will need \Cref{obs:characterization_containment}, which we restate here for convenience.

\obscharacterizationcontainment*

With this, we are ready to prove \Cref{lem:structureoflimithalfspace}, which states that $\ell_p$-halfspaces are unions of rays originating at $x$. The proof mainly relies on convexity of $\ell_p$-balls and \Cref{obs:characterization_containment}.

\lemstructureoflimithalfspace*
\begin{proof}
    We start with the first part of the statement for arbitrary $p \in [1, \infty) \cup \{\infty\}$. For this, let $L$ be the line through $x$ in direction $v$, and let $R_-$ be the open ray from $x$ in direction $-v$. Let $B_z$ be the smallest $\ell_p$-ball around a point $z \in \R^d$ that contains $x$. From \Cref{obs:characterization_containment}, we know that 
    \[
        z \in \limitH^p_{x, v} \iff R_- \cap B_z^\circ = \emptyset, 
    \]
    where $B_z^\circ$ denotes the interior of $B_z$. 
    
    Consider what happens when we move $z$ along the ray through $x$ and $z$. In other words, consider a point $z' = x + \delta (z - x)$ for some $\delta > 0$. 
    By definition, $x$ is on the boundary of both balls $B_{z'}$ and $B_z$. Moreover, observe that the ball $B_{z'}$ is a translated (with center $z'$ instead of $z$) and scaled (such that $x$ stays on the boundary) version of $B_z$. In particular, both $B_z$ and $B_{z'}$ have the same tangent hyperplanes at the point $x$. Thus, we conclude 
    \[
        R_- \cap B_z^\circ = \emptyset \iff R_- \cap B_{z'}^\circ = \emptyset,
    \]
    which means that $z' \in \limitH^p_{x, v}$ if and only if $z \in \limitH^p_{x, v}$. This proves that $\limitH^p_{x, v}$ is a union of rays starting at $x$. 

    It remains to prove that the boundary $\partial \limitH^p_{x, v}$ of $\limitH^p_{x, v}$ is a union of lines through $x$, assuming that $p \in (1, \infty)$. For this, observe that the $\ell_p$-balls for $p \in (1, \infty)$ are strictly convex. We claim that a point $z$ is on the boundary of $\limitH^p_{x, v}$ if and only if $B^\circ_z \cap L = \emptyset$. Indeed, any point $z$ with $B^\circ_z \cap L = \emptyset$ is clearly contained in $\limitH^p_{x, v}$ (by \Cref{obs:characterization_containment}). Moreover, moving $z$ into the direction $-v$ yields a point $z'$ with $B^\circ_{z'} \cap R_- \neq \emptyset$ (by strict convexity of $\ell_p$-balls). We conclude $z \in \partial \limitH^p_{x, v}$. Conversely, consider an arbitrary point $z$ with $B^\circ_z \cap L \neq \emptyset$. If $B^\circ_z$ intersects $R_-$, slightly moving $z$ does not change this. Concretely, any point $z'$ in a small neighborhood around $z$ satisfies $B^\circ_{z'} \cap R_- \neq \emptyset$, and thus none of these points is contained in the halfspace. We conclude that $z$ cannot be on the boundary. Otherwise, $B^\circ_z$ intersects $R_+ \coloneqq L \setminus (R_- \cup \{x\})$, but then slightly moving $z$ does not change this. In other words, any point $z'$ in a small neighborhood around $z$ satisfies $B^\circ_{z'} \cap R_+ \neq \emptyset$, which also implies $B^\circ_{z'} \cap R_- = \emptyset$ (as otherwise, $x$ would be contained in $B^\circ_{z'}$ by convexity). We conclude that $z$ is not on the boundary.

    With this characterization of the boundary points, we can now finish the proof. Consider an arbitrary $z \in \partial \limitH^p_{x, v}$. We must have $B^\circ_z \cap L = \emptyset$. In other words, $L$ is tangential to $B_z$. This remains true for any other point $z' = x + \delta (z - x)$ for $\delta \in \R$ on the line through $x$ and $z$ with ball $B_{z'}$ (using point-symmetry of $\ell_p$-balls around their center for $\delta < 0$). This again implies $z' \in \partial \limitH^p_{x, v}$.
\end{proof}

With \Cref{lem:structureoflimithalfspace} at our disposal, we now want to tackle \Cref{lem:anglesonlimithalfspaces} next. However, before getting there, we will need to recall some tools from convex analysis. Concretely, we will need the notion of subgradients of a convex function $f : \R^d \rightarrow \R$: a vector $u \in \R^d$ is a subgradient of $f$ at $x \in \R^d$ if and only if 
\[
    f(x') - f(x) \geq \langle u, (x' - x) \rangle
\]
for all $x' \in \R^d$. The set $\partial f(x) \subseteq \R^d$ of all subgradients of $f$ at $x$ is also called the subdifferential. If $f$ is differentiable at $x$, then $\partial f(x) = \{\nabla f(x)\}$. For more details on convex analysis, we refer to the standard textbook by Rockafellar~\cite{rockafellarConvexAnalysis1970}.

Subgradients are useful because they allow us to further characterize containment of a given point $z$ in an $\ell_p$-halfspace $\limitH^p_{x, v}$ as follows.

\lemmasubgradientscontainment*
\begin{proof}
    Assume first that there exists such a subgradient $u \in \partial ||z - x||_p$ with $\langle u, v \rangle \geq 0$. Choosing $x' = z - x + \epsilon v$ in the definition of subgradients, we conclude $||z - x + \epsilon v||_p - ||z - x||_p \geq \epsilon \langle u, v \rangle \geq 0$ for all $\epsilon > 0$. Thus, $z$ is contained in the halfspace. Conversely, assume $z \in \limitH^p_{x, v}$. By \Cref{obs:characterization_containment}, this implies that the intersection of the open ray $R_-$ from $x$ in direction $-v$ with the interior $B^\circ_z$ of the smallest ball $B_z$ containing $x$ around $z$ is empty. Both of these objects are convex. Thus, there exists a hyperplane separating $R_-$ and $B^\circ_z$. Note that this hyperplane must go through $x$, and a scaling of its normal vector must be a subgradient in $\partial ||x - z||_p$. Naturally, this subgradient $u$ at $x$ points away from $B^\circ_z$, and we have $\langle u, -v \rangle \geq 0$. It remains to observe $-u \in \partial ||z - x||_p$ ($-u$ is the desired subgradient).
\end{proof}

In order to use \Cref{lemma:subgradients_and_containment}, we need to determine the subgradients of the $\ell_p$-norm for all $p \in [1, \infty) \cup \{\infty\}$. We start with the cases $p \in (1, \infty)$, and focus on those points $z \in \R^d$ with $||z||_p = 1$. Note that by \Cref{lem:structureoflimithalfspace}, knowledge about the cases with $||z||_p = 1$ already suffices to make full use of \Cref{lemma:subgradients_and_containment}.

\begin{observation}
\label{obs:differentiable_cases}
    For $p \in (1, \infty)$ and arbitrary $z \in \R^d$ with $||z||_p = 1$, $||\cdot||_p$ is differentiable at $z$ and the gradient $\nabla ||z||_p$ is given by
    \[
        \nabla ||z||_p=\begin{bmatrix}
            |z_1|^{p-1} sign(z_1) \\
            \vdots \\
            |z_d|^{p-1} sign(z_d)
        \end{bmatrix}.
    \]
    In particular, we have $\langle \nabla ||z||_p, z \rangle = \left\lVert  \nabla \left( ||z||_p  \right) \right\rVert_{\frac{p}{p - 1}} =||z||_p = 1$.
\end{observation}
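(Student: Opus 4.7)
The plan is to reduce the claim to a routine chain-rule calculation, being careful only about the differentiability at coordinates where $z_i = 0$.

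First I would observe that for $p \in (1, \infty)$, the scalar function $t \mapsto |t|^p$ is continuously differentiable on all of $\R$ with derivative $p |t|^{p - 1}\sign(t)$ (interpreting $\sign(0) = 0$). The only nontrivial point is $t = 0$, where both one-sided derivatives vanish because $p > 1$. Consequently $z \mapsto \|z\|_p^p = \sum_{i=1}^d |z_i|^p$ is continuously differentiable on all of $\R^d$ with partial derivatives $p |z_i|^{p-1}\sign(z_i)$.

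Next I would apply the chain rule to $\|z\|_p = (\|z\|_p^p)^{1/p}$ at a point with $\|z\|_p \neq 0$, noting that the outer function $s \mapsto s^{1/p}$ is differentiable on $(0, \infty)$. This yields
\[
    \nabla \|z\|_p = \frac{1}{p}\bigl(\|z\|_p^p\bigr)^{\frac{1}{p} - 1} \cdot \bigl(p|z_i|^{p-1}\sign(z_i)\bigr)_{i \in [d]}.
\]
Specializing to $\|z\|_p = 1$ makes the scalar prefactor equal to $1$, giving the asserted formula.

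Finally I would verify the two identities on the right. For the inner product, direct expansion gives
\[
    \langle \nabla \|z\|_p, z \rangle = \sum_{i = 1}^d |z_i|^{p-1}\sign(z_i) \cdot z_i = \sum_{i=1}^d |z_i|^p = \|z\|_p^p = 1.
\]
For the dual-norm identity, using $q = p/(p-1)$ so that $(p-1)q = p$ and $1/q = (p-1)/p$, I would compute
\[
    \|\nabla \|z\|_p\|_{q} = \Bigl(\sum_{i=1}^d |z_i|^{(p-1)q}\Bigr)^{1/q} = \Bigl(\sum_{i=1}^d |z_i|^p\Bigr)^{(p-1)/p} = \|z\|_p^{\,p - 1} = 1.
\]
There is no real obstacle here; the only subtlety is checking that differentiability of $|t|^p$ at $t = 0$ really holds for $p > 1$, which is why the statement excludes $p = 1$.
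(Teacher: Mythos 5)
Your proof is correct. The paper states this as an \emph{observation} and gives no proof at all, treating it as a routine calculus fact; your chain-rule argument --- $C^1$-differentiability of $t \mapsto |t|^p$ (including at $t=0$ since $p>1$), composing with $s \mapsto s^{1/p}$ away from zero, and then verifying the two scalar identities directly --- is exactly the standard justification one would expect to be left implicit, and all steps check out.
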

Next, we characterize the subgradients for $||\cdot||_\infty$ and $||\cdot||_1$ on points $z \in \R^d$ with $||z||_\infty = 1$ ($||z||_1 = 1$, respectively). Note that these subdifferentials can also be found in the literature (see, e.g.,~\cite{rockafellarConvexAnalysis1970}). 
\begin{observation}
\label{obs:l_infty_subdifferential}
    Consider arbitrary $z \in \R^d$ with $||z||_\infty = 1$. The set of subgradients $\partial ||z||_\infty$ is given by
    \[
        \partial ||z||_\infty= \conv \left( \{-e_i \mid z_i = -1 \} \cup \{e_i \mid z_i = 1\} \right), 
    \]
    where $e_i$ denotes the $i$-th standard unit vector in $\R^d$. In particular, we have $\langle u, z \rangle = ||u||_1 = ||z||_\infty = 1$ for all $u \in \partial ||z||_\infty$.
\end{observation}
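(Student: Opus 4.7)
The plan is to realize $||\cdot||_\infty$ as a pointwise maximum of linear functionals and then invoke the standard subdifferential rule for such maxima. Concretely, I would write
\[
    ||z||_\infty = \max_{i \in [d],\, \sigma \in \{-1, +1\}} \langle \sigma e_i,\, z \rangle,
\]
so that a pair $(i, \sigma)$ is active (attains the maximum) at $z$ precisely when $\sigma z_i = ||z||_\infty = 1$. This forces $\sigma = +1$ with $z_i = 1$, or $\sigma = -1$ with $z_i = -1$, which identifies the active generators as exactly $\{e_i : z_i = 1\} \cup \{-e_i : z_i = -1\}$. The subdifferential of a finite pointwise maximum of affine functions is the convex hull of the gradients of the active pieces (see, e.g., Rockafellar~\cite{rockafellarConvexAnalysis1970}), and the gradients here are simply the vectors $\sigma e_i$ themselves. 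This yields the claimed formula for $\partial ||z||_\infty$.

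For the two norm identities, fix any $u \in \partial ||z||_\infty$ and write $u = \sum_j \alpha_j v_j$ as a convex combination of generators drawn from $\{e_i : z_i = 1\} \cup \{-e_i : z_i = -1\}$. Each generator satisfies $\langle v_j, z \rangle = 1$ by construction, so $\langle u, z \rangle = \sum_j \alpha_j \langle v_j, z \rangle = \sum_j \alpha_j = 1$. For the $\ell_1$-norm, the crucial observation is that the generators have pairwise disjoint support coordinates: for any index $i$, it is impossible that both $z_i = +1$ and $z_i = -1$, so at most one of $e_i$ or $-e_i$ appears in the generating set. Therefore no coordinate-wise cancellation occurs when assembling $u$, and $||u||_1 = \sum_j \alpha_j ||v_j||_1 = \sum_j \alpha_j = 1$.

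The only nontrivial ingredient is the max formula for subdifferentials, which I intend to cite rather than reprove, and this is the step I anticipate as the main (though very mild) obstacle. If a self-contained derivation were preferred, the nontrivial inclusion could be obtained by feeding perturbations $z' = z + tw$ into the subgradient inequality: for sufficiently small $t > 0$, only coordinates with $|z_i| = 1$ govern $||z + tw||_\infty$, so $||z + tw||_\infty - 1$ simplifies to $t \cdot \max\{\max_{z_i = 1} w_i,\, \max_{z_i = -1} (-w_i)\}$, showing that the support function of the claimed generating set dominates $\langle u, \cdot \rangle$. Membership of $u$ in the convex hull then follows from the bipolar theorem.
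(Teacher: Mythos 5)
The paper does not actually prove this observation; it remarks that the $\ell_\infty$- and $\ell_1$-subdifferentials can be found in the literature and cites Rockafellar. Your proposal therefore supplies a proof where the paper offers a citation, and the proof is correct. Writing $\norm{z}_\infty$ as the pointwise maximum of the $2d$ linear functionals $\langle \sigma e_i, \cdot\rangle$, identifying the active generators at a point with $\norm{z}_\infty = 1$, and invoking the standard max-rule for subdifferentials of a finite maximum of affine functions is exactly the textbook route, and it cleanly yields the claimed convex-hull description. Your derivation of $\langle u, z\rangle = 1$ from the fact that every active generator pairs to $1$ against $z$ is fine, and the observation that the active generators have pairwise disjoint supports (since $z_i$ cannot equal both $+1$ and $-1$) correctly rules out cancellation, giving $\norm{u}_1 = \sum_j \alpha_j = 1$. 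The sketched self-contained alternative via perturbations $z + tw$ and the support function is also sound, though unnecessary if you are content to cite the max-rule. In short: correct, standard, and more explicit than what the paper provides.
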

\obssubdifferentialmanhattan*

Consider a subgradient $u$ of $||\cdot ||_p$ at point $z$. Given the characterization of $\ell_p$-halfspaces in \Cref{lemma:subgradients_and_containment}, it will be useful to prove an upper bound on the angle between $u$ and $z$.
\begin{lemma}
\label{lemma:bound_angle_subgradients}
    Let $p \in [1, \infty) \cup \{ \infty\}$ and consider arbitrary $z \in \R^d$ with $||z||_p = 1$. Then we have $\measuredangle(u, z) \leq \frac{\pi}{2} - \sqrt{\nicefrac{1}{d}}$ for all subgradients $u \in \partial ||z||_p$.
\end{lemma}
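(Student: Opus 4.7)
The plan is to use the identity $\langle u, z \rangle = 1$ (which holds in every case by Observations~\ref{obs:differentiable_cases}, \ref{obs:l_infty_subdifferential}, and \ref{obs:l_1_subdifferential}), together with standard norm comparisons, to show $\cos(\measuredangle(u, z)) \geq 1/\sqrt{d}$, and then convert this back to the desired bound on the angle itself.

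First I would reduce everything to a bound on $\|u\|_2 \|z\|_2$. Since $\cos(\measuredangle(u, z)) = \frac{\langle u, z \rangle}{\|u\|_2 \|z\|_2}$ and $\langle u, z \rangle = 1$ in all three cases, it suffices to prove $\|u\|_2 \|z\|_2 \leq \sqrt{d}$. I would then split into three cases according to $p$ and use the standard inequalities $\|x\|_q \leq \|x\|_r$ for $q \geq r$ and $\|x\|_r \leq d^{1/r - 1/q} \|x\|_q$ for $q \geq r$:

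\begin{itemize}
\item For $p \in (1, \infty)$, let $q = p/(p-1)$ be the conjugate exponent, so $\|u\|_q = 1$. If $p \leq 2$ then $q \geq 2$, giving $\|z\|_2 \leq \|z\|_p = 1$ and $\|u\|_2 \leq d^{1/2 - 1/q} \|u\|_q \leq \sqrt{d}$. If $p \geq 2$, the symmetric argument yields $\|u\|_2 \leq 1$ and $\|z\|_2 \leq \sqrt{d}$.
\item For $p = 1$, $\|z\|_2 \leq \|z\|_1 = 1$ and $\|u\|_2 \leq \sqrt{d}\,\|u\|_\infty = \sqrt{d}$.
\item For $p = \infty$, $\|u\|_2 \leq \|u\|_1 = 1$ and $\|z\|_2 \leq \sqrt{d}\,\|z\|_\infty = \sqrt{d}$.
\end{itemize}
In all cases $\|u\|_2 \|z\|_2 \leq \sqrt{d}$, so $\cos(\measuredangle(u, z)) \geq 1/\sqrt{d}$.

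To finish, I would translate this back to an angle bound via the elementary inequality $\arccos(x) \leq \pi/2 - x$ for $x \in [0, 1]$, which follows from $\arccos(x) = \pi/2 - \arcsin(x)$ and $\arcsin(x) \geq x$. Applying it with $x = 1/\sqrt{d}$ yields $\measuredangle(u, z) \leq \arccos(1/\sqrt{d}) \leq \pi/2 - \sqrt{1/d}$, as desired.

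I do not expect any serious obstacle: the statement essentially boils down to the observations that the $\ell_p$-norm and its dual pair up to give $\langle u, z \rangle = 1$ and that neither $u$ nor $z$ can be too far from the diagonal in the $\ell_2$-sense, both of which reduce to routine norm comparisons. The only mild subtlety is keeping track of the two regimes $p \lessgtr 2$ so that the factor $\sqrt{d}$ is spent exactly once (either on $\|u\|_2$ or on $\|z\|_2$).
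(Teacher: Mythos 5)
Your argument is essentially identical to the paper's: both start from $\langle u,z\rangle = \|u\|_{p/(p-1)} = \|z\|_p = 1$, bound $\|u\|_2\|z\|_2\leq\sqrt{d}$ by spending a factor of $\sqrt{d}$ on whichever of the conjugate pair lies on the side $\geq 2$, and convert $\cos(\measuredangle(u,z))\geq 1/\sqrt{d}$ to the stated angle bound. The only stylistic difference is that you spell out the $p=1$ and $p=\infty$ cases separately and make the $\arccos(x)\leq\pi/2-x$ step explicit, whereas the paper folds the endpoints into the conjugate-exponent notation and leaves the final conversion implicit.
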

\begin{proof}
    We heavily use the previous observations. In particular, recall that we have $\langle u, z \rangle = ||u||_{\frac{p}{p - 1}} = ||z||_p = 1$ for all subgradients $u \in \partial ||z||_p$ (where $\frac{p}{p - 1}$ is interpreted as $1$ for $p = \infty$ and as $\infty$ for $p = 1$, respectively). We thus get
    \[
    \cos(\measuredangle(u, z)) 
    = \frac{ \langle u,  z \rangle }{||u||_2||z||_2} 
    = \frac{1}{||u||_2||z||_2 }.
    \]
    Recall the $\ell_p$-norm inequalities $||x||_2 \leq ||x||_p$ for $p \leq 2$ and  $||x||_2 \leq \sqrt{d}||x||_p$ for $2 \leq p$. Further, for any $p \in [1, \infty) \cup \{\infty\}$ we have $p < 2 \iff \frac{p}{p - 1} > 2$. Thus, we obtain
    \[
        \frac{1}{||u||_2||z||_2 } \geq \frac{1}{ \sqrt{d} ||u||_{\frac{p}{p - 1}} ||z||_p }
        = \frac{1}{\sqrt{d}}.
    \]
    Thus, we conclude $\measuredangle(u, z) \leq \frac{\pi}{2} - \frac{1}{\sqrt{d}}$, as desired.
\end{proof}

With this theory in place, we are now ready to prove \Cref{lem:anglesonlimithalfspaces}. 

\lemanglesonlimithalfspaces*
\begin{proof}
    By \Cref{lem:structureoflimithalfspace} (every halfspace is a union of rays), we can assume $||z - x||_p = 1$, without loss of generality. Now consider the first statement of the lemma. \Cref{lemma:subgradients_and_containment} says that $z \in \limitH^p_{x, v}$ implies $\langle u, v \rangle \geq 0$ for some subgradient $u \in \partial ||z - x||_p$. Concretely, this means that $\measuredangle (u, v) \leq \frac{\pi}{2}$. Moreover, by \Cref{lemma:bound_angle_subgradients}, we have $\measuredangle (u, z - x) \leq \frac{\pi}{2} - \sqrt{\nicefrac{1}{d}}$. Combining the two bounds, we conclude $\measuredangle(\overrightarrow{xz}, v) = \measuredangle(z - x, v) \leq \pi - \sqrt{\nicefrac{1}{d}}$, as desired.

   For the second statement, assume that $z$ satisfies $\measuredangle(\overrightarrow{xz}, v) \leq \sqrt{\nicefrac{1}{d}}$.
   By \Cref{lemma:bound_angle_subgradients}, there exists a subgradient $u \in \partial ||z - x||_p$ with $\measuredangle (u, z - x) \leq \frac{\pi}{2} - \sqrt{\nicefrac{1}{d}}$. We conclude $\measuredangle(u, v) \leq \pi$, which implies $z \in \limitH^p_{x, v}$ by \Cref{lemma:subgradients_and_containment}.
\end{proof}

Finally, we can prove \Cref{lemma:pull_towards_zero} using \Cref{lem:anglesonlimithalfspaces}.

\lemmapulltowardszero*
\begin{proof}
    By \Cref{lem:anglesonlimithalfspaces}, it suffices to prove that all points $z \in [0, 1]^d$ satisfy $\measuredangle(\overrightarrow{xz}, -x) \leq \sqrt{\nicefrac{1}{d}}$. 
    To see this, consider the triangle spanned by the origin $0 \in \R^d$ and the two points $z$ and $x$. We know that $||z-0||_2 = ||z||_2 \leq \sqrt{d}$, and $||x-0||_2= ||x||_2 > 2d$. By the law of sines, we get
    \begin{align*}
        \sin (\measuredangle(\overrightarrow{xz}, \overrightarrow{x0})) &= ||z-0||_2  \frac{\sin(\measuredangle(\overrightarrow{zx},\overrightarrow{z0}))}{||x-0||_2} \\
        &\leq \sqrt{d} \frac{\sin(\measuredangle(\overrightarrow{zx},\overrightarrow{z0}))}{||x-0||_2} \\
        &<  \sqrt{d}\frac{1}{2d} \\
        &= \frac{1}{2\sqrt{d}}
    \end{align*}
    which implies $\measuredangle(\overrightarrow{xz}, \overrightarrow{x0}) < \sqrt{\nicefrac{1}{d}}$.
\end{proof}

\subsection{$\ell_p$-Halfspaces and Mass Distributions}
\label{ssec:properties_II}

We now move on to the interaction of $\ell_p$-halfspaces with mass distributions. We briefly recall that mass distributions are finite absolutely continuous (w.r.t.\ the Lebesgue measure) measures on $\R^d$. As mentioned in \Cref{ssec:properties}, by the Radon-Nikodym theorem, we can therefore think of mass distributions as probability distributions with a density (by assuming $\mu(\R^d) = 1$, without loss of generality). We will use this point of view repeatedly in this section: it allows us to use the language of probability theory (instead of the language of measure theory).

Before we prove \Cref{lemma:continuity_halfspace_mass}, i.e., that $\mu(\limitH_{x,v}^p)$ is continuous in $x$, we first want to prove that the boundary of an $\ell_p$-halfspace has measure $0$. The following lemma will help with this.

\begin{lemma}[Inside and Outside Orthant]
\label{lemma:inside_outside_orthant}
    Let $\limitH^p_{x, v}$ be an arbitrary $\ell_p$-halfspace for some $p \in [1, \infty) \cup \{\infty\}$. Consider arbitrary $z, z' \in \R^d$ satisfying 
    both 
    \begin{itemize}
        \item $v_i > 0 \implies z'_i - z_i \geq 0$
        \item and $v_i < 0 \implies z'_i - z_i \leq 0$ 
    \end{itemize} 
    for all $i \in [d]$. Then  $z \in \limitH^p_{x, v}$ implies $z' \in \limitH^p_{x, v}$. Similarly, for any $z \notin \limitH^p_{x, v}$, all $z' \in \R^d$ satisfying $v_i > 0 \implies z'_i - z_i \leq 0$ and $v_i < 0 \implies z'_i - z_i \geq 0$ for all $i \in [d]$ are not contained in $\limitH^p_{x, v}$.
\end{lemma}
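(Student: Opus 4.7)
The natural strategy is to reduce the two-point statement to a single-variable comparison in each coordinate. For $p \in [1, \infty)$, I would raise the defining inequality $\norm{x - z}_p \leq \norm{x - \epsilon v - z}_p$ to the $p$-th power so that it decouples as $\sum_i (|y_i - \epsilon v_i|^p - |y_i|^p) \geq 0$, where $y_i := x_i - z_i$. Setting $y'_i := x_i - z'_i$, the orthant hypothesis translates into $y'_i \leq y_i$ whenever $v_i > 0$, $y'_i \geq y_i$ whenever $v_i < 0$, and $y'_i$ unconstrained whenever $v_i = 0$. The plan is then to prove the stronger coordinate-wise inequality $\phi_i(y'_i) \geq \phi_i(y_i)$ for the single-variable function $\phi_i(t) := |t - \epsilon v_i|^p - |t|^p$; summing over $i$ and using the hypothesis immediately gives the desired conclusion for every $\epsilon > 0$.

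The technical core is therefore a monotonicity claim for $\phi_i$: non-increasing on $\R$ when $v_i > 0$, non-decreasing when $v_i < 0$, and identically zero when $v_i = 0$ (so the unconstrained movement of $y'_i$ in the last case is automatically harmless). I would verify this by splitting into the three intervals determined by the signs of $t$ and $t - \epsilon v_i$; on each interval $\phi_i$ is differentiable, and the derivative reduces to a signed difference of two $(p-1)$-th powers whose sign is fixed by the elementary comparison between $|t|$, $|t - \epsilon v_i|$, and $|\epsilon v_i|$. A continuity check at the interval boundaries patches these together to monotonicity on all of $\R$. Combined with the orthant hypothesis, this gives the coordinate-wise inequality, and hence the first statement of the lemma for $p \in [1, \infty)$. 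The second statement then follows as the contrapositive of the first with the roles of $z$ and $z'$ exchanged, since the two orthant conditions are exact reverses of one another.

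The main obstacle is the case $p = \infty$, where the norm is not a sum over coordinates and the decoupling strategy breaks down. In particular, freely moving a coordinate with $v_i = 0$ can change which index attains the maximum in $\norm{x - z'}_\infty$, so a direct coordinate-by-coordinate reduction is unavailable. For this case I would instead use the ray characterization (Observation 3.4): writing $B_{z'}$ for the smallest $\ell_\infty$-ball through $x$ centered at $z'$, I would argue geometrically that shifting the center of this axis-aligned box in the direction dictated by the orthant hypothesis keeps its interior disjoint from the open ray $R_-$ from $x$ in direction $-v$. An alternative route would be to invoke the subgradient characterization (Lemma A.4) together with the explicit description of $\partial \norm{\cdot}_\infty$ (Observation A.6) and build a witness subgradient $u' \in \partial \norm{z' - x}_\infty$ with $\langle u', v\rangle \geq 0$ by modifying a witness $u$ for $z$ coordinate-by-coordinate in a manner compatible with the orthant condition; carrying this out carefully enough to cover all degenerate alignments is the step I expect to be most delicate.
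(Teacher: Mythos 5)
Your argument for $p \in [1, \infty)$ is correct and essentially identical to the paper's: after raising the defining inequality to the $p$-th power, the comparison decouples coordinate-wise, and both you and the paper reduce to the monotonicity of a one-variable function (your $\phi_i(t) = |t - \epsilon v_i|^p - |t|^p$ is the paper's $f(\beta) = |\beta + \gamma|^p - |\beta|^p$ under a change of variables); the case analysis on the sign pattern has the same content. Your derivation of the second statement as the contrapositive of the first with $z$ and $z'$ swapped is a small clean-up over the paper, which redoes the computation directly.

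The genuine gap is $p = \infty$, and your caution there is more than warranted: the statement as written is in fact \emph{false} for $p = \infty$ whenever $v$ has a zero coordinate, precisely because of the phenomenon you identified---freely moving a $v_i = 0$ coordinate can change which index realizes the max. Concretely, in $d = 2$ take $x = 0$, $v = (1, 0)$, $z = (-1, 2)$, $z' = (-1, 0)$. The orthant hypothesis holds ($z'_1 - z_1 = 0$, and $v_2 = 0$ imposes nothing), and $z \in \limitH^\infty_{x,v}$ since $\norm{x - z}_\infty = 2 = \max(|1 - \epsilon|, 2) = \norm{x - \epsilon v - z}_\infty$ for small $\epsilon > 0$, yet $z' \notin \limitH^\infty_{x,v}$ since $\norm{x - z'}_\infty = 1 > 1 - \epsilon = \norm{x - \epsilon v - z'}_\infty$. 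So neither the axis-aligned box/ray route nor the subgradient route can close this gap for the lemma in its stated form; one must add the hypothesis $z'_i = z_i$ whenever $v_i = 0$ for the $\ell_\infty$ case. (For comparison, the paper disposes of $p = \infty$ with the one-line remark that it ``follows from $p = 1$,'' which likewise does not account for the $v_i = 0$ coordinates; the paper's WLOG step $z'_i = z_i + \alpha v_i$ only ever moves along coordinates with $v_i \neq 0$, so it implicitly proves only the restricted version as well.)
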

\begin{proof}
    Let $z \in \limitH^p_{x, v}$ be arbitrary and consider first the case $p \neq \infty$. 
    Without loss of generality, we can consider the case where $z'$ and $z$ only differ in one coordinate, i.e., $z'_i = z_i + \alpha v_i$ for some $i \in [d]$ and $\alpha > 0$, and $z'_j = z_j$ for all $j \in [d] \setminus \{i\}$. By definition, we must have 
    \[
        \norm{x - z}^p_p \leq \norm{x - \epsilon v - z}^p_p
    \]
    for all $\epsilon > 0$. If we can prove that 
    \begin{equation}\label{ineq:coordinatebetter}
        |x_i - \epsilon v_i - z_i|^p - |x_i - z_i|^p \leq |x_i - \epsilon v_i - z'_i|^p - |x_i - z'_i|^p
    \end{equation}
    for all $\epsilon > 0$, we are done, because then 
    \[
    0 \leq \norm{x - \epsilon v - z}^p_p - \norm{x - z}^p_p \leq \norm{x - \epsilon v - z'}^p_p - \norm{x - z'}^p_p
    \]
    holds for all $\epsilon > 0$ and thus $z' \in \limitH^p_{x, v}$. \Cref{ineq:coordinatebetter} follows from the observation that
    \[
        |x_i - \epsilon v_i - z'_i|^p - |x_i - z'_i|^p = |x_i - (\alpha + \epsilon) v_i - z_i|^p - |x_i - \alpha v_i - z_i|^p
    \]
    and that the function 
    $f(\beta) \coloneqq |\beta + \gamma|^p - |\beta|^p$ is for all $p\in[1,\infty)$ monotonically non-increasing for $\gamma \leq 0$ and monotonically non-decreasing for $\gamma \geq 0$. The proof for $p=\infty$ follows from $p=1$, since the $\ell_\infty$-norm is just taking the maximum over the constituents of the $\ell_1$-norm.
    
    Consider now $z \notin \limitH^p_{x, v}$ and consider again first the case $p \neq \infty$. As before, we can assume without loss of generality that $z'_i = z_i - \alpha v_i$ for some $i \in [d]$ and $\alpha > 0$, and $z'_j = z_j$ for all $j \in [d] \setminus \{i\}$. We want to prove $z' \notin \limitH^p_{x, v}$. By $z \notin \limitH^p_{x, v}$, there must exist $\epsilon > 0$ such that
    \[
        \norm{x - z}^p_p > \norm{x - \epsilon v - z}^p_p.
    \]
    If we can prove 
    \begin{equation}\label{ineq:coordinateTwo}
        |x_i - z_i|^p - |x_i - \epsilon v_i - z_i|^p \leq |x_i - z'_i|^p - |x_i - \epsilon v_i - z'_i|^p,
    \end{equation}
    we get 
    \[
    0 < \norm{x - z}^p_p   - \norm{x - \epsilon v - z}^p_p  \leq \norm{x - z'}^p_p - \norm{x - \epsilon v - z'}^p_p 
    \]
    and thus $z' \notin \limitH^p_{x, v}$. The inequality \Cref{ineq:coordinateTwo}
    follows from the observation that
    \[
        |x_i - z'_i|^p - |x_i - \epsilon v_i - z'_i|^p = |x_i + \alpha v_i - z_i|^p - |x_i + (\alpha - \epsilon) v_i - z_i|^p
    \]
    and that the function 
    $f(\beta) \coloneqq |\beta|^p - |\beta - \gamma|^p$  is for all $p\in[1,\infty)$ monotonically non-increasing for $\gamma \leq 0$ and monotonically non-decreasing for $\gamma \geq 0$. The proof for $p=\infty$ again follows from $p=1$.
\end{proof}

This now lets us conclude that the boundary of the halfspace has measure $0$.
\begin{corollary}[Boundary has Measure $0$]
\label{corollary:boundary_measure_zero}
    Let $p \in [1, \infty) \cup \{\infty\}$. The boundary $\partial \limitH^p_{x, v}$ of the $\ell_p$-halfspace $\limitH^p_{x, v}$ has Lebesgue measure $0$ for all $x \in \R^d$ and $v \in S^{d - 1}$. 
\end{corollary}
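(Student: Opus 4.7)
The plan is to slice $\R^d$ into lines parallel to $v$ and show that on each such line the boundary $\partial A$ (with $A := \limitH^p_{x,v}$) is either empty or a single point; Fubini's theorem then yields $\vol(\partial A) = 0$. The first observation is that $A$ is closed, being the intersection over $\epsilon > 0$ of the closed sets $\{z \in \R^d : \norm{x-z}_p \leq \norm{x - \epsilon v - z}_p\}$.

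Fix a hyperplane $H$ orthogonal to $v$ and write $L_y := \{y + tv : t \in \R\}$ for $y \in H$. Applying \Cref{lemma:inside_outside_orthant} to displacements of the form $z' = z + tv$ with $t > 0$ (whose coordinates trivially satisfy the sign conditions imposed by $v$) shows that the slice $S_y := \{t \in \R : y + tv \in A\}$ is upward-closed. Combined with the closedness of $A$, this forces $S_y$ to equal $\emptyset$, all of $\R$, or a closed half-line $[\phi(y), \infty)$ with $\phi(y) \in \R$. Setting $\phi(y) := \inf S_y \in [-\infty, +\infty]$, I would then observe that $\phi$ is upper semicontinuous: for any $c \in \R$, the sublevel set $\{y \in H : \phi(y) \leq c\}$ equals $\{y \in H : y + cv \in A\}$, which is closed as the preimage of $A$ under the continuous map $y \mapsto y + cv$.

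Using upper semicontinuity, I would verify that $\partial A \cap L_y = \{y + \phi(y) v\}$ when $\phi(y) \in \R$ and $\partial A \cap L_y = \emptyset$ otherwise. If $\phi(y) = +\infty$, then $L_y \cap A = \emptyset$, so $L_y \cap \partial A = \emptyset$ since $\partial A \subseteq A$. If $\phi(y) \in \R$, the point $y + \phi(y) v$ lies in $A$ by closedness of $S_y$ but not in $A^\circ$, since moving slightly in the $-v$ direction exits $A$; meanwhile, for any $t > \phi(y)$, upper semicontinuity supplies a neighborhood $U \subseteq H$ of $y$ with $\phi(y') < t$ for all $y' \in U$, which produces an open box around $y + tv$ entirely inside $A$ and thus places $y + tv$ in $A^\circ$. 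The case $\phi(y) = -\infty$ is analogous, as upper semicontinuity then forces $\phi(y')$ to be arbitrarily negative near $y$, so every point of $L_y$ lies in $A^\circ$.

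Consequently, $\partial A$ is the graph of the Borel function $\phi$ over the measurable set $\{\phi \in \R\} \subseteq H$, and each vertical slice of this graph is a single point. Fubini's theorem therefore gives $\vol(\partial A) = 0$. The main subtlety I expect is the seemingly innocuous claim that points $y + tv$ with $t > \phi(y)$ are \emph{interior} to $A$ rather than merely contained in $A$; this is precisely where the upper semicontinuity of $\phi$ is essential, and the upper semicontinuity itself hinges on the closedness of $A$ coming from the defining intersection over all $\epsilon > 0$.
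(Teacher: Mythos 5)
Your slicing-and-Fubini argument is a genuinely different route from the paper, which dispatches the corollary in two sentences by showing $\partial \limitH^p_{x,v}$ is \emph{porous} (every sufficiently small ball contains a proportionally-sized sub-ball missing the boundary, established directly from \Cref{lemma:inside_outside_orthant}), a standard sufficient condition for being Lebesgue-null. However, there is a real error in your semicontinuity step, and it sits precisely at the point you yourself flag as the crux. You correctly show that $\{y \in H : \phi(y) \leq c\}$ equals $\{y \in H : y + cv \in A\}$ and is therefore closed for every $c$, but closed sublevel sets are the definition of \emph{lower} semicontinuity, not upper. What you then use — that for $t > \phi(y)$ there is a neighborhood $U$ of $y$ with $\phi(y') < t$ for all $y' \in U$, i.e.\ that $\{\phi < t\}$ is open — is upper semicontinuity, and it does not follow from what you established. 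As written, the chain of implications breaks, and with it the claim that $y+tv$ is interior to $A$ for $t > \phi(y)$.

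The good news is that the conclusion is true and the fix is short: you never needed semicontinuity of $\phi$ at all, only \Cref{lemma:inside_outside_orthant} applied one more time. Since $y + \phi(y)v \in A$ (by closedness of $S_y$), the lemma places inside $A$ the entire translated orthant with apex $y+\phi(y)v$, namely $\{z' : z'_i \geq (y+\phi(y)v)_i \text{ if } v_i > 0,\ z'_i \leq (y+\phi(y)v)_i \text{ if } v_i < 0,\ z'_i \text{ free if } v_i = 0\}$. This orthant is full-dimensional, and for $t > \phi(y)$ the point $y+tv$ lies in its open interior (each constrained coordinate of $(t-\phi(y))v$ has the correct strict sign), hence $y+tv \in A^\circ$. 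The same reasoning handles $\phi(y) = -\infty$, and as a byproduct it is what actually proves $\phi$ \emph{is} upper semicontinuous — your stated derivation just wasn't the reason. With that repair, $\partial A$ meets each fiber $L_y$ in at most one point, and since $\partial A$ is closed (hence measurable) Fubini gives $\vol(\partial A) = 0$ as you intended.
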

\begin{proof}
    We prove that $\partial \limitH^p_{x, v}$ is porous, i.e., we prove that there exists $0 < \alpha < 1$ such that for every sufficiently small $r > 0$ and for every $z \in \R^d$, there is $y \in \R^d$ such that $B^p(y, \alpha r) \subseteq B^p(z, r) \setminus \partial \limitH^p_{x, v}$. This is a sufficient condition for a subset of $\R^d$ to have measure $0$.
    Porosity follows from \Cref{lemma:inside_outside_orthant} by distinguishing between the cases $z \in \limitH^p_{x, v}$ and $z \notin \limitH^p_{x, v}$.
\end{proof}

We are now ready to prove \Cref{lemma:continuity_halfspace_mass}. While this could be proven purely in terms of measure theory, we opted for the language of probability theory instead.

\lemmacontinuityhalfspacemass*
\begin{proof}
     Fix a mass distribution $\mu$, and fix $v\in S^{d-1}$ and $p\in[1,\infty)\cup\{\infty\}$. Without loss of generality, assume $\mu(\R^d) = 1$. Let $x\in \R^d$ be arbitrary, and consider an arbitrary sequence $(x_n)_{n \in \N}$ converging to $x$. For $n \in \N$, let $X_n$ denote the indicator random variable for a point drawn at random (w.r.t.\ $\mu$) to lie in the set $\limitH^p_{x_n, v}$. Similarly, let $X$ denote the indicator random variable for containment in $\limitH^p_{x, v}$. We claim that $X_n$ converges almost surely to $X$ as $n$ goes to infinity. To see this, observe first that we naturally have 
     \[
        \lim_{n \rightarrow \infty} \norm{z - x_n}_p = \norm{z - x}_p
    \]
     for all $z \in \R^d$. Now let $z \notin \limitH^p_{x, v}$ be arbitrary (i.e., $X(z) = 0$). Then there exist $\epsilon, \delta > 0$ such that $\norm{x - \epsilon v - z}_p < \norm{x - z}_p - \delta$. Thus, for $n \in \N$ large enough we also have $X_n(z) = 0$. Consider now an arbitrary $z$ in the interior of $\limitH^p_{x, v}$. By definition, there exists $\delta > 0$ such that a $\delta$-ball around $z$ is contained in $\limitH^p_{x, v}$. In other words, we have 
     \[
        \norm{x - \epsilon v - z'}_p \geq \norm{x - z'}_p
     \]
     for all $\epsilon > 0$ and $z'$ with $\norm{z' - z}_p \leq \delta$. For large enough $n \in \N$, we can therefore choose $z'$ such that $x - z' = x_n - z$, which yields
     \[
        \norm{x_n - \epsilon v - z}_p \geq \norm{x_n - z}_p
     \]
     and thus $\lim_{n \rightarrow \infty} X_n(z) = 1$. We conclude that the set $\{z \in \R^d \mid \lim_{n \rightarrow \infty} X_n(z) = X(z) \}$ includes all of $\R^d$ except for the boundary of $\limitH^p_{x, v}$. By \Cref{corollary:boundary_measure_zero},
     this boundary has measure zero and we get almost sure convergence. Using the dominated convergence theorem, this implies convergence in mean and thus
     \[
        \lim_{n \rightarrow \infty} \mu(\limitH^p_{x_n, v}) = \lim_{n \rightarrow \infty} \mathbb{E}(X_n) = \mathbb{E}(X) = \mu(\limitH^p_{x, v}),
     \]
     as desired.
\end{proof}

While we cannot prove for all $p$ that $\mu(\limitH_{x,v}^p)$ is continuous in $v$, we at least prove \Cref{lemma:pulling_directions_positive_measure}, which states that the set $V$ of directions $v$ for which $\mu(\limitH^p_{x,-v})$ is strictly smaller than some threshold $t$, is open.

\lemmapullingdirectionspositivemeasure*
\begin{proof}
    Fix $\mu$ and $p \in [1, \infty) \cup \{\infty\}$ and let $x \in \R^d$ be arbitrary. Assume without loss of generality that $\mu(\R^d)$. Instead of directly proving that $V$ is open, we will prove that its complement
    \[ 
        S^{d - 1} \setminus V = \{ v \in S^{d - 1} \mid \mu(\limitH^p_{x, v}) \geq t \}
    \]

    is closed. To this end, consider an arbitrary sequence $(v_n)_{n \in \N}$ converging to $v \in S^{d - 1}$ satisfying $v_n \in S^{d - 1} \setminus V $ for all $n \in \N$. We will prove that then $v \in S^{d - 1} \setminus V $. Let $X_n$ be the indicator random variable for containment of a point drawn according to $\mu$ in $\limitH^p_{x, v_n}$. Observe that by $\lim_{n \rightarrow \infty} \norm{x - \epsilon v_n - z}_p = \norm{x - \epsilon v - z}_p$ for all $z \in \R^d$, the sequence of random variables $(X_n)_{n \in \N}$ converges point-wise to a random variable $X$. In other words, $(X_n)_{n \in \N}$ converges almost surely to $X$, and by the dominated convergence theorem we get convergence in mean. Concretely, this means that we have 
    \[
        \lim_{n \rightarrow \infty} \mu(\limitH^p_{x, v_n}) = \mathbb{E} X \geq  t.
    \]
    Let now $z \notin \limitH^p_{x, v}$ be arbitrary. Then there exist $\epsilon, \delta > 0$ such that $\norm{x - \epsilon v - z}_p < \norm{x - z}_p - \delta$. Using $\lim_{n \rightarrow \infty} \norm{x - \epsilon v_n - z}_p = \norm{x - \epsilon v - z}_p$, we conclude that $X(z) = 0$. In other words, we just proved that $\R^d \setminus \limitH^p_{x, v} \subseteq \{ z \in \R^d \mid X(z) = 0\}$ and thus $\mu(\limitH^p_{x, v}) \geq \mathbb{E} X$. Plugging everything together, we obtain $\mu(\limitH^p_{x, v}) \geq t$ and hence $v \in S^{d - 1} \setminus V$, as desired.
\end{proof}

\section{Proof of \Cref{theorem:discrete_p-centerpoint}}\label{sec:appendixProofCenterpoint}

\theoremdiscretepcenterpoint*
\begin{proof}
    Without loss of generality, assume that $P \subseteq (0, 1)^d$. Let $r > 0$ be arbitrary and consider the mass distribution $\mu_r$ on $\R^d$ obtained by replacing each point $z$ in $P$ with an $\ell_p$-norm ball $B^p(z, r)$ of radius $r$ around $z$ (without loss of generality, assume $r$ is small enough such that the balls do not intersect and such that the support of $\mu_r$ is contained in $[0, 1]^d$). \Cref{theorem:p-centerpoint} yields a centerpoint $c^{(r)}$ of $\mu_r$. Moreover, by \Cref{obs:centerpoint_in_bounding_box}, these centerpoints are contained in $[0, 1]^d$ as well. Now consider the sequence of centerpoints $c^{(r)}$ as $r$ goes to $0$. By the theorem of Bolzano-Weierstrass, this sequence has a convergent subsequence with limit $c \in [0, 1]^d$. From now on, we will use $c^{(r)}$ to refer to elements of this convergent subsequence.

    We claim that $c$ is an $\ell_p$-centerpoint of $P$. To prove this, fix $v \in S^{d - 1}$ and let $z \in P \setminus \limitH^p_{c, v}$ be arbitrary. This means that there exists $\epsilon > 0$ such that $\norm{c - z}_p > \norm{c - \epsilon v - z}_p$ and thus even $\norm{c - z}_p > \norm{c - \epsilon v - z}_p + \delta$ for some $\delta > 0$. By choosing $r$ small enough, we can ensure both $r \leq \delta / 4$ and $\norm{c^{(r)} - c}_p \leq \delta / 4$. Using the triangle inequality, we then get 
    \begin{align*}
        \norm{c^{(r)} - z'}_p 
        &\geq \norm{c - z}_p - \norm{z' - z}_p - \norm{c - c^{(r)}}_p \\
        &\geq \norm{c - z}_p - \delta / 2 \\ 
        &> \norm{c - \epsilon v - z}_p + \delta / 2 \\ 
        &\geq \norm{c^{(r)} - \epsilon v - z'}_p - \norm{z' - z}_p - \norm{c - c^{(r)}}_p + \delta / 2 \\
        &\geq \norm{c^{(r)} - \epsilon v - z'}_p
    \end{align*}
    for all $z' \in B^p(z, r)$. We conclude that $ B^p(z, r) \cap \limitH^p_{c^{(r)}, v}$ is empty. In fact, this derivation holds true for all sufficiently small $r$. This means that all balls around points $z \in P \setminus \limitH^p_{c, v}$ will eventually (for small enough $r$) stop contributing to $\mu_r(\limitH^p_{c^{(r)}, v})$. By $\mu_r(\limitH^p_{c^{(r)}, v}) \geq \frac{1}{d + 1} \mu_r(\R^d)$, we conclude that there can be at most $\frac{d}{d + 1}|P|$ such points $z$ and thus we get $|P \cap \limitH^p_{c, v}| \geq \frac{|P|}{d + 1}$, as desired.
\end{proof}

\end{document}